\setlist{itemsep=1pt,topsep=1pt,parsep=0pt}
\theoremstyle{plain}
\newtheorem{theorem}{Theorem}[section]
\newtheorem{lemma}[theorem]{Lemma}
\newtheorem*{example*}{Example}
\Crefname{equation}{Equation}{Equations}
\Crefname{figure}{Figure}{Figures}
\theoremstyle{definition}
\newtheorem{definition}[theorem]{Definition}
\newtheorem{example}[theorem]{Example}
\newcommand{\eqdef}{\triangleq}
\newcommand{\NN}{\mathbb{N}}
\newcommand{\ZZ}{\mathbb{Z}}
\newcommand{\RR}{\mathbb{R}}
\newcommand{\RRp}{\RR_+}
\newcommand{\RRm}{\RR^m}
\newcommand{\RRmp}{\RRp^m}
\newcommand{\RRmpne}{\RRmp\setminus\{\bar{0}\}}
\newcommand{\datapoints}{\bigl(\RRmpne\big)\times\RRmp}
\newcommand{\dataset}{D}
\newcommand{\inattentiondataset}{(D,\{P_A\}_{A\in D})}
\newcommand{\inattentiondatasetprime}{(D',\{P_A\}_{A\in D'})}
\DeclareMathOperator{\supp}{supp}
\newcommand{\restr}{|}
\newcommand{\prefs}[1]{\succ_{#1}}
\newcommand{\rprefs}[2]{{\prefs{#1}}\restr_{#2}}
\newcommand{\gt}[2]{{}_{#1}\mathtt{gt}_{#2}}
\newcommand{\attention}[2]{\mathtt{attn}_{(#1,#2)}}
\newcommand{\matched}[2]{\signed{(#1,#2)}}
\newcommand{\tmatched}[3]{\signed{(#1,#2,#3)}}
\newcommand{\plays}[2]{\mathtt{plays}_{(#1,#2)}}
\newcommand{\price}[3]{\mathtt{price}^{#1}_{(#2,#3)}}
\newcommand{\consumes}[2]{\mathtt{consumes}_{(#1,#2)}}
\newcommand{\utility}[3]{\mathtt{utility}^{#1}_{#2,#3}}
\newcommand{\cost}[3]{\mathtt{cost}^{#1}_{#2,#3}}
\newcommand{\prob}[3]{\mathtt{prob}^{#1}_{#2,#3}}
\newcommand{\quota}[2]{\mathtt{capacity}_{(#1,#2)}}
\newcommand{\signed}[1]{\mathtt{matched}_{#1}}
\newcommand{\simplemarket}{\ensuremath{(M,W,\prefs{M},\prefs{W})}}
\newcommand{\couplesmarket}{\ensuremath{(D,H,\prefs{D},\prefs{H},k)}}
\newcommand{\dynamicmarket}{\ensuremath{(M,W,\prefs{M},\prefs{W},(a_m)_{m\in M},(d_m)_{m\in M})}}
\newcommand{\formulae}[1]{\Phi_{#1}}
\newcommand{\epsfloor}[1]{\lfloor#1\rfloor_{\varepsilon_n}}
\newcommand{\epsppfloor}[1]{\lfloor#1\rfloor_{\varepsilon_{n+1}}}
\newcommand{\cnfloor}[1]{\lfloor#1\rfloor_{C_n}}
\newcommand{\cnnfloor}[1]{\lfloor#1\rfloor_{C_{n+1}}}
\newcommand{\proofstep}[1]{\medskip\noindent\underline{#1:}}
\DeclareMathOperator{\argmax}{argmax}
\let\oldFootnote\footnote
\newcommand{\nextToken}{\relax}
\renewcommand{\footnote}[1]{\oldFootnote{#1}\futurelet\nextToken\isFootnote}
\newcommand{\isFootnote}{\ifx\footnote\nextToken\textsuperscript{,}\fi}
\renewcommand{\And}{\text{ and }}
\let\citeN\citet
\let\cite\citep
\title{To Infinity and Beyond:\texorpdfstring{\\}{ }\Large{A General Framework for Scaling Economic Theories}\thanks{This paper subsumes an earlier paper entitled ``To Infinity and Beyond: Scaling Economic Theories via Logical Compactness,'' a one-page abstract of which appeared in the \textit{Proceedings of the 21st ACM Conference on Economics and Computation}.
We thank
David Ahn, Bob Anderson, Morgane Austern, Archishman Chakrabortyz, Chris Chambers, Yunseo Choi, Henry Cohn, Piotr Dworczak, Andrew Ellis, Tam\'as Fleiner, Drew Fudenberg, Wayne Gao, Jerry Green, Joseph Halpern, Ron Holzman, Ravi Jagadeesan, M.\ Ali Khan, David Laibson, Rida Laraki, Bar Light, Elliot Lipnowski, Ce Liu, George Mailath, Michael Mandler, Paul Milgrom, Ankur Moitra, Yoram Moses, Juan Pereyra, Marek Pycia, Debraj Ray, John Rehbeck, Phil Reny, Joseph Root, Ariel Rubinstein, Dov Samet, Chris Shannon, Tomasz Strzalecki, Sergiy Verstyuk, Rakesh Vohra, Shing-Tung Yau, Bill Zame, and numerous seminar audiences for helpful comments.  Gonczarowski was supported in part by the Adams Fellowship Program of the Israel Academy of Sciences and Humanities; his work was supported in part by ISF grants 1435/14, 317/17, and 1841/14 administered by the Israeli Academy of Sciences; by the United States--Israel Binational Science Foundation (BSF grant 2014389); and by the European Research Council (ERC) under the European Union's Horizon 2020 research and innovation programme (grant No.\ 740282), and under the European Union's Seventh Framework Programme (FP7/2007-2013) / ERC grant number~337122.
Kominers gratefully acknowledges the support of the National Science Foundation (grant SES-1459912), as well as the Ng Fund and the Mathematics in Economics Research Fund of the Harvard Center of Mathematical Sciences and Applications. Shorrer was supported by a grant from the United States--Israel Binational Science Foundation (BSF grant 2016015).}}
\author{Yannai A. Gonczarowski\thanks{Department of Economics and Department of Computer Science, Harvard University | \emph{E-mail}: \href{mailto:yannai@gonch.name}{yannai@gonch.name}. Parts of the work of Gonczarowski were carried out while at the Hebrew University of Jerusalem, at Tel Aviv University, and at Microsoft Research.}
\and
Scott Duke Kominers\thanks{Entrepreneurial Management Unit, Harvard Business School; Department of Economics and CMSA, Harvard University; and a16z crypto | \emph{E-mail}: \href{mailto:kominers@fas.harvard.edu}{kominers@fas.harvard.edu}.}
\and
Ran I. Shorrer\thanks{Department of Economics, Penn State University | \emph{E-mail}: \href{mailto:shorrer@psu.edu}{shorrer@psu.edu}.}
}
\date{April 9, 2023}
\begin{document}
\maketitle

\vspace{-1.5em}

\begin{abstract}
Many economic theory models incorporate finiteness assumptions that, while introduced for simplicity, play a real role in the analysis. We provide a principled framework for scaling results from such models by removing these finiteness assumptions. Our sufficient conditions are on the theorem statement only, and not on its proof. This results in short proofs, and even allows to use the same argument to scale similar theorems that were proven using distinctly different tools. We demonstrate the versatility of our approach via examples from both revealed-preference theory and matching theory.
\end{abstract}

\clearpage

\thispagestyle{empty}
\begin{spacing}{1.06}
\setcounter{tocdepth}{1}
\tableofcontents
\end{spacing}
\thispagestyle{empty}
\clearpage

\setcounter{page}{2}

\begin{bibunit}

{\raggedleft\small
\emph{``More is good\ldots All is better.''}

\footnotesize ---Ferengi Rule of Acquisition \#242, Star Trek

\vspace{-1em}
}

\section{Introduction}

In economic theory,
we frequently make finiteness assumptions for simplicity and/\linebreak or tractability---and those assumptions can play a real role in the analysis. Of course, the real world is itself finite, so there is in some sense no ``loss'' from assuming finiteness in our models. But finiteness assumptions nevertheless sometimes lead to  conceptual problems---if our understanding of economic theory hinges on finiteness, then our models may not quite tell the whole story. For example:
\begin{itemize}

\item In decision theory, revealed preference analysis seeks to understand what we can infer about agents from their choice behavior. While a list of observed choices is always finite, if we make parametric assumptions such as homotheticity, then each data point becomes infinitely many data points. Even without such assumptions, we would like to reason about possible demand functions---defined everywhere---that are consistent with the data, and this requires conjecturing about behavior over an infinite dataset. Furthermore, theorizing about observing infinite datasets lets us separate the limitations of inference about agents' preferences that are just imposed by data finiteness from those that are inherent even with access to every possible observation.

\item If a game-theoretic finding is true only when the set of agents is finite, then there is an implicit discontinuity, possibly relying on an edge effect or a specific starting condition that may not be robust to small frictions or perturbations (e.g., the existence of a highest-surplus match).\footnote{For an example of a different kind of discontinuity---between a finite and a continuum setting---see the work of \citet{MirallesPycia2015}, showing that a continuum model may rule out important phenomena that are observed in the finite models that converge to it.} Thus finite-market results that also hold in infinite markets are in some sense more robust.

\item Dynamic games serve to model long-run behavior and steady-states---\linebreak representing interactions that will be repeated over and over with no fixed start or end time. We thus often think of models with infinite time horizons as being better approximations of long-run play. In particular, results that are true in models with an infinite past horizon and an infinite future horizon avoid relying on assumption that agents have coordinated on or structured their behavior around ``starting'' or ``ending'' at a fixed time.\footnote{\citet*{cfw} refer to time in such a model as ``doubly infinite,'' and note such a model is conceptually useful in excluding strategies that condition on calendar time.}

\end{itemize}

\medskip

In this paper, we present a general framework for strengthening results that assume finiteness by scaling them to infinite settings.\footnote{As a side note: economic theory sometimes also turns to infinite models when their finite analogues are hard to analyze---for example, to smooth out integer effects. That is not our focus here.}  
Our approach, by relying on results in Propositional Logic, implicitly leverages topological properties of the space of \emph{theorem statements} rather than any features or techniques from their proofs. As such, it allows us to prove results along the lines of ``if a certain statement holds when assuming finiteness (regardless of how one would prove it), then---due merely to the structure of this statement---it must hold even if the finiteness assumption is dropped.''

Our methods apply flexibly across all of the settings just described, allowing us to relax finiteness assumptions on dataset size, market size, and time horizons. To illustrate the versatility of our approach and its main features, in this paper we demonstrate applications in two quite disparate fields within economic theory: decision theory (where the infinity that we tackle is the infinity of data), and matching theory (where the infinity that we tackle is that of the market size). In each of these fields, we showcase our approach using a similar arc:
\begin{itemize}
\item
\textbf{Warm-Up.} We start with applying our approach to scale a fundamental result for which the finite case is considerably simpler than the infinite case. 
For decision theory, this is that any dataset satisfying the strong axiom of revealed preferences (SARP) is rationalizable, for which
the infinite case is usually proven by appealing to Zorn's lemma.
For matching, we scale the existence of a stable matching in finite two-sided markets \citep{GaleShapley1962} to infinite markets, a result previously proven by \citet{fleiner2003fixed} using Tarski's fixed point theorem.
\item
\textbf{Reusing the warm-up for a novel application.} We then proceed with using a proof very similar to the simple proof in the warm-up to relax a finiteness assumption in a prominent result in the field. For decision theory, this is a novel infinite-data version of \citeauthor{masatlioglu2012revealed}'s (\citeyear{masatlioglu2012revealed}) characterization of limited-attention rationalizability. For matching, this is a novel infinite-market version of \citeauthor{nguyen2018couples}'s (\citeyear{nguyen2018couples}) existence result for near-feasible stable matchings with couples.
For both decision theory and matching, the \emph{proofs} of the finite cases of the results that we scale in this part and in the warm-up are starkly different. Nonetheless, their \emph{statements} are similar, and this enables us to scale them using essentially the same proof. We furthermore show that since essentially the same proof can scale similar statements, our approach can also conditionally scale not-yet-proven results.
\item
\textbf{Overcoming inherent limitations of finite logical statements.} At first glance, our approach might seem limited to proving results that are discrete in nature (see discussion below). Nonetheless, we proceed to show how to use it to prove results regarding objects that are nondiscrete (coming from a continuum space). For decision theory, here we reprove \citeauthor{reny2015}'s (\citeyear{reny2015}) infinite-data version of \citeauthor{afriat67}'s (\citeyear{afriat67}) theorem, where utilities come from a continuum space, as well as \citeauthor*{caplin2017rationally}'s (\citeyear{caplin2017rationally}) infinite-data version of \citeauthor{caplin2015revealed}'s (\citeyear{caplin2015revealed}) characterization of having a costly information acquisition representation, where priors come from a continuum space; again, our scaling proofs for both theorems are nearly identical. For matching, here we (newly) prove an infinite-market version of the \cite{hatfield2013stability} result on the existence of Walrasian equilibria (and hence stable outcomes) in trading networks.
\item
\textbf{A deep dive: combining with domain-specific knowledge.} Finally, we conclude with an example geared toward experts in the field, showing how to further leverage our framework by combining it with domain-specific techniques. For decision theory, we
(newly) prove an infinite-data version of \citeauthor{MR1971}'s (\citeyear{MR1971}) characterization of rationalizable stochastic datasets, strengthening previous infinite variants of their result.
For matching, we
re-prove the existence of a man-optimal stable matching and (newly) prove the strategy-proofness of the man-optimal stable mechanism in infinite markets, resolving an open problem posed by \citet{jagadeesan2017lone2}.
\end{itemize}
Additional results can be found in the appendix. These include results for other models (such as existence of Nash equilibria in infinite markets) as well as for an additional notion of infinity (that of infinite time\footnote{Specifically, we use our approach to dispense with the assumption of a ``period $0$'' to obtain existence results in a doubly infinite model, which is perhaps more appropriate as a representation of a steady-state \cite*[see, e.g.,][]{ory,cfw}.}).

As mentioned above, some of the results that we prove in this paper
are novel to this work. Other results that we (re-)prove have already been obtained using other, very different methods, which allows us to compare and contrast the previous proof techniques with ours. As our illustrative applications demonstrate, proofs that use our framework have several notable features. First, they use \textbf{one tool} rather than having to choose from various setting-specific tools. Second, the proof structure is  \textbf{modular}: our conditions for scaling the finite-case result to the infinite case depend only on the statement of the finite-case result and are completely \textbf{agnostic} to the argument/methods used to prove that result. Furthermore, the proofs are \textbf{robust} in that even their dependence on the details of the model is quite weak, and essentially the same proof can sometimes be used in quite different models.

\subsection{Technique}\label{intro-technique}
Our general approach requires that problems have what we call a \textit{well description}, and that this well description satisfies what we call the \textit{finite-subset property}. We define these concepts precisely in \cref{sec:core-lemma}. Here, we provide an informal description and two illustrative applications (which we later formalize): showing that infinite datasets that satisfy SARP are rationalizable and showing the existence of a stable matching in infinite markets.

A \textit{well description} specifies for each problem a (potentially infinite) set of individually finite logical statements over Boolean variables, such that the problem has a solution if and only if there is an assignment of truth values to these Boolean variables under which all these statements hold simultaneously.\footnote{The reader may think of an assignment of truth values to the Boolean variables as a ``state of the world.'' In the language of Mathematical Logic, such an assignment under which all statements hold is called a \emph{model for the statements} (perhaps confusingly within an economics setting).}
For example, in a revealed-preferences setting we can encode a rationalizing preference order using a set of Boolean variables $\{\gt{a}{b}\}$, each being $\mathsf{True}$ if $a\succ b$ for the corresponding $a,b$. We can then express all the required properties of a rationalizing order (transitivity, antisymmetry, and consistency with whichever outcomes are revealed preferred to others) using logical statements phrased in terms of these variables (infinitely many such statements, but each statement individually finite). We further elaborate on this example in \cref{revealed-prefs-warmup}. Similarly, in a matching setting we can encode a (one-to-one) stable matching in terms of a set of Boolean variables $\{\matched{m}{w}\}$, each being $\mathsf{True}$ if the corresponding man $m$ and woman $w$ are matched. We can then express all the required properties of a stable matching (one-to-oneness, individual rationality, and no blocking pairs) using logical statements phrased in terms of these variables. We further elaborate on this example in \cref{matching-warmup}. In either setting, an assignment of truth values to the variables under which all statements hold simultaneously corresponds exactly to a solution (a rationalizing order or a stable matching), and vice versa. In particular, for each problem, our set of logical statements has such an assignment if and only if the problem has a solution, and therefore this is a well description.

Each of the preceding well descriptions captures finite and infinite problems equally: the only difference that arises is in the cardinalities of the sets of Boolean variables and logical statements.
When the problem is infinitary (infinite dataset or infinitely many agents), the associated set of logical statements is infinite as well. Yet,
 each of the logical statements we construct is nonetheless individually finite, that is, it contains only finitely many of the Boolean variables.

 Fix a well description, and call the set of logical statements associated with each problem ``the description of the problem.''
 The well description satisfies the \textit{finite-subset property} if every finite subset of the description of any problem belongs to the description of a problem that has a solution. In either our revealed-preferences or matching example, we identify such a problem that has a solution by restricting the original problem to the elements (data points or agents) ``mentioned'' in the given finite subset. The given finite subset
 is indeed part of the well description of the restricted problem, and since this problem is finite, it can be solved by known existence results for finite problems, so long as we verify that it ``inherits'' from the infinite problem any properties required by these results (namely, for the revealed-preferences example, satisfying SARP). 
\cref{scaling-lemma} therefore guarantees  the existence of an appropriate solution of the infinite problem.

We prove \cref{scaling-lemma} using \textit{Logical Compactness} (see \cref{propositional}), a central result in the theory of Propositional Logic. While the above examples demonstrate the applicability of our approach to existence results of inherently discrete objects, we also show how to use this approach to scale economic results that go beyond discrete solutions and even beyond existence results into infinite settings.

\subsection{Paper Outline: Choose Your Own \emph{Scaling} Adventure}

The remainder of this paper is organized as follows. \Cref{preliminaries} introduces preliminaries from Propositional Logic and states and proves our main technical lemma (\cref{scaling-lemma}). \Cref{revealed-prefs,matching} then apply our framework to revealed preferences and matching 
settings, respectively, showcasing its wide applicability. \Cref{related-literature} reviews related literature and \Cref{discussion} concludes. Omitted proofs and further applications and examples (including for scaling results to settings with infinite time horizons) are presented in the appendix.

\paragraph{A note regarding \Cref{revealed-prefs,matching}.} Each of these \lcnamecrefs{revealed-prefs} provides a complete introduction to our framework by following the above-described ``arc'' in the \lcnamecref{revealed-prefs}'s respective setting: starting with a warm-up to reprove a fundamental result (\Cref{revealed-prefs-warmup} / \Cref{matching-warmup}); then reusing the simple proof from the warm-up, with small changes, to newly prove an infinite version of a prominent result (\Cref{filter} / \Cref{couples}) and to conditionally scale not-yet-proven results; then showing how to use our framework to prove  results that are nondiscrete in nature (\Cref{garp} / \Cref{walrasian}); and finally concluding with an example geared at experts in the field, showing how to combine our framework with domain-specific knowledge to overcome challenges in newly scaling an additional result (\Cref{stoch}, where the main challenge is to prove that the set of formulae that we use really is a well description / \Cref{sp}, where the main challenge is to prove that the finite-subset property is satisfied.).
Accordingly, in the spirit of ``Choose Your Own Adventure'' books,\footnote{See \url{https://en.wikipedia.org/wiki/Choose_Your_Own_Adventure}.} readers may choose to read either or both of these two \lcnamecrefs{revealed-prefs} depending on whether they wish to familiarize themselves with our framework through the lens of decision theory or through the lens of matching. Readers interested in a deeper view of the similarities and contrasts when applying our framework to two disparate fields are encouraged to read both \lcnamecrefs{revealed-prefs}, and may do so in any order.

\section{Framework}\label{preliminaries}

In this \lcnamecref{preliminaries},  
 we provide a brief introduction to Propositional Logic (in \cref{propositional}),\footnote{For a more in-depth look at Propositional Logic primitives and at the Compactness Theorem, see a textbook on Mathematical Logic
\citep{enderton,gonczarowskinisan}.}  and use it to state and prove our main technical lemma (in \cref{sec:core-lemma}).
\subsection{Propositional Logic Preliminaries}\label{propositional}

In Propositional Logic, we work with a set of Boolean variables, and study the truth values of statements---called formulae---made up of those variables. We construct formulae by conjoining variables with simple logical operators such as \emph{or}, \emph{not}, and \emph{implies}. Variables are abstract, and do not have meaning on their own---but we can imbue them with ``semantic" meaning by introducing formulae that reflect the structure of economic (or other) problems. Once given semantic meaning, the truth or falsity of statements in our Propositional Logic model imply the corresponding results in the associated economic model.

We start by formalizing the idea of \emph{(well-formed propositional) formulae}. To define the set of formulae at our disposal, we first define a basic (finite or infinite) set of (Boolean) \emph{variables}. In each \lcnamecref{matching-warmup} of this paper we define a different set of variables built around the economic setting that we model in that \lcnamecref{matching-warmup}.

Once we have defined a (finite or infinite) set $V$ of variables, we can define the set of all well-formed formulae inductively:
\begin{itemize}
\item
`$\phi$' is a well-formed formula for every variable $\phi \in V$.
\item
`$\lnot\phi$' is a well-formed formula for every well-formed formula $\phi$.
\item
`$(\phi\vee\psi)$', `$(\phi\wedge\psi)$', `$(\phi\rightarrow\psi)$', and `$(\phi\leftrightarrow\psi)$' are well-formed formulae for every two well-formed formulae $\phi$ and $\psi$.
\end{itemize}

\renewcommand{\aa}{\mathtt{P}}
\newcommand{\bb}{\mathtt{Q}}
\newcommand{\cc}{\mathtt{R}}
\newcommand{\dd}{\mathtt{S}}
\newcommand{\aap}{\aa}
\newcommand{\bbp}{\bb}
\newcommand{\ccp}{\cc}
\newcommand{\ddp}{\dd}

\begin{example*}
We could start with a set of four variables $V=\{\aap,\bbp,\ccp,\ddp\}$.  Then, each of the following is a well-formed formula:

\noindent\begin{tabularx}{\textwidth}{@{}l@{}X@{}l@{}X@{}l@{}X@{}l@{}}
$`\aap\text{\emph{'}}\inlinelabel{ab-set}$
& &
$`(\aap \vee \bbp)\text{\emph{'}}\inlinelabel{ab-either}$
& &
$`\lnot(\aap\wedge\bbp)\text{\emph{'}}\inlinelabel{ab-notboth}$
& &
$`((\aap\wedge\ccp)\rightarrow\ddp)\text{\emph{'}}\inlinelabel{abc-trans}$
\end{tabularx}
\end{example*}

We sometimes abuse notation by omitting parentheses and writing, e.g., `$\phi\vee\psi\vee\xi$' when any arbitrary placement of parentheses in the formula (e.g., `$((\phi\vee\psi)\vee\xi)$' or `$(\phi\vee(\psi\vee\xi))$') will make do for our analysis. We sometimes abuse notation even further by writing, e.g., `$\bigvee_{i=1}^{10}\phi_i$' to mean `$\phi_1\vee\phi_2\vee\cdots\vee\phi_{10}$' (once again, only when the precise placement of omitted parentheses is of no consequence to our analysis).

\newcommand\ctr{n}
We note that while well-formed formulae can be arbitrarily long, each well-formed formula is always finite in length. Thus, for example, a disjunction `$\phi_1\vee\phi_2\vee\cdots$' of infinitely many formulae is \emph{not} a well-formed formula. We therefore take special care when we claim that formulae of the form `$\bigvee_{\phi \in \Psi} \phi$' are well-formed, as this is true only if $\Psi$ is finite.

A \emph{model} is a mapping from the set $V$ of all variables to Boolean values, i.e., each variable is mapped either to being $\mathsf{True}$ or to being $\mathsf{False}$. This induces a truth value for every formula `$\phi$' where $\phi\in V$.
A model also induces a \emph{truth value} for every other formula, defined inductively as follows: 
\begin{itemize}
	\item `$\lnot\phi$' is $\mathsf{True}$ iff $\phi$ is $\mathsf{False}$;
	\item `$(\phi\vee\psi)$' is $\mathsf{True}$ iff either or both of $\phi$ and $\psi$ is $\mathsf{True}$;
	\item `$(\phi\wedge\psi)$' is $\mathsf{True}$ iff both $\phi$ and $\psi$ are $\mathsf{True}$;
	\item `$(\phi\rightarrow\psi)$' is $\mathsf{True}$ iff either $\phi$ is $\mathsf{False}$ or $\psi$ is $\mathsf{True}$ or both (that is, `$(\phi\rightarrow\psi)$' is $\mathsf{False}$ iff both $\phi$ is $\mathsf{True}$ and $\psi$ is $\mathsf{False}$); and
	\item`$(\phi\leftrightarrow\psi)$' is $\mathsf{True}$ iff $\phi$ and $\psi$ are either both $\mathsf{True}$ or both $\mathsf{False}$.
\end{itemize}
\begin{example*}
\setcounter{equation}{0}
Given the concept of truth values, we can reinterpret the formulae \eqref{ab-set}--\eqref{abc-trans} as follows:
\begin{align}
`\aap\text{\emph{'}}&&&\text{``\,$\aap$ \emph{[}is $\mathsf{True}$\emph{]}\!''},\\
`(\aap \vee \bbp){\emph{'}}&&&\text{``\,$\aap$ or $\bbp$ \emph{[}is $\mathsf{True}$\emph{]}\!''},\\
`\lnot(\aap\wedge\bbp){\emph{'}}&&&\text{``not \emph{(}$\aap$ and $\bbp$ \emph{[}are both $\mathsf{True}$\emph{])}\!''},\\
`((\aap\wedge\ccp)\rightarrow\ddp){\emph{'}}&&&\text{``\,$\aap$ and $\ccp$ \emph{[}both being $\mathsf{True}$\emph{]}, implies $\ddp$ \emph{[}being $\mathsf{True}$\emph{]}\!''}.
\end{align}
The formula in \eqref{ab-either} is $\mathsf{True}$ in a model if and only if either \emph{`}$\aap$\emph{'} or \emph{`}$\bbp$\emph{'} (or both) are $\mathsf{True}$ in that model; the formula in \eqref{ab-notboth} is $\mathsf{True}$ in a model unless both \emph{`}$\aap$\emph{'} and \emph{`}$\bbp$\emph{'}  are $\mathsf{True}$ in that model; and the formula in \eqref{abc-trans} is $\mathsf{True}$ in a model unless both \emph{`}$\aap$\emph{'} and \emph{`}$\ccp$\emph{'} are $\mathsf{True}$ in that model while \emph{`}$\ddp$\emph{'} is $\mathsf{False}$ in that model. 
\end{example*}

We say that a formula is \emph{satisfied} by a model if it is $\mathsf{True}$ under that model. For example, each of the formulae \eqref{ab-set}, \eqref{ab-either}, and \eqref{abc-trans} is satisfied by the model that assigns value $\mathsf{True}$ to all variables, however the formula \eqref{ab-notboth} is not satisfied by it. We say that a (possibly infinite) set of formulae is \emph{satisfied} by a model if every formula in the set is satisfied by the model.  For example, the set of the formulae \eqref{ab-set}--\eqref{abc-trans} is satisfied by the model that assigns value $\mathsf{True}$ to all variables except $\bbp$.
We say that a (possibly infinite) set of formulae is \emph{satisfiable}, or that it \emph{has a model}, if it is satisfied by some model. For example, the set containing $`\aap\text{'}$ and $`\lnot\aap\text{'}$ is not satisfiable.

Clearly, if a (finite or infinite) set of formulae $\Phi$ is satisfiable, then every subset of $\Phi$ is also satisfiable (by the same model), and in particular every \emph{finite} subset of $\Phi$ is satisfiable; the \emph{Compactness Theorem for Propositional Logic} gives a surprising and nontrivial converse to this statement.

\begin{theorem}[The Compactness Theorem for Propositional Logic \citep{godel1930vollstandigkeit,malcev1936untersuchungen}]\label{thm:compact}
A set of formulae $\Phi$ is satisfiable if (and only if) every \textbf{finite} subset $\Phi'\subseteq\Phi$ is satisfiable.
\end{theorem}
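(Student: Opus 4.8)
The plan is to prove the nontrivial ``if'' direction: assuming that every finite subset of $\Phi$ is satisfiable (call this property \emph{finite satisfiability}), I would construct a single model satisfying all of $\Phi$. The strategy is to first enlarge $\Phi$ to a \emph{maximal} finitely-satisfiable set of formulae, and then read a model directly off of that maximal set. First I would apply Zorn's Lemma to the collection of all finitely-satisfiable sets of formulae containing $\Phi$, ordered by inclusion. The key observation making Zorn's Lemma applicable is that the union of a chain of finitely-satisfiable sets is again finitely satisfiable: any finite subset of the union involves only finitely many formulae, which therefore all lie in a single member of the chain, and that member is finitely satisfiable. Zorn's Lemma then yields a maximal finitely-satisfiable set $\Psi \supseteq \Phi$.

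Next I would establish that $\Psi$ is \emph{complete}, in the sense that for every well-formed formula $\phi$, exactly one of $\phi$ and `$\lnot\phi$' belongs to $\Psi$. Here maximality does the work. If neither $\Psi \cup \{\phi\}$ nor $\Psi \cup \{\lnot\phi\}$ were finitely satisfiable, one could take the two witnessing unsatisfiable finite subsets and combine them into a single finite subset of $\Psi \cup \{\phi, \lnot\phi\}$; since no model satisfies both $\phi$ and `$\lnot\phi$', this would force an unsatisfiable finite subset of $\Psi$ itself, contradicting finite satisfiability. Hence at least one of the two extensions is finitely satisfiable, and by maximality the corresponding formula already lies in $\Psi$. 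That at most one lies in $\Psi$ follows because $\{\phi, \lnot\phi\}$ is itself unsatisfiable.

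Finally I would define the model $M$ by declaring each atomic formula (variable) $v \in V$ to be TRUE exactly when $v \in \Psi$, and then prove by structural induction on formulae that, for every well-formed $\phi$, the formula $\phi$ is TRUE under $M$ if and only if $\phi \in \Psi$. The base case is precisely the definition of $M$; the inductive cases for $\lnot$, $\vee$, $\wedge$, $\rightarrow$, and $\leftrightarrow$ each reduce to checking that membership in the complete, finitely-satisfiable set $\Psi$ respects the corresponding truth-value clause (for instance, that `$(\phi\vee\psi)$' $\in \Psi$ if and only if $\phi \in \Psi$ or $\psi \in \Psi$). Since $\Phi \subseteq \Psi$, every formula of $\Phi$ is then TRUE under $M$, so $M$ satisfies $\Phi$.

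I expect the main obstacle to be the structural-induction step, because the closure properties of $\Psi$ under the connectives must be derived carefully from maximality and finite satisfiability rather than assumed; completeness of $\Psi$ is exactly what makes these closure properties go through. As an alternative route that shortcuts the induction entirely, one can give a topological proof: identify models with points of the product space $\{\text{FALSE},\text{TRUE}\}^V$, observe that the set of models satisfying any single formula $\phi$ is a closed (indeed clopen) cylinder set since $\phi$ depends on only finitely many variables, note that finite satisfiability is precisely the finite intersection property for this family of sets, and conclude via Tychonoff's theorem that their total intersection is nonempty.
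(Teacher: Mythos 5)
Your proposed proof is correct, but there is nothing in the paper to compare it against: the paper states the Compactness Theorem as a known background result from mathematical logic and explicitly defers its proof to textbooks (it is used throughout purely as a black box). Your main argument is the standard Lindenbaum-style one---extend $\Phi$ by Zorn's Lemma to a maximal finitely satisfiable set $\Psi$, derive completeness of $\Psi$ from maximality, read off a model from the atomic formulae in $\Psi$, and verify by structural induction that truth under that model coincides with membership in $\Psi$. Each step is sound: the union of a chain of finitely satisfiable sets is finitely satisfiable because a finite subset of the union lies in a single link of the chain; your completeness argument correctly combines the two witnessing finite subsets into an unsatisfiable finite subset of $\Psi$; and the closure properties needed for the induction (e.g., `$(\phi\vee\psi)$'$\,\in\Psi$ iff $\phi\in\Psi$ or $\psi\in\Psi$) all follow by exhibiting small unsatisfiable finite sets, such as $\{(\phi\vee\psi),\lnot\phi,\lnot\psi\}$, and invoking completeness---so the obstacle you flag is real but routine. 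Your alternative route via Tychonoff's theorem applied to $\{\text{FALSE},\text{TRUE}\}^V$ is also correct and is in fact the one the paper itself gestures at in the introduction when it remarks that Logical Compactness follows from topological compactness of a product space; it is shorter but uses the (topological) compactness machinery that the paper's whole methodological point is to let readers avoid.
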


\subsection{A Scaling Lemma for Economic Theories}\label{sec:core-lemma}

In this section, we use Propositional Logic to derive a sufficient condition for scalability of an economic theorem to infinite cases. This condition, formalized in \cref{scaling-lemma}, is at the heart of all of our proofs.

Let $\mathcal{S}$ be a set to which we refer as a set of (potential) \emph{solutions}.
Let $\mathcal{P}$ be a set to which we refer as a set of \emph{(economic) problems} whose solutions (if such exist) are in $\mathcal{S}$.
For example, for the consumer choice rationalization example from \cref{intro-technique}, $\mathcal{S}$ is the set of all consumer preferences over some set of objects, and a problem $P\in\mathcal{P}$ is to rationalize a specific dataset~$\dataset$. For the stable matching example from \cref{intro-technique}, $\mathcal{S}$ is the set of all matchings between two given sets, and a problem is to find a stable matching in a specific two-sided market $M$.

Define $I:\mathcal{P}\times\mathcal{S}\rightarrow\{\mathsf{True},\mathsf{False}\}$ such that $I(P,S)$ is $\mathsf{True}$ if and only if $S$ is a solution for $P$. Note that given a problem $P$, even if we can easily describe for any given $S$ whether $S$ is a solution of $P$ (i.e., whether $I(P,S)$ is $\mathsf{True}$), it may not be clear just from examining $P$ (and $I$) whether or not it has \emph{any} solution (i.e., whether there exists $S\in\mathcal{S}$ such that $I(P,S)$ is $\mathsf{True}$). For example, while it is easy to describe when given preferences rationalize a given dataset, it is not immediate from examining a dataset whether there exist preferences that rationalize it. Similarly, while it is easy to describe when a given matching is stable in a given market, it is not immediate from examining a market whether it admits a stable matching.
Regardless of the economic setting, given a problem, our goal will be to ascertain whether a solution for it indeed exists.

We say that the set $\mathcal{P}$ of economic problems is a set of \emph{well-describable economic problems} if for every $P\in\mathcal{P}$ there exists a set $\formulae{P}$ of well-formed formulae such that $P$ has a solution if and only if $\formulae{P}$ has a model. We call a collection $(\formulae{P})_{P\in\mathcal{P}}$ of such sets a \emph{well description} of $\mathcal{P}$. (Constructing such a set was discussed in \cref{intro-technique}.\footnote{In both cases demosntrated in \cref{intro-technique}, an even stronger property holds: the set of solutions of $P$ is in \emph{one-to-one} correspondence with the set of models of $\formulae{P}$. While such a one-to-one correspondence holds in many of our applications, this is not required for our arguments.})
Given a well description of $\mathcal{P}$,
we say that a problem $P\in \mathcal{P}$ satisfies the \emph{finite-subset property} (with respect to the given well description of~$\mathcal{P}$) if for every finite subset $\Phi'\subset\formulae{P}$ there exists an economic problem $P'\in\mathcal{P}$ that has a solution and for which $\Phi'\subseteq \formulae{P'}$.
By \cref{thm:compact}, we then have:

\begin{lemma}[Scaling Lemma for Economic Theories]\label{scaling-lemma}
Let $\mathcal{P}$ be a set of well-describable economic problems and let $(\formulae{P})_{P\in\mathcal{P}}$ be a well description of $\mathcal{P}$.
Let $P\in\mathcal{P}$.
If $P$ satisfies the finite-subset property, then $P$ has a solution.
\end{lemma}

\begin{proof}
Let $P\in\mathcal{P}$ be a problem satisfying the finite-subset property.
By well describability, it is enough to show that $\formulae{P}$ is satisfiable. By \cref{thm:compact}, it is therefore enough to show that every finite $\Phi'\subset\formulae{P}$ is satisfiable. Let $\Phi'$ be such a finite subset. Since $P$ satisfies the finite-subset property, there exists $P'\in\mathcal{P}$ that has a solution such that $\Phi'\subseteq\formulae{P'}$. Since $P'$ has a solution, by our well-describability assumption we have that $\formulae{P'}$ is satisfiable by some model. Since $\Phi'\subseteq\formulae{P'}$, the same model also satisfies~$\Phi'$, and so $\Phi'$ is satisfiable as required.
\end{proof}

As demonstrated in \cref{intro-technique}, in many cases of interest the existence of a solution for an appropriate $P'$ for any $\Phi'$ can be established by finite-case theorems (for example, on rationalizability of finite datasets or stable matching in finite markets). Thus by \cref{scaling-lemma} we obtain existence of solutions for the infinite case of such problems as well. In this paper, we demonstrate the wide applicability of \cref{scaling-lemma} to a wide range of economic problems.

\addtocontents{toc}{\protect\setcounter{tocdepth}{2}}

\section{Revealed Preferences}\label{revealed-prefs}

In this section, we apply our framework to revealed-preferences analysis. First, in \cref{revealed-prefs-warmup} we provide a concise proof of a well-known result: that the strong axiom of revealed preferences (SARP) is necessary and sufficient for strict rationalization of infinite choice data. In \cref{filter} we
utilize the same argument from \cref{revealed-prefs-warmup}, with minimal changes, to  prove a novel existence result for a more complex setting in which the finite case has been analyzed using completely different tools: we scale the necessity and sufficiency of WARP-limited attention (WARP-LA) for limited-attention rationalizability  \citep{masatlioglu2012revealed} to encompass infinite datasets, and conditionally scale a class of related not-yet-proven results. In \cref{garp} we show how to handle non-discrete objects: we scale \citeauthor{afriat67}'s theorem on the existence of rationalizing utility function to infintie datasets; in \cref{inattention}, we use essentially the same proof to scale \citeauthor{caplin2015revealed}'s (\citeyear{caplin2015revealed}) theorem on the existence of rationalizing attention cost to infinite datasets.
 Finally, in \cref{stoch} we scale \citeauthor{MR1971}'s (\citeyear{MR1971}) characterization of rationalizable stochastic datasets to infinite datases. In this proof, we use specialized tools to prove that the set of formulae that we use really is a well description. 

\subsection{Warm Up: Rational Choice Functions}\label{revealed-prefs-warmup}
We begin with a classic  revealed preferences setup. 
Let $X$ be a (possibly infinite) set of goods. The set of \textit{menus} includes all finite subsets of $X$. A \textit{dataset} $\dataset\subseteq\bigl\{(S,a)\in 2^X\times X ~\big|~ a\in S\bigr\}$ consists of the (unique) respective choices made by an agent in a (possibly infinite) subset of menus. We say that a dataset $\dataset$ is \textit{rationalized} by a strict preference relation  $\succ$ (complete, antisymmetric, and transitive) over $X$, if for every $(S,a)\in\dataset$, the agent's choice $a$ is the maximal element from $S$ according to $\succ$. A dataset is \textit{rationalizable} if it is rationalized by some strict preference relation over $X$.

Given a dataset and a pair of goods, $x,y\in X$, we say that $x$ is \textit{revealed preferred} to $y$ ($x\succ^Ry$) if $x$ is chosen from a menu that includes $y$. A dataset satisfies the \textit{strong axiom of revealed preferences (SARP)} if $\succ^R$ is acyclic (i.e., there does not exist $k>1$ and $x_1,\dots,x_k\in X$ such that $x_i\succ^R x_{i+1}$ for all $i$, and in addition $x_k\succ^R x_1$).  
Clearly, satisfying SARP is a necessary condition for a dataset to be rationalizable.  A classic result due to \citet{richter1966} and \citet{hansson1968choice} is that satisfying SARP is also sufficient. \cref{FiniteRichter} states this result for the special case of a finite dataset.

\begin{theorem}\label{FiniteRichter}
A finite dataset is rationalizable if and only if it satisfies SARP.
\end{theorem}

The general version of \cref{FiniteRichter}, due to \citet{richter1966} and \citet{hansson1968choice}, builds on \citeauthor{szpilrajn1930extension}' Extension Theorem, a fundamental result in revealed preferences analysis whose variants are used to prove many key results in the theory of revealed preferences \citep{ChambersEchenique2009}. \citet*[p.\ 17]{ok2007real} explains that ``[a]lthough it is possible to prove this [fundamental result of order theory] by mathematical induction when $X$ is finite,\footnote{I.e., the setting corresponding to \cref{FiniteRichter}; for an explicit proof see, e.g., \citet{lahiri}.} the proof in the general case is built on a relatively advanced method[\ldots].'' 
Indeed, the standard way\footnote{\citet{mandler2020quick} provides an alternative simple proof of \citeauthor{richter1966} and \citeauthor{hansson1968choice}'s result.} to prove an infinite-dataset version of \cref{FiniteRichter} (equivalently, to prove \citeauthor{szpilrajn1930extension}'s Extension Theorem) as well as to prove variant results, is to use Zorn's Lemma (e.g., \citealp{richter1966}; \citealp{DUGGAN}; \citealp*{mwg}, Proposition 3.J.1; \citealp{ce}, Theorems~1.4 and~1.5).

As a warm-up, we use \cref{scaling-lemma} to scale \cref{FiniteRichter} to also apply to infinite datasets (implicitly reproving \citeauthor{szpilrajn1930extension}'s Extension Theorem). 
While \cref{scaling-lemma} relies on Logical Compactness, which, like Zorn's Lemma, relies on some variant of the Axiom of Choice,\footnote{Strictly speaking, under ZF set theory, Logical Compactness is weaker than Choice.} we suspect that the proof we present here may complement the standard approach. In particular, our argument
 may in some ways be more accessible to students than the traditional proof because it avoids the ``overhead'' of understanding the full statement of Zorn's Lemma.\footnote{\citet*{mwg}, similarly to \citeauthor{ok2007real}, label their proof (which uses Zorn's Lemma) as ``advanced.''}

\begin{theorem}[\citealp{richter1966,hansson1968choice}]\label{szpilrajn}
A (possibly infinite) dataset is rationalizable if and only if it satisfies SARP.
\end{theorem}

\begin{proof}[Proof of \cref{szpilrajn}]
As noted, the ``only if'' direction is trivial, so we prove the ``if'' direction. We do so using \cref{scaling-lemma}.

\proofstep{Definition of $\mathcal{P}$}
Fixing $X$, let $\mathcal{P}$ be the set of all pairs $(X',\dataset)$ such that $X'\subseteq X$ and $\dataset$ is a dataset with menus over goods in $X'$ that satisfies SARP. A \emph{solution} for a pair $(X',\dataset)\in\mathcal{P}$ is a strict preference order over $X'$ that rationalizes $\dataset$.

\proofstep{Well describability}
We define a variable $\gt{a}{b}$ for every pair of distinct $a,b\in X$.
In what follows, for each $(X',\dataset)\in\mathcal{P}$ we define a set $\formulae{(X',\dataset)}$ of formulae over these variables so that models (over the variables that appear in $\formulae{(X',\dataset)}$) of $\formulae{(X',\dataset)}$
are in one-to-one correspondence with 
the 
(not-yet-proven-to-be-nonempty) set of solutions for $(X',\dataset)$. The correspondence is obtained by endowing the variable $\gt{a}{b}$ with the semantic interpretation ``$a$ is preferred to~$b$.'' That is, it maps a model for $\formulae{(X',\dataset)}$ to the preference~$\succ$ such that for every distinct $a,b\in X'$, we have that $a\succ b$ if and only if the variable $\gt{a}{b}$ is $\mathsf{True}$ in that model.
We define the set $\formulae{(X',\dataset)}$ to consist of the following formulae:
\begin{enumerate}
\item
for all distinct $(S,a)\in D$ and all $b\in S\setminus\{a\}$, the formula `$\gt{a}{b}$',
requiring that the preferences (that correspond to any model of the formulae) rationalize $\dataset$;
\item
for all distinct $a,b\in X'$,  the formula `$\gt{a}{b}\vee\gt{b}{a}$',
requiring that the preferences be complete;
\item
for all distinct $a,b\in X'$,  the formula `$\lnot(\gt{a}{b}\wedge\gt{b}{a})$',
requiring that the preferences be antisymmetric;
\item
for all distinct $a,b,c\in X'$, the formula `$(\gt{a}{b}\wedge\gt{b}{c})\rightarrow\gt{a}{c}$',
requiring that the preferences be transitive.
\end{enumerate}
By construction, $(\formulae{(X',\dataset)})_{(X',\dataset)\in\mathcal{P}}$ is a well description of~$\mathcal{P}$.
(Astute readers will notice that the preceding formulae correspond exactly with the formulae \eqref{ab-set}--\eqref{abc-trans} from \cref{propositional} upon taking $\aap=\gt{a}{b}$, $\bbp=\gt{b}{a}$, $\ccp=\gt{b}{c}$, and $\ddp=\gt{a}{c}$.)

\proofstep{Finite-subset property}
Let $(\bar{X},\dataset)\in\mathcal{P}$.
Let
$\Phi'\subset\formulae{(\bar{X},\dataset)}$ be a finite subset.
Since $\Phi'$ is finite, it ``mentions'' (through variables used) only finitely many elements of~$\bar{X}$; denote the set of these elements by $X'\subset\bar{X}$.
Let $\dataset'\eqdef\bigl\{(S\cap X',a)~\big|~ (S,a)\in\dataset \And a\in X'\bigr\}$. By definition, $\Phi'\subseteq\formulae{(X',\dataset')}$. Furthermore, $\dataset'$ satisfies SARP since had a cycle been induced by $\dataset'$, it would have also been induced by $\dataset$, however $\dataset$ satisfies SARP, so it induces no cycles. Hence, by \cref{FiniteRichter}, $\dataset'$~is rationalizable by some strict preference order over $X'$. Therefore, $(\bar{X},\dataset)$~satisfies the finite-subset property. Thus, by \cref{scaling-lemma}, $\dataset$~is rationalizable by a strict preference order over $\bar{X}$.
\end{proof}

\subsection{Applicability of the Same Proof to Other Settings: Limited Attention}\label{filter}

A notable strength of our approach is that it is agnostic to the methods used to prove the finite result being scaled. Therefore, the same proof can be used to scale similar statements even when the finite-case proofs of these statements hinge on very different tools. As an illustration, 
we use essentially the same proof as in \cref{revealed-prefs-warmup} to scale to infinite datasets the WARP-limited attention (WARP-LA) result of \citet{masatlioglu2012revealed}, despite their proof using approaches that are quite different from those used to prove the result of \cref{FiniteRichter}.

The setup resembles that of \cref{revealed-prefs-warmup}: let $X$ be a (possibly infinite) set of goods. A \textit{menu} is a finite nonempty subset of $X$. A dataset is \textit{full} if it consists of the (unique) respective choices made by an agent in \emph{all possible} (infinitely many, if $X$ is infinite) menus. A \emph{filter} is a function $\Gamma$ that maps each menu $S$ to a menu $\Gamma(S)\subseteq S$. A filter is an \emph{attention filter} if $\Gamma\bigl(S\setminus\{x\}\bigr)=\Gamma(S)$ for every menu $S$ and $x\in S\setminus\Gamma(S)$.
A dataset is \textit{limited-attention rationalizable} if there exist a strict preference relation $\succ$ and an attention filter $\Gamma$ such that for every menu $S$ in the dataset, the agent's choice is the most-preferred element in $\Gamma(S)$ according to $\succ$. 

Analogously to \cref{FiniteRichter}, \citet{masatlioglu2012revealed} uncover a condition, WARP-LA,\footnote{\label{fn: WARP-LA} A full dataset satisfies WARP-LA if, for any menu $S$, there exists $x^*\in S$ such that for any menu $T$ that includes $x^*$, if $(T,x)\in D$ for some $x\in S$ and $\left(T\setminus\{x^*\},x\right)\notin D$ then $(T,x^*)\in D$.} that is necessary and sufficient for limited-attention rationalizability:

\begin{theorem}[\citealp{masatlioglu2012revealed}]\label{FiniteLA}
A full finite dataset is limited-attention rationalizable if and only if it satisfies WARP-LA.
\end{theorem}

As is the case for SARP, 
if a certain dataset satisfies WARP-LA, then so does any sub-dataset. This suffices for us to scale \cref{FiniteLA} to infinite datasets using the same approach we used in \cref{revealed-prefs-warmup}. 

\begin{theorem}\label{InfiniteLA}
A full (possibly infinite) dataset is limited-attention rationalizable if and only if it satisfies WARP-LA.
\end{theorem}

The well description that we build to prove \cref{InfiniteLA} is conceptually similar to the one from our proof of \cref{szpilrajn}, but requires slight modification due to the addition of the attention filter. The idea is to have, as before, for every pair $a,b\in X$ a variable $\gt{a}{b}$ that will be $\mathsf{True}$ in a model if and only if $a$ precedes $b$ in the preference relation corresponding to the model. But furthermore, for every menu $S$ and  $\emptyset\ne T\subseteq S$, we introduce a variable $\attention{S}{T}$ that will be $\mathsf{True}$ in a model if and only if $\Gamma(S)=T$ for the attention filter corresponding to the model, and upon whose value the formulae that represent the dataset observations will be conditioned. So, if $a$ is chosen from $S$ in the dataset, for every $b\in S\setminus\{a\}$ instead of having a single formula $\gt{a}{b}$ mandating that $a$ precede $b$ in the preference relation, we have for each $b\in T\subseteq S$ a formula that says ``if $\Gamma(S)=T$ then $a$ precedes $b$ in the preference relation,'' i.e., `$\attention{S}{T}\rightarrow\gt{a}{b}$'. Additionally, we introduce formulae that say that for each menu $S$, there exists precisely one $T$ such that $\attention{S}{T}$ holds. Finally, for each menu $S$, each $T\subset S$ and $x\in S\setminus T$ we introduce the formula `$\attention{S}{T}\rightarrow\attention{S\setminus\{x\}}{T}$', requiring that the filter be an attention filter. Except for these additions, the proof runs along the same lines as that of \cref{szpilrajn}; we relegate the details to \cref{app:warpla}.

\subsubsection{Conditional Scaling}

Limited-attention rationalizability, as defined above, flexibly accommodates a wide array of attention filters.  But, in some cases, one may wish to impose additional structure (e.g., requiring that the agent always pays attention to at least two options), or to consider filters that fall outside the domain of attention filters \citep[][discuss a wide array of examples from the literature]{masatlioglu2012revealed}. 
In many cases of interest, the constraint on the filters to be considered takes the form  $\forall S_1 \forall S_2 \cdots\forall S_n: H\bigl(S_1,\Gamma(S_1),S_2,\Gamma(S_2),\ldots,S_n,\Gamma(S_n)\bigr)=\mathsf{True}$ for some predicate $H$ that takes $2n$ menus, where all $S_i$ are menus.\footnote{For example, requiring that the agent always pays attention to at least two options can be expressed with $n=1$, setting $H(S_1,T_1)$ to be $\mathsf{True}$ unless $|S_1|>1$ yet $|T_1|=1$. As another example, the filter being an attention filter could have been expressed in this form with $n=2$, setting $H(S_1,T_1,S_2,T_2)$ to be $\mathsf{True}$ unless $S_1=S_2\cup\{x\}$, $x\notin T_1$, and yet $T_1\ne T_2$.} Such restrictions can be encoded into our well description (in the same fashion that we encoded the restriction that $\Gamma$ is an attention filter in our proof of \cref{InfiniteLA}).\footnote{The only change to the well description would be to add one more (finite) formula type, for every $S_1,\ldots,S_n$: `$\bigvee_{T_1\subseteq S_1,\ldots,T_n\subseteq S_n:H(S_1,T_1,\ldots,S_n,T_n)}\bigwedge_{i=1}^n\attention{S_i}{T_i}$'.} 

This provides an opportunity to point out that our framework is agnostic not only to \emph{how} the finite-case theorem being scaled was proved (as already discussed), but furthermore, to \emph{whether} it has even been proved. Indeed, even absent a proof for the finite-case theorem, our framework can yield conditional statements. Consider a result  (potentially, one that could be uncovered in the future) that, for some predicate $H$, determines that a finite dataset is $H$-rationalizable (i.e., rationalizable using a filter that meets the above requirement w.r.t.\ $H$) if and only if it satisfied some given condition WARP-H defined on finite datasets.
Extend the definition of WARP-H to infinite datasets by defining that an infinite dataset satisfies WARP-H if and only if every finite sub-dataset of it satisfies WARP-H.\footnote{If for finite datasets, WARP-H is a ``no cycles of some form'' condition (like SARP, WARP, or \mbox{WARP-LA}), then extending it to infinite datasets results in the same condition for infinite datasets as well: no (finite) cycles of that form.} A proof completely analogous to our proof of \cref{InfiniteLA} then also proves it imediately applies to infinite datasets:

\begin{theorem}\label{WARP-H}
If it holds that a finite dataset is $H$-ratiolalizable iff it satisfies \mbox{WARP-H}, then it also holds that an infinite dataset is $H$-ratiolalizable iff it satisfies WARP-H.
\end{theorem}

\subsection{Handling Nondiscrete Solution Concepts:\texorpdfstring{\\}{ }Rationalizing Consumer Demand}\label{garp}

We now move to rationalizing consumption behavior in the presence of prices. For the most part, in this \lcnamecref{garp} we  follow the notation of \citet{reny2015}. Fix a number of goods $m\in\NN$. A dataset $\dataset\subset\datapoints$ with generic element $(\bar{p},\bar{x})\in\dataset$ represents a set of observations, where in each, a consumer with a budget faces a price vector $\bar{p}\ne \bar{0}$ and chooses to consume the bundle $\bar{x}$. A utility function $u:\RRmp\rightarrow\RR$ \emph{rationalizes} the dataset $\dataset$ if for every $(\bar{p},\bar{x})\in\dataset$ and every $\bar{y}\in\RRmp$, it holds that if 
$\bar{p}\cdot\bar{y}\le\bar{p}\cdot\bar{x}$ (i.e., $\bar{y}$ can also be bought with the budget) then $u(\bar{y})\le u(\bar{x})$, and if $\bar{p}\cdot\bar{y}<\bar{p}\cdot\bar{x}$ (i.e., $\bar{y}$ can be bought without spending the entire budget) then $u(\bar{y})<u(\bar{x})$.\footnote{This assumption rules out trivial rationalizations such as constant utility functions. See \citet{ce} for a more detailed discussion.} If only the former implication holds for every such $(\bar{p},\bar{x})$ and $\bar{y}$, then we say that $u$ \emph{weakly rationalizes} $\dataset$. 

A dataset $\dataset$ satisfies the \emph{Generalized Axiom of Revealed Preference (GARP)} if for every (finite) sequence $(\bar{p}_1,\bar{x}_1),\ldots,(\bar{p}_k,\bar{x}_k)\in\dataset$, if for every $i\in\{1,2,\ldots,k\!-\!1\}$ it holds that $\bar{p_i}\cdot\bar{x}_{i+1}\le\bar{p_i}\cdot\bar{x}_i$, then $\bar{p}_k\cdot\bar{x}_1\ge\bar{p}_k\cdot\bar{x}_k$. It is straightforward from the definitions that satisfying GARP is a precondition for rationalizability (indeed, otherwise we would have for any rationalizing utility function $u$ that $u(\bar{x}_1)\ge u(\bar{x}_2)\ge\cdots\ge u(\bar{x}_k)>u(\bar{x}_1)$). In a celebrated result, \citet{afriat67} showed  that GARP is also a sufficient condition for rationalizability of a finite dataset---and furthermore  GARP is a sufficient condition for rationalizability of such a dataset by a utility function with many properties that are often assumed in simple economic models. This finding implies that the standard economic model of consumer choice has no testable implications beyond GARP. 

\begin{theorem}[\citealp{afriat67}]\label{garp-finite}
A finite dataset $\dataset\subseteq\datapoints$ satisfies GARP if and only if it is rationalizable. Moreover, when GARP holds there exists a utility function that rationalizes $\dataset$ that is continuous, concave, nondecreasing, and strictly increasing when all coordinates strictly increase.
\end{theorem}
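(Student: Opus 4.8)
The plan is to prove the nontrivial direction---that GARP implies rationalizability by a utility function with all the stated regularity properties---since the converse is immediate (as the excerpt already notes, a GARP violation would force a cycle $u(\bar{x}_1)\ge\cdots\ge u(\bar{x}_k)>u(\bar{x}_1)$ for any rationalizing $u$). Writing the finitely many observations as $(\bar{p}_1,\bar{x}_1),\ldots,(\bar{p}_n,\bar{x}_n)$, I would reduce the entire theorem to the existence of real numbers $U_1,\ldots,U_n$ and strictly positive numbers $\lambda_1,\ldots,\lambda_n$ satisfying the \emph{Afriat inequalities}
\[
U_i \;\le\; U_j + \lambda_j\,\bar{p}_j\cdot(\bar{x}_i-\bar{x}_j)\qquad\text{for all } i,j.
\]

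Once such numbers are in hand, I would exhibit the rationalizing utility explicitly as a lower envelope of affine functions,
\[
u(\bar{y}) \eqdef \min_{1\le k\le n}\bigl\{\,U_k + \lambda_k\,\bar{p}_k\cdot(\bar{y}-\bar{x}_k)\,\bigr\}.
\]
Being a minimum of finitely many affine functions, $u$ is automatically continuous and concave. Because each $\lambda_k>0$ and each $\bar{p}_k\in\RRmpne$ is nonnegative---hence the $k$-th affine piece is nondecreasing in $\bar{y}$---and has at least one strictly positive coordinate, the envelope $u$ is nondecreasing and strictly increases when all coordinates strictly increase (the minimizing piece on the larger point strictly dominates the minimum on the smaller point). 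The Afriat inequalities give $u(\bar{x}_i)=U_i$, since each term of the minimum is $\ge U_i$ with equality at $k=i$; and for any $\bar{y}$ with $\bar{p}_i\cdot\bar{y}\le\bar{p}_i\cdot\bar{x}_i$, the $k=i$ term already yields $u(\bar{y})\le U_i+\lambda_i\,\bar{p}_i\cdot(\bar{y}-\bar{x}_i)\le U_i=u(\bar{x}_i)$, with strict inequality whenever $\bar{p}_i\cdot\bar{y}<\bar{p}_i\cdot\bar{x}_i$ (as $\lambda_i>0$). Thus $u$ rationalizes $\dataset$ in the required sense, and all regularity conditions hold.

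The heart of the argument---and the step I expect to be the main obstacle---is therefore producing the Afriat numbers from GARP alone. I would encode the data as a directed graph on the observation indices, placing an arc $i\to j$ exactly when $i$ is revealed weakly preferred to $j$, i.e.\ $\bar{p}_i\cdot\bar{x}_j\le\bar{p}_i\cdot\bar{x}_i$, and marking it strict when $\bar{p}_i\cdot\bar{x}_j<\bar{p}_i\cdot\bar{x}_i$. GARP is precisely the statement that this graph has no cycle along which every arc is present with at least one strict; note that along any such cycle the desired levels would be forced to satisfy $U_{i_1}\ge U_{i_2}\ge\cdots\ge U_{i_1}$ with one strict inequality, a contradiction. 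The task is to convert this acyclicity into consistent levels $U_i$ together with scalars $\lambda_i>0$. There are two standard routes: (i) a theorem-of-the-alternative / linear-programming-duality argument showing that infeasibility of the (linear-in-$U,\lambda$) Afriat system would force exactly such a forbidden cycle; or (ii) an explicit constructive argument (in the spirit of Fostel--Scarf--Todd) that uses cyclical consistency to order the observations, assign the $U_i$ inductively, and then solve for compatible $\lambda_i$. The delicate points are securing the \emph{strict} positivity of each $\lambda_i$ and matching the strict-versus-weak parts of the Afriat inequalities to the strict-versus-weak structure of GARP; this bookkeeping, rather than the envelope construction, is where the real work lies.
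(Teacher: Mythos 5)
The paper does not actually prove this statement: \cref{garp-finite} is \citeauthor{afriat67}'s theorem, imported as a citation and used purely as a black box inside the Compactness argument for \cref{garp-infinite}, so there is no in-paper proof to compare yours against. That said, your proposal is the classical proof of Afriat's theorem, and the parts you carry out are correct: the reduction to the Afriat inequalities, the lower-envelope construction $u(\bar{y})=\min_k\bigl\{U_k+\lambda_k\,\bar{p}_k\cdot(\bar{y}-\bar{x}_k)\bigr\}$, and the verification that this envelope is continuous and concave (a finite minimum of affine functions), nondecreasing and strictly increasing when all coordinates strictly increase (because each $\lambda_k>0$ and each $\bar{p}_k\in\RRmpne$), satisfies $u(\bar{x}_i)=U_i$, and rationalizes $\dataset$ with the required strict inequality when $\bar{p}_i\cdot\bar{y}<\bar{p}_i\cdot\bar{x}_i$. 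Your graph reformulation of GARP---no cycle in which every arc is revealed weak preference and at least one is strict---is also the correct combinatorial restatement of the definition in the paper.

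The one genuine incompleteness is exactly the step you flag yourself: you never actually produce the numbers $U_1,\ldots,U_n$ and $\lambda_1,\ldots,\lambda_n>0$ from GARP; you only name two standard routes (a theorem-of-the-alternative/LP-duality argument, or a constructive Fostel--Scarf--Todd-style induction). Since securing strict positivity of the $\lambda_i$ and matching strict arcs to strict inequalities is where essentially all of the content of Afriat's theorem lives, the proposal as written is a correct proof modulo a cited lemma rather than a self-contained argument. For the paper's purposes this is moot---the authors simply cite \citet{afriat67} and feed the conclusion into the Compactness machinery---but if you want a complete proof you must execute one of the two routes you mention; the constructive one is probably cleaner to write out (order the observations by the transitive closure of the revealed-preference relation, assign the $U_i$ level by level so that revealed-preferred observations get weakly higher levels with strict increases across strict arcs, then choose each $\lambda_i$ large enough that the inequalities into lower levels hold).
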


There are well-known examples of infinite datasets that are generated by quasiconcave utility functions but may not be rationalized by a concave utility function \cite[see][]{aumann1975values,reny2013simple}. \citet{kannai2004individual} and \citet{apartsin2006demand} provide necessary conditions, stronger than GARP,  for rationalizability by a concave function. Recently, \citet{reny2015} unified the literature and clarified the boundaries of \citeauthor{afriat67}'s theorem by showing that GARP is indeed necessary and sufficient for rationalization of even infinite datasets---and in fact, GARP also guarantees rationalizability by a utility function with many desired properties (yet not all the properties that are attainable in the finite case).

\begin{theorem}[\citealp{reny2015}]\label{garp-infinite}
A (possibly infinite) dataset $\dataset\subseteq\datapoints$ satisfies GARP if and only if it is rationalizable. Moreover, when GARP holds there exists a utility function that rationalizes $\dataset$ that is quasiconcave, nondecreasing, and strictly increasing when all coordinates strictly increase.
\end{theorem}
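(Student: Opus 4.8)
The plan is to follow exactly the template set by the proof of \cref{szpilrajn}: encode a rationalizing preference \emph{relation} (rather than the utility values themselves) as a satisfiability problem, satisfy each finite subsystem using the finite Afriat theorem (\cref{garp-finite}), invoke \cref{thm:compact}, and only then pass from the resulting relation to an honest utility function. The backward implication (rationalizability $\Rightarrow$ GARP) is the easy one already noted in the text, so I focus on GARP $\Rightarrow$ rationalizability.

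\emph{The encoding.} For every ordered pair of bundles $\bar x,\bar y\in\RRmp$ I introduce one Boolean variable $v_{\bar x,\bar y}$, to be read as ``$u(\bar x)\ge u(\bar y)$,'' and write $\bar x\succ\bar y$ as shorthand for $v_{\bar x,\bar y}\wedge\lnot v_{\bar y,\bar x}$. The set of formulae asserts: completeness ($v_{\bar x,\bar y}\vee v_{\bar y,\bar x}$) and transitivity ($(v_{\bar x,\bar y}\wedge v_{\bar y,\bar z})\rightarrow v_{\bar x,\bar z}$); \emph{weak rationalization}, namely $v_{\bar x,\bar y}$ for every $(\bar p,\bar x)\in\dataset$ and every $\bar y$ with $\bar p\cdot\bar y\le\bar p\cdot\bar x$; \emph{monotonicity}, namely $v_{\bar x,\bar y}$ whenever $\bar x\ge\bar y$ coordinatewise, together with the strict version $\bar x\succ\bar y$ whenever $\bar x\gg\bar y$; and \emph{quasiconcavity}, namely $v_{\bar z,\bar x}\vee v_{\bar z,\bar y}$ for every $\bar x,\bar y$ and every convex combination $\bar z=\lambda\bar x+(1-\lambda)\bar y$. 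Crucially, each of these formulae mentions at most three bundles, so every formula is individually finite even though there are uncountably many of them — this is the analogue of the care taken in \eqref{not-blocking}, and it is exactly why I encode pairwise comparisons rather than utility \emph{levels}: a statement like ``$u(\bar x)$ lies strictly below $u(\bar y)$ by a positive margin'' would require an infinite disjunction over rational thresholds, just as ``$m$ is unmatched'' did in \cref{warmup}.

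\emph{Finite satisfiability and Compactness.} A finite subset $\Phi'$ of these formulae mentions only finitely many bundles and finitely many data points $\dataset'\subseteq\dataset$. Since $\dataset$ satisfies GARP, so does the finite subdataset $\dataset'$, and hence \cref{garp-finite} supplies a continuous, concave, nondecreasing utility $u'$ that strictly increases when all coordinates increase and rationalizes $\dataset'$. Setting $v_{\bar x,\bar y}$ TRUE iff $u'(\bar x)\ge u'(\bar y)$ yields a model of $\Phi'$: completeness and transitivity hold because $\ge$ on $\RR$ is a total order; the rationalization formulae hold because $u'$ rationalizes $\dataset'$; monotonicity and its strict form hold because $u'$ is nondecreasing and strictly increasing along increasing coordinates; and the quasiconcavity formulae hold because $u'$ is concave, hence quasiconcave. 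By \cref{thm:compact}, the whole system is satisfiable, and a satisfying model is precisely a complete preorder $\succeq$ on $\RRmp$ that weakly rationalizes $\dataset$, is nondecreasing and strictly increasing along $\gg$, and is quasiconcave.

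\emph{From the preorder to a utility function, and the main obstacle.} The real work is now non-logical: I must produce a \emph{real-valued} $u$ carrying these properties, and here the difficulty is that $\succeq$ need not be continuous, so no Debreu-style representation is automatic and a naive ``read off the value'' construction can collapse strict comparisons. I would sidestep exact representation, which I do not need: fix a countable set $D=\{\bar d_k\}$ of bundles dense in $\RRmp$ (e.g.\ the rational bundles) and define $u(\bar x)=\sum_k 2^{-k}\,\mathbf 1[\bar x\succeq\bar d_k]$. Monotonicity of $\succeq$ gives $\bar x\succeq\bar y\Rightarrow\{k:\bar y\succeq\bar d_k\}\subseteq\{k:\bar x\succeq\bar d_k\}\Rightarrow u(\bar x)\ge u(\bar y)$, which is all that is needed for $u$ to be nondecreasing, to weakly rationalize $\dataset$, and (using the quasiconcavity disjunctions $\bar z\succeq\bar x$ or $\bar z\succeq\bar y$) to be quasiconcave. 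For strict monotonicity, when $\bar x\gg\bar y$ the open box between $\bar y$ and $\bar x$ contains some $\bar d_k$ with $\bar y\ll\bar d_k\ll\bar x$; then $\bar x\succ\bar d_k\succ\bar y$, so that index lies in the symmetric difference and $u(\bar x)>u(\bar y)$. Finally, the \emph{strict} half of rationalization comes for free rather than needing to be encoded: if $\bar p\cdot\bar y<\bar p\cdot\bar x$, then for small $\varepsilon>0$ the bundle $\bar y+\varepsilon\bar 1\gg\bar y$ still satisfies $\bar p\cdot(\bar y+\varepsilon\bar 1)\le\bar p\cdot\bar x$, so weak rationalization gives $u(\bar y+\varepsilon\bar 1)\le u(\bar x)$ while strict monotonicity gives $u(\bar y)<u(\bar y+\varepsilon\bar 1)$, whence $u(\bar y)<u(\bar x)$. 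I expect this last passage — isolating exactly which order properties survive the lossy representation, and recovering strictness from the perturbation argument rather than from the (inexpressible) logic — to be the crux of the proof.
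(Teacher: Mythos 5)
Your proof is correct, but it takes a genuinely different route from the one in the paper. The paper encodes utility \emph{levels} directly: for each $n$ it introduces variables asserting that the utility of $\bar{x}$, rounded down to the grid $\{0,\varepsilon_n,2\varepsilon_n,\dots,1\}$, equals a particular value; it forces consistency across successive grids so that the values form a Cauchy sequence, takes the pointwise limit, and---to keep strict monotonicity from being destroyed in the limit---adds a separate family of ``gap'' formulae indexed by an enumeration of pairs of rational bundles $\bar{q}^k_1\ll\bar{q}^k_2$, which in the finite-satisfiability step requires an explicit ``massaging'' of the Afriat utility to open gaps of size $2^{-k-1}$ in its range. You instead encode only the binary comparisons $v_{\bar x,\bar y}$, which keeps every formula finite without any discretization, makes the finite-satisfiability step an immediate readout of \cref{garp-finite}, and defers all the analytic work to the post-Compactness representation $u(\bar x)=\sum_k 2^{-k}\mathbf{1}[\bar x\succeq\bar d_k]$; your dense-point sandwich replaces the paper's gap formulae, and your $\varepsilon\bar 1$ perturbation recovers strict rationalization exactly as the paper does (weak rationalization plus strict monotonicity). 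I checked the details: the one-directional property $\bar x\succeq\bar y\Rightarrow u(\bar x)\ge u(\bar y)$ (which follows from transitivity, not monotonicity, as your wording briefly suggests) is indeed all that is needed for monotonicity, weak rationalization, and quasiconcavity of $u$, and the strict inclusions at a sandwiched $\bar d_k$ deliver strictness where required. Your approach is arguably cleaner and closer to the \cref{szpilrajn} template; what the paper's level-encoding buys is a direct construction of $u$ as a limit of discretized utilities, which the authors use expositionally to explain why quasiconcavity (preserved under the rounding/monotone transformations) survives the passage to infinite data while concavity does not---a point your encoding makes even more starkly, since concavity is not expressible as a property of the binary relation at all.
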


\citet{reny2015} provides examples showing that continuity and concavity (the properties of the rationalizing utility function from \cref{garp-finite} that are absent from \cref{garp-infinite}) cannot be guaranteed to be attainable for any rationalizable dataset. \citet{reny2015} then proves \cref{garp-infinite} using a novel construction that---unlike \citeauthor{afriat67}'s construction---applies also to infinite data sets. We instead give a concise alternative proof\footnote{While we prove the result of \citet{reny2015} in its full generality, it is worth noting that \citeauthor{reny2015}'s proof does not use the Axiom of Choice, while ours does to some extent. More specifically, the Compactness Theorem, which we use for proving \cref{scaling-lemma}, is equivalent (under~ZF) to the Boolean Prime Ideal (BPI) Theorem (equivalently, to the Ultrafilter Lemma), which is known to be a ``weaker form of the Axiom of Choice'' in the sense that ZF+BPI is strictly weaker than ZFC but strictly stronger than ZF \cite[see, e.g.,][Theorems~6.7 and~8.16]{Halbeisen}.} of \cref{garp-infinite} by scaling \cref{garp-finite} as a black box using \cref{scaling-lemma}. One of the  challenges in our argument is that utility functions have an (uncountably) infinite range, so it is not \emph{a priori} obvious how to encode such a function by a model defined via individually finite formulae (e.g., how to require that each bundle is associated with some real number that represents the utility from it); to overcome this challenge, our well description encodes the utility from each bundle as the limit of a sequence of discrete utilities.
This approach, 
 in turn, introduces additional challenges, such as how to make sure, using only constraints on these discrete functions, that the limit utilities satisfy all desired properties. This is particularly challenging with properties that are not preserved by limits, such as being strictly increasing.

\begin{proof}[Proof of \cref{garp-infinite}]
As noted, the ``only if'' direction is trivial, so we prove the ``if'' direction. We do so using \cref{scaling-lemma}.

\proofstep{Definition of $\mathcal{P}$}
Fixing $m$, let $\mathcal{P}$ be the set of all datasets $\dataset\subseteq\datapoints$ satisfying GARP. A \emph{solution} for a dataset~$\dataset\in\mathcal{P}$ is a utility function that rationalizes $\dataset$ that is quasiconcave, nondecreasing, and strictly increasing when all coordinates strictly increase.

\proofstep{Well describability}
We set $\varepsilon_n\eqdef2^{-n}$ for every $n\in\NN$.
We define a variable $\utility{n}{\bar{x}}{v}$ for every $n\in\NN$, every $\bar{x}\in\RRmp$, and every $v\in V_n\eqdef\{0,\varepsilon_n,2\cdot\varepsilon_n,\ldots,1\}$.
In what follows, for each $\dataset\in\mathcal{P}$ we define a set $\formulae{\dataset}$ of formulae over these variables so that models of $\formulae{\dataset}$ are in one-to-one correspondence with the (not-yet-proven-to-be-nonempty) set of solutions for $\dataset$ that satisfy certain properties (we then have to show that the existence of any solution implies the existence of a solution with such properties).
The correspondence is obtained by endowing the variable $\utility{n}{\bar{x}}{v}$ with the semantic interpretation ``$\epsfloor{u(\bar{x})}=v$ for the corresponding utility function $u$,'' where for every $n\in\NN$ and every $x$ we denote by $\epsfloor{x}\eqdef2^{-n}\cdot\lfloor2^n\cdot x\rfloor$ the rounding-down of $x$ to the nearest multiple of $\varepsilon_n$.
Fixing an enumeration $(\bar{q}^k_1,\bar{q}^k_2)_{k=1}^\infty$ of the countable set $\bigl\{(\bar{q}_1,\bar{q}_2)\in\mathbb{Q}^m\times\mathbb{Q}^m\mid \bar{q}_1\ll \bar{q}_2\bigr\}$,\footnote{For $\bar{x},\bar{y}\in\RRm$, we write $\bar{x}\ll\bar{y}$ to denote that $x_i<y_i$ for every $i=1,\ldots,m$.}
we define the set $\formulae{\dataset}$ to consist of the following formulae:
\begin{enumerate}
\item
for all $n\in\NN$ and all $\bar{x}\in\RRmp$, the (finite!) formula
`$\bigvee_{v\in V_n}\utility{n}{\bar{x}}{v}$',
requiring that $\bar{x}$ have a rounded-down-to-$\varepsilon_n$ utility in $[0,1]$;
\item
for all $n\in\NN$, all $\bar{x}\in\RRmp$, and all distinct $v,w\in V_n$, the formula
`$\utility{n}{\bar{x}}{v}\rightarrow\lnot\utility{n}{\bar{x}}{w}$',
requiring that the rounded-down-to-$\varepsilon_n$ utility from $\bar{x}$ be unique;
\item
for all $n\in\NN$, all $\bar{x}\in\RRmp$, and all $v\in V_n$, the formula
`$\utility{n}{\bar{x}}{v}\rightarrow\bigl(\utility{n+1}{\bar{x}}{v}\vee\utility{n+1}{\bar{x}}{v+\varepsilon_{n+1}}\bigr)$',
requiring that $\epsfloor{u(\bar{x})}=\epsfloor{\epsppfloor{u(\bar{x})}}$;
\item
\begin{sloppypar}
for all $n\in\NN$, all $\bar{x},\bar{y}\in\RRmp$, all convex combinations $\bar{z}\in\RRmp$ of $\bar{x},\bar{y}$, and all $v,w\in V_n$, the (finite) formula
`$\bigl(\utility{n}{\bar{x}}{v}\wedge\utility{n}{\bar{y}}{w}\bigr)\rightarrow
\bigvee_{
v'\in V_n:
v'\ge\min\{v,w\}
}\utility{n}{\bar{z}}{v'}$',
requiring that the rounded-down-to-$\varepsilon_n$ utility function be quasiconcave;
\end{sloppypar}
\item
for all $n\in\NN$, all $\bar{x},\bar{y}\in\RRmp$ s.t.\ $\bar{x}\le\bar{y}$, and all $v\in V_n$, the (finite) formula
`$\utility{n}{\bar{x}}{v}\rightarrow
\bigvee_{
w\in V_n:
w\ge v
}\utility{n}{\bar{y}}{w}$',
requiring that the rounded-down-to-$\varepsilon_n$ utility function be nondecreasing;
\item
\begin{sloppypar}
for all $k\in\NN$ and all $n>k$, the (finite) formula
`$\utility{n}{\bar{q}^k_1}{v}\rightarrow
\bigvee_{
w\in V_n:
w\ge v+2^{-k-1
}}\utility{n}{\bar{q}^k_2}{w}$',
requiring that starting at some $n$, the rounded-down-to-$\varepsilon_n$ utility from $\bar{q}^k_2$ be greater by at least $2^{-k-1}$ than the rounded-down-to-$\varepsilon_n$ utility from $\bar{q}^k_1$;
\end{sloppypar}
\item
for all $n\in\NN$, all datapoints $(\bar{p},\bar{x})\in\dataset$, all $\bar{y}\in\RRmp$ s.t.\ $\bar{p}\cdot\bar{y}\le\bar{p}\cdot\bar{x}$, and all $v\in V_n$, the (finite) formula
`$\utility{n}{\bar{x}}{v}\rightarrow
\bigvee_{
w\in V_n:
w\le v
}\utility{n}{\bar{y}}{w}$',
requiring that the rounded-down-to-$\varepsilon_n$ utility weakly rationalize $\dataset$.
\end{enumerate}
We now argue that $(\formulae{\dataset})_{\dataset\in\mathcal{P}}$ is a well description of~$\mathcal{P}$. Let $\dataset\in\mathcal{P}$.

We first claim that every model that satisfies $\formulae{\dataset}$ corresponds to a solution for $\dataset$. Fix a model for $\formulae{\dataset}$. For every $\bar{x}\in\RRmp$ and every $n\in\NN$, let $v_n\in V_n$ be the value such that $\utility{n}{\bar{x}}{v_n}$ is $\mathsf{True}$ in the model (well defined by the first and second formula-types above), and define $u(\bar{x})=\lim_{n\rightarrow\infty} v_n$ (well defined, e.g., by the third formula-type above since $v_n$ is a Cauchy sequence). The resulting utility function $u$ is a limit of nondecreasing quasiconcave functions (by the fourth and fifth formula-types above) that weakly rationalize the data (by the seventh formula-type above). Hence, $u$ itself is a nondecreasing quasiconcave function that weakly rationalizes the data. Furthermore, for every $\bar{x},\bar{y}\in\RRmp$ s.t.\ $\bar{x}\ll\bar{y}$, there exist two rational number vectors ``in between'' them, i.e., there exists $k\in\NN$ s.t.\ $\bar{x}\ll\bar{q}^k_1\ll\bar{q}^k_2\ll\bar{y}$. Therefore, we have that $u(\bar{x})\le u(\bar{q}^k_1)\le u(\bar{q}^k_2)-2^{-k-1}<u(\bar{q}^k_2)\le u(\bar{y})$ (the second inequality stems from this inequality holding for almost all functions of which $u$ is the limit, by the sixth formula-type above), so $u$ is strictly increasing when all coordinates strictly increase. Finally, since $u$ weakly rationalizes~$\dataset$ and is also strictly increasing when all coordinates strictly increase, then $u$ also rationalizes~$\dataset$.

Second, we claim that if $\dataset$ has a solution, then $\formulae{\dataset}$ has a model. Fix a solution $u$ for $\dataset$, and let $\bar{u}(\bar{x})\eqdef1/4+(1/2\pi)\cdot\arctan(u(\bar{x}))+\sum_{k:\bar{q}^k_2\le \bar{x}}2^{-k-1}$ for every $\bar{x}\in\RRmp$. As this transformation of utilities is strictly monotone, the resulting function~$\bar{u}$ still rationalizes the data, and is quasiconcave, nondecreasing, and strictly increasing when all coordinates strictly increase. Furthermore, the sum of the first two summands is in $[0,\nicefrac{1}{2}]$, and so is the third summand, so the overall sum is in $[0,1]$. Finally, due to the third summand, $\bar{u}(\bar{q}^k_2)>\bar{u}(\bar{q}^k_1)+2^{-k-1}$ for every $k\in\NN$. Using $\bar{u}$ we can therefore construct a model for $\formulae{\dataset}$ (by setting each $\utility{n}{\bar{x}}{v}$ to be $\mathsf{True}$ iff $v=\epsfloor{\bar{u}(x)}$), and so $\formulae{\dataset}$ has a model. To sum up, $(\formulae{\dataset})_{\dataset\in\mathcal{P}}$ is a well description of~$\mathcal{P}$.

\proofstep{Finite-subset property}
Let $\dataset\in\mathcal{P}$. Let $\Phi'\subset\formulae{\dataset}$ be a finite subset.
Since $\Phi'$ is finite, there are only finitely many formulae of the above
seventh type (the only formula type that depends on the dataset) in $\Phi'$.
Let $\dataset'\subset\dataset$ be the set of datapoints that induce these formulae.
By definition, $\Phi'\subseteq\formulae{\dataset'}$. Furthermore, $\dataset'$ satisfies GARP since any sub-dataset of $\dataset$ satisfies GARP, and hence, by \cref{garp-finite}, $\dataset'$~is rationalizable. Therefore, $\dataset$~satisfies the finite-subset property. Thus, by \cref{scaling-lemma}, $\dataset$~is rationalizable.
\end{proof}

A natural question is why the same argument cannot be used to scale \cref{garp-finite} while maintaining concavity rather than quasiconcavity. The short answer is that---due to the requirement that $u$ be strictly monotone, and the inherent need to make each formula finite---our proof of \cref{garp-infinite} relies heavily on the fact that quasiconcavity, unlike concavity, is maintained under weakly monotone transformations (such as the mapping of $u$ to $\bar{u}$); we discuss this further in \cref{sec:afriat-limitat}.

\subsubsection{Additional Application of the Same Proof: Rational Inattention}

Our proof of \cref{garp-infinite} is quite a bit more flexible than one might imagine.
In \cref{inattention}, we use essentially the same well description to scale, from finite to infinite datasets, the seminal result of \citet{caplin2015revealed} in quite a different rationalization domain: that a state-dependent stochastic choice dataset has a costly information acquisition representation if and only if it satisfies 
\emph{No Improving Action Switches (NIAS)} and \emph{No Improving Attention Cycles (NIAC)}.
\citet{caplin2017rationally} recently proved the infinite version of this result via a novel proof that diverges from \citeauthor{caplin2015revealed}'s proof of the finite case.\footnote{\citet{denti2017rationally} provide a similar result for infinite datasets of a different kind.}
We reprove this result using essentially the same well description as in our proof of \cref{garp-infinite}, despite the differences between the two settings considered, and despite the fact that neither the original proofs of the finite versions nor the original proofs of the infinite versions of any of these quite different theorems share any common core technique. The main difference between the two well descriptions is that this application does not require strict monotonicity. Therefore, the sixth formula-type of the above well description is not required, and the proof that a solution implies a model is simpler as it does not require carefully ``massaging'' the function $u$ into $\bar{u}$ as above.

\subsection{Combining with Domain-Specific Knowledge:\texorpdfstring{\\}{ }Rationalizing Stochastic Demand}\label{stoch}

Fix a set $X$ of alternatives. A \emph{(stochastic choice) dataset} is a function \[P:\bigl\{(A,x)\in (2^X\setminus\{\emptyset\})\times X ~\big|~ A\in\mathcal{A} \And x\in A\bigr\}\rightarrow[0,1]\] such that $\mathcal{A}\subseteq\bigl\{A\in 2^X\setminus\{\emptyset\} ~\big|~ |A|<\infty\bigr\}$ and $\sum_{x\in A}P(A,x)=1$ for every $A\in\mathcal{A}$. The dataset~$P$ is interpreted as probabilities with which different alternatives are chosen given various menus $\mathcal{A}$ from $X$. Probabilistic choice may emerge from random shocks to preferences over time, or represent fractions of deterministic choices in a population.\footnote{For a detailed discussion and a textbook treatment of this setting, see \citet{ce}.}

A dataset is \emph{rationalizable} if there exists a probability measure $\nu$ over the space\footnote{In this \lcnamecref{stoch}, when we consider probability measures over the space of total orders (i.e., complete, transitive, and antisymmetric orders) over a set $X$, then if $X$ is countable then we take this space as a measurable space w.r.t.\ the discrete $\sigma$-algebra, and more generally for arbitrary $X$ we take this space as a measurable space w.r.t.\ the $\sigma$-algebra generated by all of its subset of the form $\{\pi \mid a_1\succ_{\pi}\cdots\succ_{\pi} a_n\}$, where the $a_i$ are distinct elements in $X$.} of total orders over $X$ such that $\Pr_{\nu}[x\succ A\setminus\{x\}]=P(A,x)$ for every $(A,x)$ in the domain of $P$.
A dataset~$P$ satisfies the \emph{Axiom of Revealed Stochastic Preference} if for every $n$ and every finite sequence (possibly with repetitions) $(A_i,x_i)_{i=1}^n$, where each $(A_i,x_i)$ is in the domain of $P$, 
\[
\sum_{i=1}^n P(A_i,x_i)\le \max_{\pi\in (\cup_{i=1}^n A_i)!}\left\{\sum_{i=1}^n\mathbf{1}_{[x_i\succ_\pi A_i\setminus\{x_i\}]}\right\},
\]
where the factorial symbolizes the set of all possible permutations.

\begin{theorem}[\citealt{MR1971,MR1990}]\label{stoch-finite}
Let $X$ be a finite set of items. A dataset $P$ is rationalizable if and only if it satisfies the Axiom of Revealed Stochastic Preference.
\end{theorem}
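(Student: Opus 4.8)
The plan is to reduce rationalizability to a convex-geometry statement and then extract an Axiom-violating sequence from a separating hyperplane. Since $X$ is finite, the set $\Pi$ of linear orders on $X$ is finite, and any probability measure $\nu$ over $\Pi$ is just a weight vector $(\lambda_\pi)_{\pi\in\Pi}$ with $\lambda_\pi\ge 0$ and $\sum_\pi\lambda_\pi=1$. For each order $\pi$, let $r_\pi$ be the $\{0,1\}$-vector indexed by the (finite) domain $D$ of $P$, with $r_\pi(A,x)=\mathbf{1}_{[x\succ_\pi A\setminus\{x\}]}$. First I would observe that $\nu$ rationalizes $P$ exactly when $\sum_\pi\lambda_\pi r_\pi=P$, so that rationalizability is equivalent to $P\in\operatorname{conv}\{r_\pi:\pi\in\Pi\}$. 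The easy direction (rationalizable implies the Axiom) is then immediate: for any sequence $(A_i,x_i)_{i=1}^n$, linearity and $\sum_\pi\lambda_\pi=1$ give $\sum_i P(A_i,x_i)=\sum_{\pi}\lambda_\pi\sum_i r_\pi(A_i,x_i)\le\max_{\pi'}\sum_i r_{\pi'}(A_i,x_i)$, and this last maximum (over all of $\Pi$) equals the one in the Axiom because $r_\pi(A_i,x_i)$ depends only on the restriction of $\pi$ to $\bigcup_i A_i$.

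For the converse I would argue by contraposition: assume $P\notin\operatorname{conv}\{r_\pi\}$ and produce a sequence violating the Axiom. Because $\{r_\pi\}$ is a finite set, its convex hull is a compact polytope, so the separating hyperplane theorem yields a vector $c\in\RR^D$ with $c\cdot P>\max_\pi c\cdot r_\pi$. The goal is to convert $c$ into a \emph{nonnegative integer} vector satisfying the same strict inequality, since such a vector is precisely a sequence-with-multiplicities, and the inequality then reads exactly as a violation of the Axiom.

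The main obstacle — and the one genuinely clever step — is passing from an arbitrary real $c$ to a nonnegative one. Here I would exploit the fact that both $P$ and every $r_\pi$ select exactly one element per menu, i.e.\ $\sum_{x\in A}P(A,x)=\sum_{x\in A}r_\pi(A,x)=1$ for each $A\in\mathcal{A}$ (the whole block $\{(A,x):x\in A\}$ lies in $D$). Consequently, shifting $c(A,x)\mapsto c(A,x)+t_A$ by a constant $t_A$ depending only on the menu changes $c\cdot P$ and every $c\cdot r_\pi$ by the same amount $\sum_A t_A$, leaving the strict inequality intact; choosing each $t_A$ large enough makes the shifted vector nonnegative. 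Having a nonnegative real $c$, I would then use continuity of $c\mapsto c\cdot P-\max_\pi c\cdot r_\pi$ to perturb $c$ to a nonnegative \emph{rational} vector that still satisfies the strict inequality, and finally clear denominators — scaling by a positive integer preserves the inequality by positive homogeneity of the maximum — to obtain $c\in\ZZ_{\ge 0}^D$. Reading the multiplicities $c(A,x)$ as a finite sequence of datapoints then gives $\sum_i P(A_i,x_i)=c\cdot P>\max_\pi c\cdot r_\pi=\max_{\pi\in(\cup_i A_i)!}\sum_i\mathbf{1}_{[x_i\succ_\pi A_i\setminus\{x_i\}]}$, contradicting the Axiom and thereby forcing $P\in\operatorname{conv}\{r_\pi\}$, i.e.\ rationalizability.
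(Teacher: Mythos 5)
Your argument is correct, but note that the paper does not prove this statement at all: \cref{stoch-finite} is imported as a black box from McFadden and Richter, and the paper's contribution is only the lift to infinite $X$ (\cref{stoch-infinite}) via Logical Compactness. So there is no in-paper proof to compare against; what you have written is essentially the classical McFadden--Richter argument, recast geometrically. The reduction of rationalizability to $P\in\operatorname{conv}\{r_\pi\}$ is exactly right, the easy direction is fine, and the three-step cleanup of the separating vector is the genuine content: the menu-wise shift works because the full block $\{(A,x):x\in A\}$ lies in the domain and both $P$ and each $r_\pi$ sum to $1$ over it, so $c\cdot P-c\cdot r_\pi$ is invariant under $c(A,x)\mapsto c(A,x)+t_A$; the rational perturbation is licensed because $c\mapsto c\cdot P-\max_\pi c\cdot r_\pi$ is continuous and the inequality is strict (and you can keep the perturbation inside the nonnegative orthant); and clearing denominators preserves the inequality by positive homogeneity. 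The final identification of $\max_{\pi\in\Pi}c\cdot r_\pi$ with the maximum over $(\cup_i A_i)!$ is justified, as you note, because each $r_\pi(A_i,x_i)$ depends only on $\pi$ restricted to $\cup_i A_i$ and every order on that set extends to $X$. The only cosmetic caveat is to observe that the final integer vector cannot be zero (else the strict inequality fails), so the violating sequence is nonempty. What your route buys over simply quoting the literature is a self-contained, finite-dimensional proof whose only nonelementary ingredient is strict separation of a point from a polytope.
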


We use \cref{scaling-lemma} to scale \cref{stoch-finite} to infinite datasets.

\begin{theorem}\label{stoch-infinite}
Let $X$ be a (possibly infinite) set of items. A dataset $P$ is rationalizable if and only if it satisfies the Axiom of Revealed Stochastic Preference.
\end{theorem}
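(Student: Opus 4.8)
The plan is to prove necessity directly and to obtain sufficiency from \cref{thm:compact}, in close analogy with the proof of \cref{garp-infinite}. Necessity needs no Compactness: if a measure $\nu$ rationalizes $P$, then for any finite sequence $(A_i,x_i)_{i=1}^n$ and any realized order $\pi$ we have $\sum_i\mathbf{1}_{[x_i\succ_\pi A_i\setminus\{x_i\}]}\le\max_{\pi'\in(\cup_i A_i)!}\sum_i\mathbf{1}_{[x_i\succ_{\pi'} A_i\setminus\{x_i\}]}$; taking expectations over $\nu$ turns the left-hand side into $\sum_i P(A_i,x_i)$ and yields the Axiom. The content is therefore in the converse. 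For sufficiency, suppose $P$ satisfies the Axiom. Borrowing the discretization idea from \cref{garp-infinite}, I would introduce, for every $n\in\NN$, every linear order $\pi$ on a finite subset $\mathrm{dom}(\pi)\subseteq X$, and every $v\in V_n=\{0,\varepsilon_n,\dots,1\}$ with $\varepsilon_n=2^{-n}$, a variable $\prob{n}{\pi}{v}$ that is TRUE exactly when $\epsfloor{\nu_\pi}=v$, where $\nu_\pi$ denotes the probability that a $\nu$-random order restricts on $\mathrm{dom}(\pi)$ to $\pi$.

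The formulae would assert: (i) existence and uniqueness of the rounded value of each $\nu_\pi$ (as in the first two formula-types of \cref{garp-infinite}); (ii) consistency across precision levels, forcing the sequence of rounded values to converge (as in the third formula-type); (iii) that for each finite $Y\subseteq X$ the rounded masses of the $|Y|!$ orders on $Y$ sum to within $|Y|!\cdot\varepsilon_n$ of $1$; (iv) marginalization consistency, namely that for $Y\subseteq Y'$ the rounded mass of an order $\pi$ on $Y$ agrees, up to the appropriate multiple of $\varepsilon_n$, with the sum of rounded masses of the orders on $Y'$ that restrict to $\pi$; and (v) data-matching, that for each $(A,x)$ in the domain of $P$ the rounded masses of the orders on $A$ ranking $x$ first sum to within the appropriate tolerance of $P(A,x)$. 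Each of these is a \emph{finite} formula: a disjunction over the finitely many admissible tuples of values in $V_n$ for the finitely many orders involved.

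Given a model of this formula set, (ii) makes each sequence of rounded values Cauchy, so $\nu_\pi\eqdef\lim_n(\text{rounded value})$ is well defined; letting $n\to\infty$ in (iii)--(v), whose tolerances vanish, shows that $\{\nu_Y\}$ over finite $Y$ is a genuine, marginalization-consistent projective family of probability distributions on orders of finite subsets that matches $P$. Since the space of full orders on $X$ is the projective limit of these finite order spaces under restriction, the Kolmogorov/projective-limit extension theorem (each factor being finite, hence compact) yields a measure $\nu$ on the cylinder $\sigma$-algebra of the footnote with $\Pr_\nu[x\succ A\setminus\{x\}]=P(A,x)$, so $P$ is rationalizable. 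By \cref{thm:compact} it then remains to satisfy an arbitrary finite subset $\Phi'$. Only finitely many items of $X$ are mentioned in $\Phi'$; collect them into a finite set $X'$. The restriction $P|_{X'}$ to menus contained in $X'$ inherits the Axiom (a witnessing sequence inside $X'$ is also one inside $X$), so \cref{stoch-finite}, applied to the finite item set $X'$, furnishes a measure $\nu'$ over the $|X'|!$ orders on $X'$ matching $P|_{X'}$ exactly; its exact finite marginals are consistent and data-matching, so rounding them down and setting the corresponding variables TRUE produces a model of $\Phi'$.

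I expect the main obstacle to be the same tension as in \cref{garp-infinite}: the rationalizing object is now a probability measure, hence real-valued, and must be captured by individually finite formulae, so it can only be approached through discretization and a limit. The delicate points are, first, phrasing (iii)--(v) as \emph{approximate} constraints that are simultaneously satisfiable by the roundings of a genuine family and yet strong enough that their vanishing-tolerance limits force \emph{exact} additivity, marginalization consistency, and data-matching---none of which survive naive rounding; and second, invoking the projective-limit extension over a possibly uncountable index set of finite subsets, where it is precisely the finiteness (compactness) of each factor that makes the extension go through.
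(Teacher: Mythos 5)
Your proposal is correct and follows essentially the same route as the paper: discretize the cylinder probabilities $\Pr[a_1\succ\cdots\succ a_m]$ via rounded-value variables, impose finite formulae for existence/uniqueness, cross-precision consistency, approximate additivity and data-matching, satisfy finite subsets by rounding the exact measure from \cref{stoch-finite} on the finitely many mentioned items, and recover a measure on full orders from the limiting projective family via the Kolmogorov Extension Theorem. The only cosmetic difference is that the paper packages the extension step as a standalone lemma whose consistency condition is the singleton-normalization plus an ``insertion'' identity rather than your general marginalization consistency over nested finite subsets; the two formulations are equivalent.
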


This infinite setting and its economic importance have been discussed by \citet{cohen1980} and \citet{mcfadden2005review}. \citet{cohen1980} showed that the celebrated representation result of \citet{falmagne1978representation} using Block--Marschak polynomials \citep{block1959random} scales to infinite sets $X$ if the definition of rationalizability is weakened; \citet{cohen1980} also gave several stronger structural conditions on~$X$ that are sufficient for (``un-weakened,'' i.e., as defined above) rationalizability. \citet{mcfadden2005review} showed how to scale \cref{stoch-finite} to a different infinite setting, once again by either weakening the definition of rationalizability or by demanding the existence of a certain topological structure on $X$ (to obtain ``un-weakened'' rationalizability). \Cref{stoch-infinite}, which we now prove, does \emph{not} weaken the definition of rationalizability and does \emph{not} impose any assumptions on~$X$.

\begin{proof}[Proof of \cref{stoch-infinite}]
The ``only if'' direction is trivial (as in the finite case), so we prove the ``if'' direction. We do so using \cref{scaling-lemma}.

\proofstep{Definition of $\mathcal{P}$}
Fixing $X$, let $\mathcal{P}$ be the set of all datasets satisfying the Axiom of Revealed Stochastic Preference. A \emph{solution} for a dataset~$P\in\mathcal{P}$ is a probability measure $\mu$ on the space of total orders over elements that appear in menus in $P$ that rationalizes $P$.

\proofstep{Well describability}
In our proof, we use the following \lcnamecref{marginals} to help us encode via individually finite formulae a probability measure $\mu$ over the total orders of $X$. The \lcnamecref{marginals}, whose proof we spell out in \cref{app:marginals}, follows directly from the Kolmogorov Extension Theorem.

\begin{lemma}\label{marginals}
Let $X$ be a (possibly infinite) set, and for every $m\in\NN$ and sequence $\bar{a}=(a_i)_{i=1}^m $ of $m$ distinct elements of $X$, let $p_{\bar{a}}\in[0,1]$. Then the following are equivalent:
\begin{itemize}
\item
There exists a probability measure $\mu$ over the space of total orders over $X$ such that for every $m\in\NN$ and sequence $\bar{a}=(a_i)_{i=1}^m$  of $m$ distinct elements of~$X$, it is the case that $p_{\bar{a}}=\Pr_{\mu}[a_1\succ a_2\succ\cdots\succ a_m].$
\item
$p_{(a)}=1$ for every $a\in X$ (sequence of length $1$); and for every $m\in\NN$ and sequence $(a_1,\ldots,a_m,a)$ of $m+1$ distinct elements of $X$, it is the case that $p_{(a_1,\ldots,a_m)}=\sum_{i=0}^m p_{(a_1,\ldots,a_i,a,a_{i+1},\ldots,a_m)}$.
\end{itemize}
\end{lemma}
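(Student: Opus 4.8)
The plan is to deduce \cref{marginals} directly from the Kolmogorov Extension Theorem by setting up an appropriate projective (inverse) system of finite-dimensional distributions. The key conceptual observation is that specifying a full order $\succ$ on $X$ is equivalent to specifying, for each finite subset $A \subseteq X$, the induced linear order $\succ|_A$ on $A$, in a consistent way; and that the numbers $p_{\bar{a}} = \Pr_\mu[a_1 \succ \cdots \succ a_n]$ record precisely the probabilities that the restriction of the random order to $\{a_1,\ldots,a_n\}$ is the particular order $a_1 \succ a_2 \succ \cdots \succ a_n$. So I would first reformulate the data $(p_{\bar a})$ as a candidate family of finite-dimensional marginal distributions: for each finite $A \subseteq X$, define a candidate probability distribution $\nu_A$ on the (finite) set of linear orders of $A$ by assigning to the order $a_1 \succ \cdots \succ a_n$ the value $p_{(a_1,\ldots,a_n)}$ (where $A = \{a_1,\ldots,a_n\}$).

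The forward direction (first bullet implies second) is the easy part: it is a direct computation. Given $\mu$, the identity $p_{(a)} = \Pr_\mu[\text{true}] = 1$ is immediate, and the summation identity is just the law of total probability — the event ``$a_1 \succ \cdots \succ a_n$'' partitions, according to where the extra element $a$ falls relative to $a_1,\ldots,a_n$, into the $n+1$ disjoint events ``$a_1 \succ \cdots \succ a_i \succ a \succ a_{i+1} \succ \cdots \succ a_n$'', so the probabilities add up. For the nontrivial direction (second bullet implies first), I would verify that the hypothesized consistency conditions on $(p_{\bar a})$ are exactly what is needed for the family $\{\nu_A\}$ to be a \emph{consistent} (Kolmogorov-compatible) projective system, and then invoke Kolmogorov's theorem to produce $\mu$.

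The main work, and the step I expect to be the real obstacle, is checking that the two stated conditions genuinely imply \emph{full} Kolmogorov consistency — namely that for finite $A \subseteq B$, the distribution $\nu_B$ pushed forward under the restriction map (forget the positions of elements of $B \setminus A$, keeping only the induced order on $A$) equals $\nu_A$. The given hypotheses only assert consistency for adding \emph{one} element at a time and for the normalization $p_{(a)}=1$; I would need to argue by induction on $|B \setminus A|$ that one-step consistency propagates to arbitrary finite supersets, and also that the $p_{\bar a}$ are well-defined probabilities (nonnegativity and the fact that, for fixed underlying set $A$, the values $p_{(a_{\sigma(1)},\ldots,a_{\sigma(n)})}$ over all orderings $\sigma$ sum to $1$). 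This last normalization over a fixed set should itself follow by induction from the one-element summation identity together with $p_{(a)}=1$. A subtlety worth flagging is the measurable-space convention stated in the footnote: the $\sigma$-algebra on the space of full orders is generated by cylinder sets of the form $\{\pi \mid a_1 \succ_\pi \cdots \succ_\pi a_n\}$, which is precisely the $\sigma$-algebra making all the finite restriction maps measurable, so Kolmogorov's theorem applies on the nose and the resulting $\mu$ has the claimed finite-dimensional marginals. Once full consistency is established, the construction and uniqueness of $\mu$ are handed to us by Kolmogorov, completing the reverse implication.
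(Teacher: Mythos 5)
Your proposal is correct and follows the same basic strategy as the paper's proof in \cref{app:marginals}: the easy direction is the law of total probability exactly as you say, the hard direction goes through the Kolmogorov Extension Theorem applied to the finite-dimensional distributions determined by the $p_{\bar{a}}$, and you correctly isolate the two combinatorial facts that need an induction (that the $p$'s over all orderings of a fixed finite set sum to $1$, and that single-insertion consistency propagates to arbitrary finite supersets) --- facts the paper's write-up also relies on but states more tersely. The one point where your route genuinely diverges is in \emph{how} Kolmogorov is invoked. You apply it to the projective system $\{\nu_A\}$ indexed by finite $A\subseteq X$, with state spaces the finite sets of linear orders of $A$ and connecting maps the restrictions; the textbook Kolmogorov theorem, however, is stated for products of copies of $\RR$ (or of Polish spaces) indexed by an arbitrary set, not for general projective limits, so strictly speaking you need either a projective-limit version of the theorem (which does hold here, the state spaces being finite and the restriction maps surjective, and a consistent family of finite restrictions does glue to a full order on $X$ since totality and transitivity are finitary) or a reduction to the product form. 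The paper performs exactly such a reduction: it indexes coordinates by unordered pairs $T=\binom{X}{2}$, encodes each finite-dimensional law as a measure on $\RR^k$ via the indicators of ``$i$ precedes $j$,'' applies product-form Kolmogorov to obtain a measure $\nu$ on $\RR^T$, and then pushes $\nu$ forward along the embedding of the space of full orders into $\RR^T$, checking that the complement of the image is $\nu$-null. Either packaging works; yours is arguably more direct, but to cite the standard theorem verbatim you should either prove the projective-limit variant you are using or insert the paper's embedding step.
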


We set $\varepsilon_n=2^{-n}$ for every $n\in\NN$.
We define a variable $\prob{n}{\bar{a}}{p}$ for every $n\in\NN$, every $m\in\NN$, every $m$-tuple of distinct items $\bar{a}=(a_1,\ldots,a_m)$ from $X$, and every $p\in V_n\eqdef\{0,\varepsilon_n,2\cdot\varepsilon_n,\ldots,1\}$.
In what follows, for each $P\in\mathcal{P}$ we define a set $\formulae{P}$ of formulae over these variables so that models of $\formulae{P}$ are in one-to-one correspondence with the (not-yet-proven-to-be-nonempty) set of solutions for $P$.
The correspondence is obtained by endowing the variable $\prob{n}{\bar{a}}{p}$ with the semantic interpretation ``$p=\epsfloor{\Pr_{\mu}[a_1\succ a_2\succ\cdots\succ a_m]}$ for the corresponding probability measure $\mu$,'' where for every $n\in\NN$ and every $x$ we denote by $\epsfloor{x}=2^{-n}\cdot\lfloor2^n\cdot x\rfloor$ the rounding-down of $x$ to the nearest multiple of $\varepsilon_n$.
Letting $X'\subseteq X$ be the set of elements that appear in $P$, we define the set $\formulae{P}$ to consist of the following formulae:
\begin{enumerate}
\item
for all $n\in\NN$ and all (finite) tuples $\bar{a}$ of distinct items from $X'$, the (finite!) formula
`$\bigvee_{v\in V_n}\prob{n}{\bar{a}}{p}$',
requiring that $\bar{a}$ have a rounded-down-to-$\varepsilon_n$ probability;
\item
for all $n\in\NN$, all tuples $\bar{a}$ of distinct items from $X'$, and all distinct $p,q\in V_n$, the formula
`$\prob{n}{\bar{a}}{p}\rightarrow\lnot\prob{n}{\bar{a}}{q}$',
requiring that the rounded-down-to-$\varepsilon_n$ probability of $\bar{a}$ be unique;
\item
for all $n\in\NN$, all tuples $\bar{a}$ of distinct items from $X'$, and all $p\in V_n$, the formula
`$\prob{n}{\bar{a}}{p}\rightarrow\bigl(\prob{n+1}{\bar{a}}{p}\vee\prob{n+1}{\bar{a}}{p+\varepsilon_{n+1}}\bigr)$',
requiring that $\epsfloor{\Pr_{\mu}[a_1\succ a_2\succ\cdots\succ a_m]}=\epsfloor{\epsppfloor{\Pr_{\mu}[a_1\succ a_2\succ\cdots\succ a_m]}}$;
\item
for all $n\in\NN$ and all $a\in X'$, the (finite) formula
`$\prob{n}{(a)}{1}$',
requiring that the rounded-down-to-$\varepsilon_n$ probability of the ordering $a$ is $1$;
\item
for all $n\in\NN$, all $m\in\NN$, and all $(m\!+\!1)$-tuple $(a_1,\ldots,a_m,a)$ of distinct items from $X'$, the (finite) formula
\[
\smashoperator[r]{\bigvee_{\substack{p,p_0,\ldots,p_m\in V_n \\ \text{s.t. } \sum_{i=1}^{m+1}p_i\in[p-(m+1)\cdot\varepsilon_n,p]}}}~~~~~~\left(\prob{n}{(a_1,\ldots,a_m)}{p}\wedge\bigwedge_{i=0}^m\prob{n}{(a_1,\ldots,a_i,a,a_{i+1},\ldots,a_m)}{p_i}\right),
\]
requiring that---up to rounding errors---the second condition of \cref{marginals} hold for every ``rounding-down of $\mu$'';
\item
for all $n\in\NN$ and all $(A,x)$ in the domain of $P$, the (finite) formula
\[
\smashoperator[r]{\bigvee_{\substack{p_1,\ldots,p_{(|A|-1)!}\in V_n \\ \text{s.t. } \sum_{i=1}^{(|A|-1)!}p_i\in\\ [P(A,x)-(|A|-1)!\cdot\varepsilon_n\,,\,P(A,x)]}}}~~~~~~~~\bigwedge_{i=1}^{(|A|-1)!}\prob{n}{(x,\underbrace{{\scriptstyle a_1,\ldots,a_{|A|-1}}}_{\clap{$\substack{\text{$i$th permutation}\\\text{of $A\setminus\{x\}$}}$}})}{p_i},
\]
requiring that---up to rounding errors---every ``rounding-down of $\mu$'' rationalize $P$.
\end{enumerate}
We now argue that $(\formulae{P})_{P\in\mathcal{P}}$ is a well description of~$\mathcal{P}$. Let $P\in\mathcal{P}$ and let $X'\subseteq X$ be the set of elements that appear in $P$.

We first claim that every model that satisfies $\formulae{P}$ corresponds to a solution for~$P$. Fix a model for $\formulae{P}$. For every $m\in\NN$, every $m$-tuple $\bar{a}$ of distinct items from $X'$, and every $n\in\NN$, let $p_n\in V_n$ be the probability such that $\prob{n}{\bar{a}}{p_n}$ is $\mathsf{True}$ in the model (well defined by the first and second formula-types above), and define $p_{\bar{a}}=\lim_{n\rightarrow\infty} p_n$ (well defined, e.g., by the third formula-type above since $\prob{n}{\bar{a}}{p_n}$ is a Cauchy sequence). The resulting probabilities $p_{\bar{a}}$ satisfy the second condition of \cref{marginals} (by the fourth and fifth formula-types above), and hence there exists a probability measure $\mu$ over the space of total orders over $X'$ that induces these probabilities. Since $\mu$ induces these probabilities, then by the sixth formula-type above, $\mu$ rationalizes $P$.

Second, if $P$ has a solution $\mu$, then using it we can construct a model for $\formulae{P}$ (by setting each $\prob{n}{\bar{a}}{p}$ to be $\mathsf{True}$ iff $p=\epsfloor{\Pr_{\mu}[a_1\succ a_2\succ\cdots\succ a_m]}$), and so $\formulae{P}$ has a model. To sum up, $(\formulae{P})_{P\in\mathcal{P}}$ is a well description of~$\mathcal{P}$.

\proofstep{Finite-subset property}
Let $P\in\mathcal{P}$. Let $\Phi'\subset\formulae{P}$ be a finite subset.
Since $\Phi'$ is finite, it ``mentions'' (through variables used) only finitely many elements of~$X$; denote the set of these elements by $X'\subset X$.
Let $P'\eqdef\bigl\{(A,x)\in P~\big|~ A\subseteq X'\bigr\}$. By definition, $\Phi'\subseteq\formulae{P'}$. Furthermore, $P'$ satisfies the Axiom of Revealed Stochastic Preference since any sub-dataset of $P$ satisfies this axiom, and hence, by \cref{stoch-finite}, $P'$~is rationalizable.
Therefore, $P$~satisfies the finite-subset property. Thus, by \cref{scaling-lemma}, $P$~is rationalizable.
\end{proof}

\subsection{Further Remarks and Limitations}\label{sec:afriat-limitat}

In this section, we demonstrated the versatility of our approach across several revealed-preference settings. Our approach can be used to scale many additional finite-data results to encompass infinite datasets. For example, it may allow to scale results such as those of \citet{yusufcan2020random} on the existence of random attention representation, or \citet{filiz2020progressive} on progressive random choice.\footnote{We thank Yusufcan Masatlioglu for proposing these applications.}
In addition to scaling a wide array of finite-data results to encompass infinite datasets, our approach can also be used to adapt finite-data rationalization results to support parametric restrictions, as in \citet{hu2021theory}, since such restrictions often translate into infinitely many constraints. Our approach, however, is not without limitations. We conclude this section with some remarks on limitations of proofs presented throughout the section. A more high-level discussion of settings in which our approach is not applicable is provided in \cref{non-applications}.

Afriat's theorem (\cref{garp-finite}) guarantees that finite demand datasets satisfying GARP can be rationalized using a concave utility function. But, there are well-known examples of quasiconcave utility functions whose full (infinite) demand dataset (which satisfies GARP since it is derived from the choices of a utility function) cannot be rationalized using a concave utility function. In \cref{garp}, we reproved the main result of \citet{reny2015} that unified these settings: any demand dataset, finite or infinite, that satisfies GARP can be rationalized using a quasiconcave utility function (\cref{garp-infinite}).

By \cref{scaling-lemma}, the existence of a counterexample, together with the correctness of Afriat's theorem for finite datasets, implies that the existence of a concave rationalizing utility function (as guaranteed by Afriat's theorem) has no well description that satisfies the finite-subset property. This might seem puzzling since a simple modification to the fourth formula type in our proof (which imposes quasiconcavity) can be used to impose concavity (as in our similar scaling proof in \cref{inattention}), and so should seemingly result in a well description as required. The answer to this puzzle is that this well description does not, in fact, satisfy the finite-subset property. Specifically, the sixth formula type in our proof makes a stronger monotonicity requirement than the monotonicity that is guaranteed by Afriat's theorem, and therefore Afriat's theorem cannot be used to show that the finite-subset property required by \cref{scaling-lemma} holds.

To address this issue, a natural approach would be to change the monotonicity requirement that we use to require only strict monotonicity, as guaranteed by Afriat's theorem. But, it is not possible to well-describe strict monotonicity with our variables (since strict inequalities are not preserved in the limit). Our way around this limitation was to make a stronger requirement that is well-describable. But in order to use Afriat's theorem to show that the finite-subset property holds, we had to relax the concavity requirement (recall that our proof applied monotonic transformations to the utility function; while these transformations do not preserve concavity, they do preserve quasiconcavity). We note that while this may appear to be an artefact of using \cref{scaling-lemma}, the existence of the abovementioned counterexample guarantees that no other approach could circumvent this issue. The tradeoff between strengthening monotonicity and weakening concavity, so that well describability and the finite-subset property are satisfied, sheds some new light on what breaks in the infinite case, which at first glance might look like an issue with concavity, but at a deeper look reveals itself as an issue with strict monotonicity. This affords some degree of intuition for ``why'' the concavity assumption in \cref{garp-finite} must be relaxed to quasiconcavity when scaling it to infinite datasets.

We note that when we use a very similar proof in \cref{inattention} to scale a result by \citet{caplin2015revealed}, we do require concavity rather than merely quasiconcavity. This is possible because in that result only weak (rather than strict) monotonicity (in information) is required, which can be well described without being strengthened. Contrasting these two proofs provides yet another example of the power of our approach to very tangibly pinpoint why certain conditions can be maintained when some theorems are scaled but not when others are.

\cref{garp-infinite} illustrates some of the limitations of our framework. A Propositional Logic formulation precludes the use of quantifiers (e.g., ``\emph{there exists} a positive gap by which the utility from $\bar{q}_2$ is greater than the utility from $\bar{q}_1$'') as well as precludes infinitely long formulae such as infinite disjunctions (e.g., ``the utility from $\bar{q}_2$ is greater than the utility from $\bar{q}_1$ by at least \emph{one of the following infinitely many} positive gaps''). This prohibits the well description of certain properties of interest (e.g., strict monotonicity, unless strengthened) without the use of variables that refer to infinitely many objects. But such variables oftentimes hinder the ability to invoke finite theorems to show that the finite-subset property holds.  

The requirement of strict monotonicity underlies another well-known counterexample. While a strict preference order over a finite set of objects can always be represented by a utility function, the same need not be true when the set of objects is uncountable.\footnote{For example, lexicographic preferences over $\mathbb{R}^2$ or any strict preference order over $2^\mathbb{R}$.} Accordingly, any attempt to use our strengthened monotonicity requirement to scale the finite case is of course bound to fail when the set of objects is uncountable. It is instructive to consider how it would fail. Recall that our strengthened monotonicity requires fixed positive gaps between various utility values. In the case of \citeauthor{afriat67}'s theorem, requiring countably many such gaps sufficed, and hence the required gap lengths could be chosen so that their sum is finite. By contrast, scaling the existence of a utility representation for strict preferences to uncountable sets would involve requiring uncountably many positive gaps. This means that the sum of lengths of required gaps would be infinite, and so some objects would not be associated with a finite utility. 

\section{Matching Theory}\label{matching}
In this section, we apply our framework to analyze matching markets. First, in \cref{matching-warmup} we scale \citeauthor{GaleShapley1962}'s classic result on the existence of a stable matching in finite marriage markets to cover the infinite case. 
While one can prove this result using other methods \cite[see][]{fleiner2003fixed,jagadeesan2017lone2},
 in  \cref{couples}  we demonstrate how to utilize the same argument from \cref{matching-warmup}, with minimal changes, to prove a novel existence result for a more complex setting in which the finite case has been analyzed using completely different tools, which render the proof techniques of \citet{fleiner2003fixed} and \citet{jagadeesan2017lone2} inapplicable: we scale \citeauthor{nguyen2018couples}'s (\citeyear{nguyen2018couples}) near-feasibility result for stable matching with couples to encompass infinite markets, and conditionally scale a conjectured strengthened version of this result.  
In \cref{walrasian} we show how to handle non-discrete objects: we scale the \cite{hatfield2013stability} result on the existence of Walrasian equilibria in trading networks to cover infinite networks. Finally, in \cref{sp} we scale the strategy-proofness of the man-optimal stable matching mechanism in two-sided matching to infinite markets,\footnote{In our infinite markets agents maintain their mass, which is particularly economically appealing as strategic issues remain in full force. In continuum-limit market models, by contrast, strategic issues often disappear as agents become measure-zero \cite[see, e.g.,][]{AzevedoBudish2018}.} resolving a standing open question;
in \cref{menopt} we use essentially the same proof to reprove the existence of the man-optimal stable matching in infinite two-sided matching markets.\footnote{Subsequent to our work, \citet{choi20} has shown how to use our proof technique to scale other structural results with similar statements.} In these proofs, we use specialized tools to prove that the finite-subset property holds.

\subsection{Warm Up: Stable Marriage}\label{matching-warmup}

We begin with the simplest possible matching market setting: a one-to-one two-sided ``marriage'' matching market. Such a market is represented by a quadruplet \simplemarket, where $M$ is a (possibly infinite\footnote{While (as we describe soon) we must require that each agent finds at most countably many agents acceptable, we make no assumptions on the cardinality of the set of agents.}) set of men, $W$ is a (possibly infinite) set of women, and $\prefs{M}$ is a profile of preference lists for the men over the women consisting, for each man $m\in M$, of a linearly ordered preference list of women that either is finite, or specifies man $m$'s $n$th-choice woman for every $n\in\NN$. Any woman on $m$'s list is considered to be \emph{acceptable} to $m$, i.e., preferred by $m$ over being unmatched, while any woman not on $m$'s list is considered unacceptable to $m$. Similarly, $\prefs{W}$ is a profile of preference lists for the women over the men. A (one-to-one, not necessarily perfect) \emph{matching} between $M$ and $W$ is a pairwise-disjoint set of man-woman pairs. A \emph{blocking pair} with respect to a matching $\mu$ is a man-woman pair $(m,w)$ such that $m$ prefers $w$ to his partner in $\mu$ (or, if he is unmatched in $\mu$, prefers $w$ to being unmatched) and $w$ prefers $m$ to her partner in $\mu$ (or, if she is unmatched in $\mu$, prefers $m$ to being unmatched). 

A matching $\mu$ is called \emph{stable} if (1) under $\mu$, no participant is matched to a partner he or she finds unacceptable (individual rationality), and, (2) there are no blocking pairs with respect to $\mu$.
A classic result of \citeN{GaleShapley1962} shows that   stable matchings exist for any \textsl{finite} matching market in the setting just described.

\begin{theorem}[\citealp{GaleShapley1962}]\label{stable-finite}
In any finite, one-to-one two-sided matching market, a stable matching exists.
\end{theorem}

As a warm-up, we use \cref{scaling-lemma} to scale \cref{stable-finite} to infinite markets. Previous studies have established the existence of stable matchings in infinite, one-to-one, two-sided matching markets via a fixed-point argument, or---for a special case---via an infinite variant of \citeauthor{GaleShapley1962}'s algorithm (see \citeN{fleiner2003fixed} and \citeN{jagadeesan2017lone2}, respectively).

\begin{theorem}[\citealp{fleiner2003fixed}]\label{stable-infinite}
In any (possibly infinite) one-to-one two-sided matching market, a stable matching exists.
\end{theorem}

\begin{proof}
We prove the theorem using \cref{scaling-lemma}.

\begin{sloppypar}
\proofstep{Definition of $\mathcal{P}$}
Let $\simplemarket$ be a matching market. Let $\mathcal{P}\eqdef\bigl\{(M',W')~\big|~M'\subseteq M \And W'\subseteq W\bigr\}$ be the set of all pairs of subsets of men and subsets of women. A \emph{solution} for $(M',W')\in\mathcal{P}$ is a stable matching between $M'$ and $W'$ (in the induced submarket of $\simplemarket$, i.e., in the market $(M',W',\rprefs{M'}{W'},\rprefs{W'}{M'})$, where $\rprefs{M'}{W'}$ denotes the preference lists of $M'$ induced by $\prefs{M}$, restricted to $W'$, and $\rprefs{W'}{M'}$ is defined analogously). 
\end{sloppypar}

\proofstep{Well describability}
We define a variable $\matched{m}{w}$ for every  $(m,w)\in M\times W$.
In what follows, for each $(M',W')\in\mathcal{P}$ we define a set $\formulae{(M',W')}$ of formulae over these variables so that models (over the variables that appear in $\formulae{(M',W')}$) of $\formulae{(M',W')}$ are in one-to-one correspondence with the (not-yet-proven-to-be-nonempty) set of stable matchings between $M'$ and $W'$.
The correspondence is obtained by endowing the variable $\matched{m}{w}$ with the semantic interpretation ``$m$ and $w$ are matched.'' That is, it maps a model for $\formulae{(M',W')}$ to the matching such that for every $(m,w)\in M'\times W'$, we have that $m$ and $w$ are matched if and only if the variable $\matched{m}{w}$ is $\mathsf{True}$ in that model.
We define the set $\formulae{(M',W')}$ to consist of the following formulae:
\begin{enumerate}
\item
for all $m\in M'$ and all distinct $w,w'\in W'$, the formula
`$\matched{m}{w}\rightarrow\lnot\matched{m}{w'}$',
requiring that $m$ be matched to at most one woman;
\item
for all $w\in W'$ and all distinct $m,m'\in M'$, the formula `$\matched{m}{w}\rightarrow\lnot\matched{m'}{w}$',
requiring that $w$ be matched to at most one man;
\item
for all $m\in M'$ and $w\in W'$ that are not both acceptable to each other, the formula `$\lnot\matched{m}{w}$',
requiring that no one be matched to someone who is unacceptable to them;
\item
for all $m\in M'$ and $w\in W'$ that are acceptable to each other, the formula
$`\lnot\matched{m}{w}\rightarrow \left(\bigvee_{w'\succ_{m}w}\matched{m}{w'}\right)\vee \left(\bigvee_{m'\succ_{w} m}\matched{m'}{w}\right)$',
requiring that $(m,w)$ not be a blocking pair. (Note that for this formula to hold either the left-hand side must be $\mathsf{False}$, i.e., $m$ and $w$ must be matched, or the right-hand side must be $\mathsf{True}$, i.e., one of $m$ and $w$ must be matched to someone they prefer.) This formula is finite, even if the preference lists of $w$ or $m$ are infinite, since $m$ (resp.\ $w$)  prefers only $k$ partners over his (resp.\ her) $k$-th ranked partner.\footnote{Our assumption of a preference list being of the order type of the natural numbers is what allows us to express stability via individually finite formulae---as required for a well description, so that \cref{scaling-lemma} is applicable. \citeN{fleiner2003fixed} studied a model with infinite preference lists of more general order types (beyond which stable matchings are known not to exist), under which such a construction would not be possible.}
\end{enumerate}

By construction, the models (over the variables that appear in $\formulae{(M',W')}$) that satisfy all the formulae of the first and second type above are in one-to-one correspondence with one-to-one matchings between $M'$ and $W'$. Furthermore, the models that satisfy $\formulae{(M',W')}$ are in one-to-one correspondence with stable matchings between $M'$ and $W'$. As noted above, the crux of our argument is that we were able to characterize stability using \emph{individually finite} formulae over our variables. Therefore, $(\formulae{(M',W')})_{(M',W')\in\mathcal{P}}$ is a well description of~$\mathcal{P}$.

\proofstep{Finite-subset property}
Let $\Phi'\subset\formulae{(M,W)}$ be a finite subset.
Since $\Phi'$ is finite, it ``mentions'' (through variables used) only finitely many men and women; denote the sets of these men and these women by $M'\subset M$ and $W'\subset W$, respectively.
By definition, $\Phi'\subseteq\formulae{(M',W')}$. By \cref{stable-finite}, there exists a stable matching between $M'$ and $W'$. Therefore, $(M,W)$~satisfies the finite-subset property. Thus, by \cref{scaling-lemma}, there exists a stable matching between $M$ and~$W$, as required.
\end{proof}

\subsubsection{A Cautionary Tale of a Non-Proof}\label{cautionary}

In constructing a well description,
one has to be careful beyond making sure that each logical formula is finite. Consider, for example, the following ``alternative'' to the fourth formula-type in the preceding proof:
\begin{enumerate}
\item[4'.]
for all distinct $m,m'\in M'$ and distinct $w,w'\in W'$ such that $m$ prefers $w$ to $w'$ and $w$ prefers $m$ to $m'$, the formula
`$\lnot\bigl(\matched{m}{w'}\wedge\matched{m'}{w}\bigr)$'.
\end{enumerate}
This seems to similarly preclude the possibility of $m$ and $w$ being a blocking pair, and is certainly simpler than the corresponding formula-type used in the proof. Nonetheless, upon closer inspection, we see that there is a trivial model that satisfies all formulae if this alternative formula-type is used: the model in which $\matched{m}{w}$ is $\mathsf{False}$ for \emph{every} $m$ and~$w$. So, it is not clear that an existence of a model for this set guarantees the existence of a stable matching (that is, without relying on the correctness of \cref{stable-infinite}). Indeed, we have crafted the original fourth formula-type to also preclude the possibility of this model. One may be tempted to attempt to ``fix'' this, for example by introducing a formula-type that says that each participant must be matched, or by adding a formula-type analogous to this alternative formula-type for the case where $m$ is unmatched (rather than matched to $w'$). However, any such ``fix'' requires expressing the concept ``$m$ is unmatched'' in our formulae, which requires an infinite disjunction that cannot be expressed via individually finite formulae.\footnote{Any attempt to circumvent this by adding a variable that is $\mathsf{True}$ if and only if $m$ is unmatched similarly fails, as forcing this variable to be $\mathsf{True}$ if all $\matched{m}{w}$ are $\mathsf{False}$ again requires an infinite disjunction.}

\subsection{Applicability of the Same Proof to Other Settings:\texorpdfstring{\\}{ }Near-Feasible Stable Matching with Couples}\label{couples}

A notable strength of our approach is that it is agnostic to the methods used to prove the finite result being scaled. Therefore, the same proof can be used to scale similar statements even when the finite-case proofs of these statements hinge on very different tools. As an illustration, 
we use essentially the same proof as in \cref{matching-warmup} to scale to infinite markets the result of \citet{nguyen2018couples} on existence of near-feasible stable matching with couples, despite their proof using an approach that is quite different from those used to prove \cref{stable-finite}.

Following \citeN{nguyen2018couples}, we study the standard matching with couples model \citep[e.g.,][]{Roth1984rural,KlausKlijn2005,KPR2010,ashlagi2011stability}. In this model, a market is a tuple $\couplesmarket$, where $D$ is the disjoint union of the set of single doctors, $D^1,$ and all the doctors in the set $D^2$ of couples of doctors; $H$ is a set of hospitals, $k=\{k_h\}_{h\in H}$ is a vector of hospital capacities; and $\prefs{H}$ is a profile of rankings for the hospitals over the doctors consisting, for each hospital $h\in H$, of a linearly ordered ranking over doctors that either is finite, or specifies hospital $h$'s $n$th-ranked doctor for every $n\in\NN$. Hospitals preferences are \emph{responsive}---from any set of available doctors they choose the highest-ranked ones up to the hospital's capacity (always rejecting unranked doctors). $\prefs{D}=\left(\prefs{D^1},\prefs{D^2}\right)$, is a profile of doctor preference lists. Single doctors' preference lists, $\prefs{D^1}$, are defined as in \cref{matching-warmup}. For couples, $\prefs{D^2}$ is a linearly ordered preference list over ordered pairs in $(H\cup\{\emptyset\})\times(H\cup\{\emptyset\})\setminus\{(\emptyset,\emptyset)\}$, representing the assignment of the first, and second, member of the couple, that the couple prefer to both being unmatched. Each such preference list either is finite, or specifies the couple's $n$th-ranked hospital pair for every $n\in\NN$.

Given a matching-with-couples market and a vector of capacities, $k^*$, a matching is \emph{individually rational with respect to the capacities $k^*$} if no single doctor is assigned to an unacceptable hospital, couples are assigned to $(h,h')$ which they weakly prefer to $(h,\emptyset)$, $(\emptyset,h')$, and $(\emptyset,\emptyset)$, and each hospital $h$ is assigned no more than $k^*_h$ doctors, all of whom are ranked by $\prefs{h}$. A matching $\mu$ is \emph{blocked with respect to the capacities $k^*$} if one of the following holds: (1) there exists a single doctor, $d\in D^1$, and a hospital $h$, such that $d$ prefers $h$ to $\mu(d)$  and $h$'s most preferred subset  of $\mu(h)\cup \{d\}$, subject to $k^*_h$, includes $d$; (2) there exists a couple $c\in D^2$ and a hospital $h$, such that the couple prefers $(h,h)$ to $\mu(c)$ and $h$ would select both members of the couple as above; or, (3) there exists a triple $(c,h,h')\in D^2 \times \left( H \cup \{\emptyset\} \right) \times \left( H \cup \{\emptyset\} \right)$ with $h\ne h'$ such that the couple prefers $(h,h')$ to $\mu(c)$ and each of the hospitals chooses the respective member of the couple from the set as above. Finally, a matching is \emph{stable with respect to the capacities $k^*$} if  it is individually rational and not blocked. 

\begin{theorem}[\citealp{nguyen2018couples}]\label{couples-finite}
In any finite, many-to-one matching market with couples with capacity vector $k$, there exists a capacity vector $k^*$ with $|k_h-k^*_h|\le2$ for every $h\in H$ and $\sum_{h\in H}k_h\le \sum_{h\in H}k^*_h\le \sum_{h\in H}k_h+4$ such that a stable matching w.r.t.\ $k^*$ exists.
\end{theorem}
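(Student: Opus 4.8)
The plan is to derive \cref{couples-finite} from \citeauthor{scarf1967core}'s Lemma via a fractional relaxation followed by a capacity-perturbing rounding. First I would set up the relaxation. Introduce one ``contract'' for each feasible assignment: a single doctor $d\in D^1$ to an acceptable hospital $h$, and a couple $c\in D^2$ to an acceptable ordered pair $(h,h')\in(H\cup\{\emptyset\})^2\setminus\{(\emptyset,\emptyset)\}$. A (possibly fractional) matching is a nonnegative weight $x_e$ on each contract $e$, subject to the constraints that each doctor/couple carries total weight at most $1$ and each hospital $h$ carries total weight at most $k_h$, where a couple-contract contributes to the load of each hospital it names. This defines a bounded polytope $P_k=\{x\ge 0 : Ax\le b(k)\}$, with $b(k)$ stacking the unit bounds and the capacities. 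Alongside $A$ I would build an ordinal matrix $C$ recording, in each doctor/couple row and each hospital row, that agent's ranking over the contracts it participates in (with the ``unmatched''/slack options ranked worst). The purpose of the encoding is that a \emph{dominating} basic feasible solution of $(A,b(k),C)$ in Scarf's sense—no contract being preferred by all of its participants to their current assignments—is exactly a \emph{fractional stable matching} with respect to $k$.

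Second, I would apply \citeauthor{scarf1967core}'s Lemma to $(A,b(k),C)$. Since $P_k$ is bounded and $A$ is nonnegative with the required slack-column structure, the Lemma yields a dominating vertex $x^*$ of $P_k$. By the correspondence just described, $x^*$ is a fractional stable matching relative to the true capacities $k$. The only difficulty is that $x^*$ need not be integral: unlike in the couple-free case, the constraint matrix $A$ is \emph{not} totally unimodular, so $P_k$ may have fractional vertices.

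Third—and this is the main obstacle—I would round $x^*$ to an integral matching $\mu$ while perturbing capacities as little as possible, and in a way that preserves stability. The key leverage is that $x^*$ is a vertex, so the number of its strictly fractional coordinates is bounded by the number of tight constraints; the fractional part therefore lives on a sparse sub-structure. Viewing each strictly-fractional couple-contract as an edge joining the two hospital-constraints it touches (and each fractional single-contract as a pendant edge) produces a graph whose edges can be rounded to $\{0,1\}$ along its paths and cycles by an iterative-rounding / discrepancy argument of Beck--Fiala type. Because a couple-contract appears in at most two capacity rows (column sum $2$), rounding alters any one hospital's realized load by at most $2$, giving $|k_h-k^*_h|\le 2$; and a global count of the total fractional mass that must be absorbed—controlled by the cyclomatic structure of the sparse support graph—bounds the aggregate perturbation, yielding $\sum_h k_h\le\sum_h k^*_h\le\sum_h k_h+4$, with the total only increasing because rounding up is what restores integral feasibility. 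Pinning down these exact constants, while simultaneously ensuring that the rounding creates no new blocking contract, is the delicate combinatorial heart of the argument.

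Finally, I would verify that $\mu$ is stable with respect to $k^*$. Individual rationality transfers directly, since $\mu$'s contracts are drawn from the support of the individually-rational solution $x^*$. For the no-blocking condition, the rounding in the previous step must be carried out along the fractional support only, so that every hospital's integral load and each agent's integral assignment are weakly preferred to the fractional ones under $x^*$; a block of $\mu$ with room under $k^*$ would then pull back to a fractional block of $x^*$, contradicting the domination guaranteed by \citeauthor{scarf1967core}'s Lemma. Hence $\mu$ is stable with respect to $k^*$, which establishes \cref{couples-finite}.
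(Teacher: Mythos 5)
You should first note a framing issue: the paper does not prove \cref{couples-finite} at all. It imports the theorem verbatim from \citet{nguyen2018couples} and uses it as a black box inside the compactness argument for \cref{couples-infinite}; indeed, the paper emphasizes (in \cref{couples}) that its method is deliberately agnostic to how the finite-market result is proved. So there is no in-paper proof to compare against, and the right benchmark for your sketch is Nguyen and Vohra's original argument---whose route you have in fact reconstructed faithfully at the level of architecture: encode single-doctor and couple assignments as contracts, set up the constraint matrix with an ordinal matrix for \citeauthor{scarf1967core}'s Lemma, extract a dominating vertex (a fractional stable matching with respect to $k$), and then round to an integral matching at the cost of a small capacity perturbation.

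The genuine gap is your third step, which is asserted rather than proved, and it is exactly where the theorem lives. A generic Beck--Fiala/discrepancy-style rounding over the sparse fractional support does two things for you---bounds the per-row load change by the column sparsity and exploits vertex sparsity---but it delivers neither of the two claims you need. First, it does not yield the aggregate bound $\sum_h k_h\le\sum_h k^*_h\le\sum_h k_h+4$: a ``cyclomatic structure'' count is a gesture, not an argument, and rounding independently along each path or cycle of the support graph could in principle perturb the total capacity by an amount growing with the number of fractional components, so the constant $4$ must come from a global analysis. Second, and more fundamentally, discrepancy rounding does not preserve stability. The domination property of $x^*$ is not inherited by an arbitrary integral point supported on $\mathrm{supp}(x^*)$: once you zero out a couple-contract carrying fractional weight, that couple may block with hospitals whose realized loads dropped under the rounding, and your ``pull back to a fractional block of $x^*$'' step silently assumes the rounded solution remains dominating with respect to the \emph{perturbed} capacities---a property that must be engineered as an invariant, not observed after the fact. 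This is precisely what \citet{nguyen2018couples} do: an iterative rounding procedure that alternately fixes variables and adjusts capacities so that a domination-type condition is maintained at every step, with the constants $2$ (per hospital) and $4$ (aggregate) emerging from that analysis. As written, your proposal is a correct roadmap of the known proof with its combinatorial heart left open.
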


Despite the fact that \cref{stable-finite,couples-finite} have very different proofs (the former uses deferred acceptance; the latter uses \citeauthor{scarf1967core}'s Lemma), our proofs of their infinite extensions are remarkably similar. As in the one-to-one stable case, we want to describe the solution concept of near-feasible stability with couples via a set of individually finite formulae and then use \cref{scaling-lemma} to take the existence result for finite markets (\cref{couples-finite}) and scale it to infinite markets. To make sure that each logical formula that we come up with is indeed finite, we need a technical assumption on the preferences of couples in the market.

\begin{definition}
We say that a preference order for a couple over $(H\cup\{\emptyset\})^2\setminus\{(\emptyset,\emptyset)\}$ is \emph{downward closed} if for every pair of (actual, i.e., not $\emptyset$) hospitals $(h,h')\in H^2$ ranked by the order, both $(h,\emptyset)$ and $(\emptyset,h')$ are also ranked by the order.
\end{definition}

\begin{theorem}\label{couples-infinite}
In any (possibly infinite) matching market with couples with capacity vector $k$, if the preference of each couple is downward closed,\footnote{Alternatively, we can replace the requirement that couples' preference lists are downward closed with the requirement that they are finite, and essentially the same proof would go through.} then there exists a capacity vector $k^*$ with $|k_h-k^*_h|\le2$ for every $h\in H$ such that a stable matching w.r.t.\ $k^*$ exists.
\end{theorem}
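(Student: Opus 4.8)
The plan is to mimic the Compactness template established in \cref{simplest,menopt,sp}, but now encoding a stable matching \emph{together with its (perturbed) capacity vector} into the logical model. For each hospital $h\in H$, the perturbed capacity $k^*_h$ will lie in the finite set $\{k_h-2,k_h-1,k_h,k_h+1,k_h+2\}$, so I would introduce variables $\signed{(h,j)}$ recording whether $h$ is assigned a $j$-th doctor (for $j$ up to $k_h+2$), capacity-selector variables indicating the chosen value of $k^*_h$, and assignment variables $\signed{(d,h)}$ (and paired versions for couples) recording who is matched to whom. The key point that makes every formula \emph{individually finite} is the downward-closure assumption: when a couple ranks $(h,h')$, it also ranks $(h,\emptyset)$ and $(\emptyset,h')$, so the set of couple-assignments strictly preferred to any given outcome that involve a \emph{fixed} hospital $h$ is finite (a hospital appears in only finitely many higher-ranked pairs once one coordinate is pinned down), which is exactly what is needed to write the blocking-pair-exclusion formulae as finite disjunctions, just as finiteness of $l,k$ was needed in \eqref{not-blocking}.

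First I would write down the feasibility formulae (each doctor matched to at most one hospital; each hospital filled in order up to its selected $k^*_h$; exactly one capacity value selected per hospital), the individual-rationality formulae (no doctor or couple matched to an unacceptable assignment, encoded as forced-FALSE atoms as in \eqref{individually-rational}), and three families of blocking formulae mirroring the three blocking conditions in the definition of stability with couples — one for single doctors, one for a couple wanting $(h,h)$, and one for a couple wanting $(h,h')$ with $h\ne h'$. Each blocking formula says, in the spirit of \eqref{not-blocking}, that if the relevant agents are not matched to the blocking outcome, then one of the involved hospitals must already be filled (up to its chosen capacity) with doctors it weakly prefers, or the doctor/couple must already hold a weakly preferred assignment. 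Downward closure guarantees each such formula ranges over a finite set of preferred alternatives and hence is finite. A model satisfying the whole set $\Phi$ then decodes to a choice of $k^*$ with $|k_h-k^*_h|\le 2$ together with a stable matching w.r.t.\ $k^*$.

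Then I would invoke Compactness: take a finite subset $\Phi'\subseteq\Phi$, let $D'$ and $H'$ be the finitely many doctors and hospitals mentioned, and restrict to the finite market on $(D',H')$ with capacities $k|_{H'}$ and induced (downward-closed) preferences. By \cref{couples-finite} this finite market admits a perturbed capacity vector $k^*$ with $|k_h-k^*_h|\le 2$ for every $h\in H'$ and a stable matching w.r.t.\ $k^*$; setting the capacity-selector and assignment variables according to this finite solution (and leaving unmentioned atoms FALSE) yields a model of $\Phi'$. Hence every finite subset is satisfiable, so by \cref{thm:compact} the whole of $\Phi$ is satisfiable, giving the desired $k^*$ and stable matching in the infinite market.

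The main obstacle I anticipate is \emph{not} the Compactness step itself but ensuring the blocking formulae for couples are genuinely finite and faithfully capture condition (3) of the stability definition — a couple's deviation to $(h,h')$ involves two hospitals simultaneously, and I must express ``$h$ would select the first member and $h'$ the second, given responsive preferences and the selected capacities'' without an infinite disjunction over $k^*$ or over the other doctors currently at $h,h'$. This is where downward closure does the real work: it bounds, for each fixed hospital, the preferred couple-assignments that need to be ruled out, converting what would otherwise be an infinite condition (``the couple is unmatched or holds something worse'') into a finite one. A secondary subtlety, as flagged in the \eqref{bad-not-blocking} discussion, is that I must phrase non-blocking in the positive form (the match is realized, or a preferred slot is already taken) rather than merely forbidding the bad pair, to avoid admitting the all-FALSE trivial model; I would take care to write each blocking formula with the match-or-preferred-alternative structure of \eqref{not-blocking}.
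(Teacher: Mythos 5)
Your proposal follows essentially the same route as the paper's proof in Appendix~\ref{app:couples}: capacity-selector atoms $\quota{h}{q}$ for $q\in\{k_h-2,\ldots,k_h+2\}$, matching atoms, blocking formulae conditioned on the selected capacity, and a Compactness step that restricts to the finitely many mentioned doctors and hospitals and invokes the finite theorem. The one detail worth tightening is how ``a couple member is unmatched'' is expressed: the paper introduces an auxiliary atom $\matched{c_2}{\emptyset}$ defined as a finite disjunction over the hospitals $h'$ with $(h,h')$ ranked---finite precisely because downward closure forces $(h,\emptyset)$ onto the couple's $\omega$-ordered list above every such pair---whereas your per-slot variables $\signed{(h,j)}$ would themselves need an infinite disjunction to pin down when $h$ ranks infinitely many doctors, so you should drop them in favor of the paper's direct ``no $(q{+}1)$-tuple all matched to $h$'' capacity formulae.
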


The well description that we build to prove \cref{couples-infinite} is conceptually similar to the one from our proof of \cref{stable-infinite}, but requires slight modification due to the many-to-one nature of the market and the presence of couples, as well as to the variability in capacity.
The idea is to have, as before, for every doctor $d\in D$ and hospital $h\in H$, a variable $\matched{d}{h}$ that will be $\mathsf{True}$ in a model if and only if $d$ and $h$ are matched in the matching corresponding to the model. But furthermore, for every hospital $h\in H$ with capacity $k_h$, we introduce five variables $\quota{h}{k_h-2},\quota{h}{k_h-1},\ldots,\quota{h}{k_h+2}$ such that $\quota{h}{q}$ is $\mathsf{True}$ in a model if and only if $k^*_h=q$ in the corresponding matching, and upon whose value each formula will be conditioned. So, for instance, for each $q\in\{k_h\!-\!2,\ldots,k_h\!+\!2\}$ and for every $q\!+\!1$ doctors we introduce a formula that says ``if the capacity of $h$ is $q$, then $h$ is not matched to these $q\!+\!1$ doctors.''
Except for the potential need to perturb the capacities, and the need to express couples' preferences and the absence of blocks involving couples, the proof runs along the same lines as that of \cref{stable-infinite}; we relegate the details to \cref{app:couples}.

\subsubsection{Conditional Scaling}

 \citet{nguyen2018couples} note that they do not know whether or not the guarantee from \cref{couples-finite} that $|k_h-k^*_h|\le2$ for every $h\in H$ can be improved to a guarantee that $|k_h-k^*_h|\le1$ for every $h\in H$. This provides an opportunity to point out that our framework is agnostic not only  to \emph{how} the finite-case theorem being scaled was proved (as already discussed), but furthermore, to \emph{whether} it has even been proved. Indeed, even absent a proof for the finite-case theorem,
our framework can yield conditional statements.
For example, a proof completely analogous to our proof of \cref{couples-infinite} also proves the following.

\begin{theorem}
If the conclusion of \cref{couples-finite} holds even if the guarantee that $|k_h-k^*_h|\le2$ for every $h\in H$ is replaced with a guarantee that $|k_h-k^*_h|\le1$ for every $h\in H$, then \cref{couples-infinite} also holds with the same change.
\end{theorem}

Hence, if the stronger version of the finite-case result of \cref{couples-finite} is proven in the future, this strengthening will immediately imply the analogous strengthening to the infinite-case result of \cref{couples-infinite} as well.

\subsection{Handling Nondiscrete Solution Concepts:\texorpdfstring{\\}{ }Walrasian Equilibria in Infinite Trading Networks}\label{walrasian}
We turn to matching in trading networks with transferable utility. We study (an infinite variant of) the trading-network framework of \citeN{hatfield2013stability} and show existence of a Walrasian equilibrium---the standard solution concept in trading network matching. 
A trading network is comprised of a (potentially infinite) set $I$ of agents.
A \emph{trade} $\omega$ transfers an underlying object, $\mathsf{o}(\omega)$, from a seller $\mathsf{s}(\omega)$ to a buyer $\mathsf{b}(\omega)$. 
We denote the set of potential trades by $\Omega$.
 For $i\in I$ we denote by~$\Omega_i$ the set of trades in which $i$ participates, namely, $\Omega_i\eqdef\bigl\{ \omega \in \Omega ~\big|~ i\in \{\mathsf{s}(\omega),\mathsf{b}(\omega)\}\bigr\}.$
 We assume that $\Omega_i$ is finite for every $i$---i.e., each agent is a party to finitely many (potential) trades (note that this implies that each agent is endowed with at most finitely many objects to trade, and has at most finitely many trading partners). 

Each agent's utility depends only on the trades that she executes (or does not execute), and the prices at which these trades are executed. 
Specifically, each agent $i$ is associated with a utility function that is quasilinear in prices and otherwise depends only on the set of trades $\Omega'_i \subseteq \Omega_i$ that are executed.

The ``trades'' terminology that we use highlights that in this model, objects are linked to specific trading partners---so ``car sold to Alice"  is a different object than ``car sold to Bob," even when the physical good that is traded in reality is the same car. (To rule out the possibility that the same car is traded to multiple people, the agent's utility function can assign value $-\infty$  to  executing ``car sold to Alice" and ``car sold to Bob" simultaneously.) Having clarified this issue, it is easier to think about the model in terms of objects from this point on.

Let $O$ denote the set of all objects.  Note that objects and trades are in one-to-one correspondence. For an object $o$, we let $\mathsf{t}(o)$ denote the trade associated with that object, so for each object $o$ we have $\mathsf{o}(\mathsf{t}(o))=o$, and for each trade $\omega$ we have $\mathsf{t}(\mathsf{o}(\omega))=\omega$.
For each agent~$i$ we denote by  $O_i\eqdef\bigl\{o\in O ~\big|~ i\in\{\mathsf{s}(\mathsf{t}(o)),\mathsf{b}(\mathsf{t}(o))\}\bigr\}$ the set of all objects that can be held by~$i$. (Note that $O_i$ is finite since $|O_i|=|\Omega_i|$.) Each $o\in O$ belongs to exactly two sets in $\{O_i\}_{i\in I}$.

We assume that for each $i$, the valuation $u_i(\cdot):2^{O_i}\rightarrow\RR\cup \{-\infty\}$ takes values in $\RR\cup\{-\infty\}$ (again, where we use $-\infty$ to model technological impossibilities such as selling the same car to multiple buyers).
Agent $i$'s valuation expresses the value of the objects that~$i$ consumes, or ``holds'' after the execution of trades (that is, the set of objects $\bigl\{\mathsf{o}(\omega)\mid\omega\in \Omega'_i \text{ and } \mathsf{b}(\omega)=i\bigr\} \cup \bigl\{\mathsf{o}(\omega)\mid\omega\notin \Omega'_i \text{ and } \mathsf{s}(\omega)=i\bigr\}$; see \citealp{hatfield2015full}).
We furthermore assume that in the absence of trade, $i$'s valuation is equal to $0$ (formally, $u_i\bigl( \{\mathsf{o}(\omega)\mid \mathsf{s}(\omega)=i\}\bigr)=0$).

For each agent $i$, let the \textit{demand correspondence} $D_i: p\in \RR^{O_i} \rightrightarrows 2^{O_i}$ be defined by
$D_i(p)=\argmax_{O'\subset O_i}\bigl(u_i(O')-\sum_{o\in O'}p_o\bigr)$.
The preferences of agent $i$ are \emph{(gross) substitutable} if
for all price vectors $p,p'\in\RR^{O_i}$ such that $\bigl|D_i(p)\bigr|=\bigl|D_i(p')\bigr|=1$ and $p\leq p'$,
if $o \in  D_i(p)$ then $o \in D_i(p')$ for each $o \in O_i$ such that $p_{o}=p'_{o}$.

A \emph{Walrasian equilibrium} consists of
	a profile of prices $p\in\RR^{O}$, and
	a partition $\{O'_i\}_{i \in I}$ of $O$,
	such that $O'_i \in D_i(p\restr_{O_i})$ for each $i\in I$.
\citeN{hatfield2013stability}   have shown that substitutable preferences suffice to guarantee the existence of a Walrasian equilibrium in finite trading networks.

\begin{theorem}[\citealp{hatfield2013stability}]\label{walrasian-finite}
Every finite trading network in which all agents have substitutable preferences has a Walrasian equilibrium. 
\end{theorem}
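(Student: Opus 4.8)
The plan is to establish \cref{walrasian-finite} as a discrete analogue of the Second Welfare Theorem: first produce a welfare-maximizing assignment of objects, and then show that substitutability forces the existence of a supporting price vector under which that assignment is an equilibrium. Because the market is finite, the feasible outcomes are exactly the trade vectors in $\{0,1\}^{\Omega}$ --- for each trade, either it is executed (the associated object is held by the buyer) or it is not (the object is held by the seller) --- and each such vector induces a partition $\{O'_i\}_{i\in I}$ of $O$ with $O'_i\subseteq O_i$. I would first fix a partition $\{O'_i\}$ maximizing total utility $\sum_{i\in I}u_i(O'_i)$; the maximum is attained since there are finitely many trade vectors, and bundles on which some $u_i=-\infty$ are never part of an optimum because the no-trade allocation already yields value $0$.

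The engine of the argument is the equivalence, due to Fujishige and Yang, between gross substitutability and $M^{\natural}$-concavity of each quasilinear valuation $u_i$. I would then invoke discrete convex analysis: because each object $o\in O$ lies in exactly two of the sets $O_i$ (its seller's and its buyer's), the welfare-maximization program is an $M^{\natural}$-concave assignment/network problem whose linear-programming relaxation has no integrality gap and whose optimal dual solution assigns a real number $p_o$ to each object. Intuitively, $p_o$ is the Lagrange multiplier on the constraint that $o$ be held by exactly one of its two potential holders. This step --- establishing that substitutability is precisely the structural condition guaranteeing integrality of the relaxation, and hence the existence of supporting prices --- is where I expect the real work to lie, and it is exactly what separates substitutable preferences from general ones (for which Walrasian equilibria need not exist).

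With an optimal allocation $\{O'_i\}$ and dual prices $p\in\RR^{O}$ in hand, I would close the argument by complementary slackness. The local-optimality conditions characterizing maximizers of $M^{\natural}$-concave functions translate, under quasilinearity, into the statement that each agent's held bundle $O'_i$ maximizes her payoff at the prices $p$; that is, $O'_i\in D_i(p)$ for every $i$. Since $\{O'_i\}$ is by construction a partition of $O$, the market clears, and therefore $(p,\{O'_i\})$ is a Walrasian equilibrium, as desired.

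Two alternative routes reach the same conclusion and could replace the discrete-convexity step. One is an ascending-auction/tatonnement argument in the spirit of Kelso--Crawford and Gul--Stacchetti: start from prices so low that demand exceeds supply everywhere, then repeatedly raise the prices of over-demanded objects; gross substitutability guarantees that this monotone process converges to a market-clearing price vector. The other --- the route \citet{hatfield2013stability} themselves take --- is to show that Walrasian equilibria coincide with stable outcomes of the network and to obtain a stable outcome as a fixed point of an isotone choice operator via Tarski's fixed-point theorem on the lattice of available trades. In each version the substitutes condition does the same job: it supplies the lattice/convexity structure without which supporting prices, and hence equilibria, may fail to exist.
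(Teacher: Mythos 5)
You should first be aware that the paper does not prove \cref{walrasian-finite} at all: it is imported verbatim from \citet{hatfield2013stability} and consumed as a black box inside the proof of \cref{omegaeps-walrasian}. That is the entire point of the paper's methodology---the finite-market result is the input, not something to be re-derived---so there is no in-paper argument to compare yours against; the ``proof'' the paper intends here is a citation.

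Taken on its own terms, your sketch follows a legitimate and recognizable route (welfare maximization plus duality/complementary slackness), but it is a plan rather than a proof, because the step you yourself flag as ``where the real work lies'' is essentially the content of the theorem. Fujishige--Yang gives $M^\natural$-concavity of each \emph{individual} valuation $u_i$ on $2^{O_i}$; what you then need is that the \emph{multi-agent} welfare program---maximize $\sum_i u_i(O'_i)$ over partitions in which each object is held by one of its two potential holders---has an integral optimum supported by a single price vector. That is not a routine consequence of single-agent $M^\natural$-concavity: a sum of three or more $M^\natural$-concave functions need not be $M^\natural$-concave and the naive relaxation can in general have an integrality gap, and in a trading network each agent is simultaneously a buyer of some objects and a seller of others, so one must first ``twist'' the valuations (flipping price signs on sell-side objects) and recast the problem as one---e.g., an M-convex submodular flow---to which discrete-convexity duality applies. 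That reduction is itself the substance of an existence proof (it is what the later network-flow and full-substitutability treatments actually carry out), so your proposal defers the key obligation rather than discharging it. Two smaller points: your handling of the $-\infty$ bundles via the no-trade normalization is fine; but your description of the original argument is reversed---\citet{hatfield2013stability} establish existence of competitive equilibria first, by reduction to a Kelso--Crawford-style economy, and then \emph{derive} existence of stable outcomes from that, rather than obtaining equilibria from a Tarski fixed point on stable outcomes.
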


We use \cref{scaling-lemma} to scale \cref{walrasian-finite} to infinite trading networks.

\begin{theorem}\label{exact-walrasian}
Every (possibly infinite) trading network in which all agents have  substitutable preferences has a Walrasian equilibrium. 
\end{theorem}

We prove \cref{exact-walrasian} using \cref{scaling-lemma}. One of the challenges in our argument is that prices have an (uncountably) infinite range, so it is not \emph{a priori} obvious how to encode prices by a model defined via individually finite formulae (e.g., how to require that each item is associated with some real number that represents its price); to overcome this challenge, our well description encodes each price  as the limit of a  sequence of discrete prices.
This approach, in turn, introduces additional challenges, such as how to make sure, using only constraints on these discrete prices, that the limit prices satisfy all desired properties.

\begin{proof}
We prove the theorem using \cref{scaling-lemma}. We first note that for every object $o\in O$, there exists a positive integer $H_o$ such that (1)~if the price of~$o$ is $H_o$, then $\mathsf{s}\bigl(\mathsf{t}(o)\bigr)$ always wishes to sell~$o$ and $\mathsf{b}\bigl(\mathsf{t}(o)\bigr)$ always wishes to not buy $o$ (unless this results in a technological impossibility for one of them), and, (2)~if the price of $o$ is $-H_o$ (and there is no technological impossibility), then $\mathsf{s}\bigl(\mathsf{t}(o)\bigr)$ always wishes to hold $o$ and $\mathsf{b}\bigl(\mathsf{t}(o)\bigr)$ always wishes to buy $o$ (unless this results in a technological impossibility for one of them). Formally, there exists $H_o$ such that for every $i\in\bigl\{\mathsf{s}(\mathsf{t}(o)),\mathsf{b}(\mathsf{t}(o)\bigr\}$ and $O'_i\subseteq O_i$, if $u_i\bigl(O'_i\cup  \{o\}\bigr) > -\infty$ and $u_i(O'_i) > -\infty$ (i.e., if there is no technological impossibility) then $\bigl|u_i(O'_i\cup  \{o\})-u_i(O'_i)\bigr|<H_o$.\footnote{Such a finite $H_o$ exists since there are finitely many such differences for each object $o\in O$.}

\proofstep{Definition of $\mathcal{P}$}
Let $\mathcal{P}$ be all the subsets of $I$. A \emph{solution} for $I'\in\mathcal{P}$ is a Walrasian equilibrium in the induced trading network containing only the agents in $I'$, and only the objects $O_{I'}\eqdef\bigcup_{i\in I'}O_i\subseteq O$ (i.e., objects that can be owned by agents in $I'$), such that each $o\in O_{I'}$ is priced in $[-H_o,H_o]$. (Objects for which only the buyer or only the seller are in $I'$ can be priced arbitrarily, as for them there is no notion of market-clearing prices.)

\proofstep{Well describability}
We set $\varepsilon_n\eqdef2^{-n}$ for every $n\in\NN$.
We define a variable $\consumes{i}{o}$ for every $i\in I$ and every $o\in O_i$, and a variable $\price{n}{o}{p}$ for every $n\in\NN$, every object $o\in O$, and every possible price\linebreak $p\in P_o^n\eqdef\{-H_o,\ldots,-\varepsilon_n,0,\varepsilon_n,2\varepsilon_n,\ldots,H_o\}$. In what follows, for each $I'\in\mathcal{P}$ we define a set $\formulae{I'}$ of formulae over these variables so that 
models of $\formulae{I'}$ are in one-to-one correspondence with the (not-yet-proven-to-be-nonempty) set of solutions for $I'$.
The correspondence is obtained by endowing the variable $\consumes{i}{o}$ with the semantic interpretation ``$i$ consumes object $o$,'' and the variable $\price{n}{o}{p}$ with the semantic interpretation ``$\epsfloor{p_o}=p$,'' where for every $n\in\NN$ and every $x$ we denote by  
$\epsfloor{x}\eqdef2^{-n}\cdot\lfloor2^n\cdot x\rfloor$ 
the rounding-down of~$x$ to the nearest multiple of $\varepsilon_n$.
For every $\varepsilon>0$, the \textit{approximate demand correspondence} $D_i^{\varepsilon}: p\in\RR^{O} \rightrightarrows 2^{O_i}$ is defined similarly to the (exact) demand correspondence, except that  its range includes bundles from which agent $i$'s utility is at least that of the utility-maximizing bundle minus~$\varepsilon$. Thus, $D_i(p)\subseteq D_i^{|O_i|\varepsilon_n}(\epsfloor{p})$ for all $i$ and $p=(p_o)_{o\in O_i}\in\bigtimes_{o\in O_i}[-H_o,H_o]$ (where $\epsfloor{p}$ rounds each coordinate separately). Furthermore, $D_i(p)=\bigcap_{n\in\NN}D_i^{|O_i|\varepsilon_n}(\epsfloor{p})$, since weak inequalities are preserved in the limit.  
We define the set $\formulae{I'}$ to consist of the following formulae:
\begin{enumerate}
\item
for all $n\in\NN$ and all $o \in O_{I'}$, the (finite!) formula `$\bigvee_{p\in P_o^n}\price{n}{o}{p}$',
requiring that $o$ have a rounded-down-to-$\varepsilon_n$ price in $P_o^n$;
\item
for all $n\in\NN$, all $o\in O_{I'}$, and all distinct $p,p'\in P_o^n$, the formula `$\price{n}{o}{p}\rightarrow\lnot\price{n}{o}{p'}$',
requiring that the rounded-down-to-$\varepsilon_n$ price of $o$ be unique;
\item
for all $n\in\NN$, all $o\in O_{I'}$, and all $p\in P_o^n$, the formula `$\price{n}{o}{p}\rightarrow\bigl(\price{n+1}{o}{p}\vee\price{n+1}{o}{p+\varepsilon_{n+1}}\bigr)$',
requiring that $\epsfloor{p_o}=\epsfloor{\epsppfloor{p_o}}$;
\item
for all $o\in O_{I'}$ such that both $\mathsf{b}(\mathsf{t}(o))$ and $\mathsf{s}(\mathsf{t}(o))$ are in $I'$, the formula
`$\consumes{\mathsf{b}(\mathsf{t}(o))}{o}\leftrightarrow\lnot\consumes{\mathsf{s}(\mathsf{t}(o))}{o}$',
requiring that $o$ either be sold or not at its rounded-down-to-$\varepsilon_n$ price. Equivalently, it requires that the associated trade $t(o)$ either be executed or not at these prices;
\item For all $n\in\NN$, all $i\in I'$ and all profiles of possible prices $p=(p_o)_{o\in O_i}\in\bigtimes_{o\in O_i}P_o^n$, the (finite) formula \pagebreak
\[
\biggl(\bigwedge_{o \in O_i}\price{n}{o}{p_o}\biggr) \rightarrow\biggl( \bigvee_{X\in D_i^{|O_i|\varepsilon_n}(p)}\Bigl(\bigwedge_{x\in X}\consumes{i}{x}\wedge\bigwedge_{x\in O_i \setminus X}\lnot\consumes{i}{x}\Bigr)\biggr),\label{f4}
\]
requiring that $i$ consumes one (and only one) of her $(|O_i|\cdot\varepsilon_n)$-utility-maxi\-mizing bundles (and not any object outside this bundle) with respect to the rounded-down-to-$\varepsilon_n$ prices. Taken together for all $n\in\NN$, together with the third formula-type above, this guarantees that $i$ consumes a bundle in $D_i(p)$.
\end{enumerate}
By construction, $(\formulae{I'})_{I'\in\mathcal{P}}$ is a well description of~$\mathcal{P}$. (To obtain an equilibrium from a model, for every $o\in O_{I'}$ and $n\in\NN$, let $p_o^n\in P_o^n$ be the value such that $\price{n}{o}{p}$ is $\mathsf{True}$ in the model (well defined by the first and second formulat-types above), and define $p_o=\lim_{n\rightarrow\infty}p^n_o$ (well defined, e.g., by the third formula-type above since $p^n_o$ is a Cauchy sequence.)

\proofstep{Finite-subset property}
Let $\Phi'\subset\formulae{I}$ be a finite subset.
Since $\Phi'$ is finite, it ``mentions'' (through variables used) only finitely many agents; denote the set of these agents by $I'\subset I$. By definition, $\Phi'\subseteq\Phi_{I'}$. Fix the prices of all objects $o\in O_{I'}$ for which only the seller or the buyer are in $I'$ to be zero. By \cref{walrasian-finite}, there exists a Walrasian equilibrium in the trading network induced by $I'$. For each $o\in O_{I'}$, by definition of $H_0$, if the price of $o$ in this equilibrium is not in $[-H_o,H_o]$, then it can be replaced by $H_o$ (if it is positive) or by $-H_o$ (if it is negative) without changing the allocation, and we would still have a Walrasian equilibrium, this time with each object $o\in O_{I'}$ priced in $[-H_o,H_o]$.
Therefore, $I$~satisfies the finite-subset property. Thus, by \cref{scaling-lemma}, there exists a Walrasian equilibrium in the grand trading network (containing all agents in $I$), as required.
\end{proof}

\cref{exact-walrasian} can also be proved using an alternative approach that does not use formulae of the third formula-type above. The same argument then establishes the existence of a weak kind of approximate Walrasian equilibruim (where the approximation level is agent-specific and equal to $|O_i|\cdot\varepsilon$, which is not necessarily uniformly bounded). Then, a standard diagonalization argument suffices to complete the proof. While we demonstrate this approach in our proof of the existence of Nash equilibrium in game on infinite graphs (\cref{nash}), we choose here the shorter and more generalizable proof approach.\footnote{While it is simple to encode diagonalization arguments using additional formulae, the converse is not always true. See, for example, the applications discussed in \cref{garp}. }

\subsection{Combining with Domain-Specific Knowledge:\texorpdfstring{\\}{ }Structure and Incentives}\label{sp}

In addition to proving \cref{stable-finite}, \cite{GaleShapley1962} also proved that there exists a \textit{man-optimal stable matching}, that is, a stable matching that is most preferred (among all stable matchings) by all men simultaneously. \citet{fleiner2003fixed} generalized this result to infinite markets as part of his proof of \cref{stable-infinite}.\footnote{\citeauthor{fleiner2003fixed}'s result is in fact more general: that the set of stable matchings has a lattice structure.}

\begin{theorem}[\citealp{GaleShapley1962}]\label{menopt-finite}In any finite, one-to-one matching market, there exists a man-optimal stable matching.
\end{theorem}

\begin{theorem}[\citealp{fleiner2003fixed}]\label{menopt-infinite}
In any (possibly infinite) one-to-one matching market, there exists a man-optimal stable matching. 
\end{theorem}

\noindent For completeness, we reprove \cref{menopt-infinite} using \cref{scaling-lemma} in \cref{menopt}.

With \cref{menopt-finite}/\cref{menopt-infinite} in hand, one can define the \emph{man-optimal stable matching mechanism}, which, given any preference profile for the participants, outputs the man-optimal stable matching with respect to those preferences. In finite markets,  \citeN{DubinsFreeman1981} and \citeN{roth1982economics} show that this mechanism is \textit{strategy-proof} (for the men), i.e., that truthfully reporting one's preferences is a (weakly) dominant strategy for each man in the market.

\begin{theorem}[\citealp{DubinsFreeman1981,roth1982economics}]\label{sp-finite}
In any finite, one-to-one matching market, the man-optimal stable matching mechanism is strategy-proof for men.  
\end{theorem}

The challenge in scaling  \cref{sp-finite} using \cref{scaling-lemma} is twofold.
First, standard arguments for strategy-proofness rely on versions of the Lone Wolf/Rural Hospitals Theorem \citep{Roth1984rural},
which \citeN{jagadeesan2017lone2} has shown to not hold in our setting, so an innovation on proof strategy is needed here---and moreover, the mere applicability of \cref{sp-finite} to infinite markets is not \emph{a priori} clear.
Indeed, while in the same paper \citet{jagadeesan2017lone2} proves a special case of \cref{sp-infinite} in which all preference profiles are finite, he leaves the general case (that we prove) as an open problem.
Second, \cref{sp-finite} is of a very different flavor than the results we have scaled to infinite settings using \cref{scaling-lemma} so far. What \cref{scaling-lemma} most naturally helps us prove is existence, yet \cref{sp-finite} is not a result on the existence of a stable matching. Moreover, it does not describe any structural property of a stable matching, but rather an elusive game-theoretic/economic property of a function from preference profiles to stable matchings,
which might seem quite far from a standard existence result. Nevertheless, for the main result of this \lcnamecref{sp}, we use \cref{scaling-lemma} to scale \cref{sp-finite} to infinite markets by first recasting \cref{sp-finite} as an existence result; in doing so, we thus illustrate that \cref{scaling-lemma} has applications beyond results we might naturally (at least at first glance) expect to be able to prove with limiting or continuity arguments. 

\begin{theorem}\label{sp-infinite}
In any (possibly infinite) one-to-one matching market, the man-optimal stable matching mechanism is strategy-proof for men.  
\end{theorem}
The proof of \cref{sp-infinite} is conceptually more involved than those above:
\begin{itemize}
\item
The well description constructed for the infinite problem may not be the most straightforward way to model the result we wish to prove. Indeed, as already mentioned, we recast \cref{sp-infinite}---which might seem quite far from an existence result---as an existence result to which we can apply \cref{scaling-lemma}.
\item
The solution defined for the smaller (finite) problems is not what one would expect as ``the analogous stand-alone problem but in the smaller market,'' i.e., it is \emph{not} equivalent to ``lack of a profitable manipulation in the smaller market'' but rather a problem defined relying on the specifics of the infinite problem, i.e., ``lack of a manipulation in the smaller market that would improve upon the man-optimal stable match \emph{in the original, infinite market}.''
\item
The argument more carefully chooses the finite market to be used for proving the finite-subset property, i.e., it is not ``the market of all participants mentioned in any formula in the given finite subset of formulae,'' but rather a carefully chosen larger (yet still finite) market, chosen so that certain specific facts about the infinite market can be projected down to the finite market.
\end{itemize}

\begin{proof}[Proof of \cref{sp-infinite}]
Let $\simplemarket$ be a matching market. Let $\tilde{m}\in M$ be a man, and let $\prefs{\tilde{m}}'$ be alternative preferences for $\tilde{m}$ to report in lieu of~$\prefs{\tilde{m}}$. We must show that $\tilde{m}$ weakly (truly) prefers his match when reporting his true preferences $\prefs{\tilde{m}}$ (i.e., his match in the man-optimal stable matching in the market $\simplemarket$) to his match when reporting $\prefs{\tilde{m}}'$ (i.e., his match in the man-optimal stable matching in the market $\bigl(M,W,(\prefs{M\setminus\{\tilde{m}\}},\prefs{\tilde{m}}'),\prefs{W}\bigr)$).

If $\tilde{m}$ is either unmatched or matched to a woman not on his preference list $\prefs{\tilde{m}}$ when reporting $\prefs{\tilde{m}}'$, then the claim follows immediately from the individual rationality of stable matchings. We therefore focus on the case in which when reporting $\prefs{\tilde{m}}'$, man $\tilde{m}$ is matched to a woman $\tilde{w}$ who is on his preference list $\prefs{\tilde{m}}$ (i.e., who he prefers to being unmatched).
Since the man-optimal stable matching is preferred by $\tilde{m}$ to any other stable matching, it suffices to show that there exists a stable matching in the market $\simplemarket$ in which $\tilde{m}$ is matched to a woman he weakly prefers to $\tilde{w}$. We prove this existence result using \cref{scaling-lemma}.

\proofstep{Definition of $\mathcal{P}$}
We use the same set of problems as in the proof of \cref{stable-infinite}, i.e., $\mathcal{P}\eqdef\bigl\{(M',W')~\big|~M'\subseteq M \And W'\subseteq W\bigr\}$. The definition of a solution  will be different, though: A \emph{solution} for $(M',W')\in\mathcal{P}$ is a stable matching between $M'$ and $W'$ (in the induced submarket of $\simplemarket$) in which $\tilde{m}$ is matched to a woman he weakly prefers to $\tilde{w}$.

\proofstep{Well describability}
We use the same set of variables as in the proof of \cref{stable-infinite}, with the same semantic interpretation, however the well description is different.
For each $(M',W')\in\mathcal{P}$, we define the set $\formulae{(M',W')}$ to consist of the same formulae as in the proof of \cref{stable-infinite}, and in addition:
\begin{enumerate}
    \item[5.] the (finite!) formula
`$\bigvee_{w\succcurlyeq_{\tilde{m}}\tilde{w}}\matched{\tilde{m}}{w}$',
requiring that $\tilde{m}$ be matched to a woman he weakly prefers to $\tilde{w}$.
\end{enumerate}
 By construction, models of $\formulae{(M',W')}$ (over the variables that appear in $\formulae{(M',W')}$)  are in one-to-one correspondence with the (not-yet-proven-to-be-nonempty) set of solutions of $(M',W')$, 
and so $(\formulae{(M',W')})_{(M',W')\in\mathcal{P}}$ is a well description of~$\mathcal{P}$.

\proofstep{Finite-subset property}
Let $\Phi'\subset\formulae{(M,W)}$ be a finite subset.
Since $\Phi'$ is finite, it ``mentions'' (through variables used) only finitely many men and women; denote the sets of these men and these women by $M'\subset M$ and $W'\subset W$, respectively.
We now reach the main point of divergence from the proof of \cref{stable-infinite}, and, in fact, from all proofs presented so far: the problem that we use to show that the finite-subset property is satisfied is not the problem $(M',W')$.
Rather, it is $(M'',W'')$ for $M''\eqdef M'\cup\{\tilde{m}\}\cup\mu(W')$ and $W''\eqdef W'\cup\{\tilde{w}\}\cup\mu(M')$, where we denote by $\mu$ the matching of the entire market $(M,W)$ that the mechanism chooses when $\tilde{m}$ reports $\prefs{\tilde{m}}'$. Recall that $\mu(\tilde{m})=\tilde{w}$. By definition, $\Phi'\subseteq\formulae{(M'',W'')}$. We now show that the man-optimal stable matching in $(M'',W'')$ constitutes a solution for $(M'',W'')$.\footnote{The argument that we give for this would not have shown that the man-optimal stable matching in the market $(M',W')$ is a solution for $(M',W')$.}

One stable matching in the market $\bigl(M'',W'',(\rprefs{M''\setminus\{\tilde{m}\}}{W''},\{\tilde{w}\}),\rprefs{W''}{M''}\bigr)$ is the restriction of $\mu$ to this market, which matches $\tilde{m}$ with $\tilde{w}$ (it is stable since any blocking pair would also block the matching $\mu$ in $\bigl(M,W,(\prefs{M\setminus\{\tilde{m}\}},\prefs{\tilde{m}}'),\prefs{W}\bigr)$). Since $\tilde{w}$ is the only woman that $\tilde{m}$ ranks in this market, the man-optimal stable matching in this market also matches $\tilde{m}$ with $\tilde{w}$. So, by \cref{sp-finite}, it must be that the man-optimal stable matching in $(M'',W'')$ matches $\tilde{m}$ to a woman he weakly $\prefs{\tilde{m}}$-prefers to $\tilde{w}$ (otherwise, $\tilde{m}$ would have had a profitable manipulation in the latter market: declaring his list to consist solely of $\tilde{w}$), and hence this matching is a solution for $(M'',W'')$.
Therefore, $(M,W)$~satisfies the finite-subset property. Thus, by \cref{scaling-lemma}, there exists a stable matching between $M$ and~$W$ in which $\tilde{m}$ is matched to a woman he weakly prefers to $\tilde{w}$, as required.
\end{proof}

 \subsubsection{Additional Application of the Same Proof: Man-Optimality}

Our proof of \cref{sp-infinite} is quite a bit more flexible than one might imagine.
In \cref{menopt}, we use a similar well description and a similar argument to scale \cref{menopt-finite} from finite to infinite markets, reproving \cref{menopt-infinite}.
This is despite the differences between the flavors of these two results, and despite the fact that it is not clear that techniques from \citeauthor{fleiner2003fixed}'s original proof of \cref{menopt-infinite} could be used to prove \cref{sp-infinite}. The main differences between our proofs of \cref{menopt-infinite} and \cref{sp-infinite}  is that in this application on the one hand the problem is already an existence problem so we do not need to recast it as one, but on the other hand we choose the finite markets so that several (not just one) stable matchings of the infinite market could be restricted into them. 
Subsequent to our work, \citet{choi20} has shown how to further use our proof technique to scale several additional structural results for two-sided matching, such as the classic comparative static on the impact of adding an agent to one side of the market.

\subsection{Further Remarks and Limitations}\label{sec:limitat}

In this section, we demonstrated the versatility of our approach across several settings within matching theory. Our approach can be used to scale many additional finite-market results to encompass infinite markets. For example, \cite{choi20} has used our framework to scale the classic group strategy-proofness result for the man-optimal stable matching mechanism. Our approach can also be applied to dynamic matching markets to adapt finite-time results to settings with no start time or no end time. For example, in \cref{dynamic} we apply it to the dynamic matching result of \citet{pereyra2013dynamic}.
Our approach, however, is not without limitations. We conclude this section with some remarks on limitations of proofs presented throughout the section. A more high-level discussion of settings in which our approach is not applicable is provided in \cref{non-applications}.

In \cref{sp} we proved the strategy-proofness of the man-optimal stable matching mechanism. We have previously mentioned that our proof of this result does not rely on the Lone Wolf/Rural Hospitals Theorem,\footnote{The Lone Wolf Theorem states that in a finite one-to-one matching market, each participant is either matched in all stable matchings, or unmatched in all stable matchings. The Rural Hospitals Theorem is a strengthening of this theorem for many-to-one matching markets.} while the latter was shown by \citeN{jagadeesan2017lone2} via a counterexample to not hold in infinite settings.
By \cref{scaling-lemma},
the existence of this counterexample, together with the correctness of the Lone Wolf Theorem for finite markets, imply that for infinite markets to satisfy the Lone Wolf Theorem is not well-describable with the variables that we used in this section (with the same semantic interpretation).
A natural question, though, is what would fail in one's first attempt at such a well description.
The answer is that a Propositional Logic formulation precludes the use of quantifiers (e.g., ``\emph{there exists} a man $m$ to whom woman $w$ is matched'') as well as precludes infinitely long formulae such as infinite disjunctions (e.g., ``woman $w$ is matched to \emph{one of the following infinitely many} men''). Furthermore, it does not allow statements that condition truths in one model upon truths in other models, such as requiring that a certain statement (e.g., that a certain agent is matched) hold either in all models of our formulae or in none of them. Any attempt at a well description for satisfying the Lone Wolf/Rural Hospitals Theorem that satisfies the finite-subset property would in effect run into one of these three obstacles.

As another example, consider the following question:\footnote{We thank an anonymous referee for the 21st ACM Conference on Economics and Computation (EC 2020) for raising this question.} Can our framework be used to show that there exists a mechanism that is strategy-proof, efficient, and individually rational in a housing market \`a la \citeN{shapley1974cores}, when the sets of objects and agents are infinite?
The answer is that such a construction is at least not straightforward because it is unclear whether and how (Pareto) efficiency can be well described
(via individually finite formulae).
A well description can certainly naturally rule out any \emph{finite} trading cycles. However, unlike in the finite setting, in an infinite setting the absence of finite trading cycles does not imply efficiency, as there may exist Pareto improvements that require infinitely many trades. To use our framework, one would need to come up with
a well description that rules out any inefficiency (for example, by appealing to some variant of the first welfare theorem).\footnote{Similarly, \citeauthor{choi20}'s (\citeyear{choi20}) subsequent result scaling group strategy-proofness guarantees the absence of profitable deviations for any \emph{finite} coalition. Generalizing to infinite coalitions using the same proof technique would once again require a way to well describe the absence of profitable deviations by infinite coalitions (using individually finite formulae), which may be impossible.}

\addtocontents{toc}{\protect\setcounter{tocdepth}{1}}

\section{Non-Applications}\label{non-applications}

When using our framework, one faces an inherent tension. On the one hand, each formula in a well description must use finitely many variables. On the other hand, to be able to use finite results to establish the finite-subset property, each variable must be semantically related only to a finite set of elements in the economic problem. At first glance, this seems to preclude applications in which the desired solution has a parameter with an infinite domain, since requiring that the parameter take \emph{some} value would require an infinite disjunction. Indeed, it is simpler to handle parameters with finite domains (which, as we have seen, naturally occur in many of the applications). Nevertheless, we have successfully applied \cref{scaling-lemma} also to settings with infinite-domain parameters, such as utilities/costs (\cref{garp}), probabilities (\cref{stoch}), or prices (\cref{walrasian}). Still, as the following examples demonstrate, in seemingly similar problems this approach could not possibly work, since the infinite case has no solution. 

\begin{example}[Splitting the dollar]
A dollar must be split between a set $I$ of agents. A solution is an efficient and envy-free division. When $|I|<\infty $ splitting the dollar equally is a solution. But the case $I=\NN$ has no solution.      
\end{example}

\begin{example}[Higher number wins]
Two players can state a number in $S\subseteq \RR$. The player whose stated number is higher wins a prize (which is shared in case of a tie). A solution is a pure-strategy Nash equilibrium. 
When $|S|<\infty$, a solution exists (each player states $\max S$), but the case $S=\NN$ has no solution. 
\end{example}

What are the limitations of our approach that prevent it from covering these two examples?
Our approach for well-describing problems with infinite-domain parameters is to ``encode'' these parameters via a sequence of values, each from a finite domain. Specifically, we have encoded each of the abovementioned parameters using a sequence of increasingly fine discretizations.

Two features are critical for the success of this approach. First, that the desiderata on the encoded parameter can be imposed by individually finite formulae on the discretizations. For example, in \cref{garp}, since a utility function weakly rationalizes the data if and only if each of its discretizations weakly rationalize the data, we represented utility functions that satisfy the former by discretizations that satisfy the latter. Similarly, in \cref{walrasian}, since a consumption bundle maximizes utility with respect to prices if and only if it approximately maximizes utility with respect to each discretization of the prices, we represented prices that satisfy the former by discretizations that satisfy the latter. The second critical feature is that not only each parameter value can be encoded by such a sequence of values (discretizations), but also each such sequence from any valid model encode a valid parameter value (i.e., in the parameter domain). In \cref{stoch}, since probabilities are increasing and in $[0,1]$, they have a limit, which is in $[0,1]$. In \cref{inattention}, since discretized costs are increasing, they have a limit, which is in $\mathbb{R}_{\ge0}\cup\{\infty\}$---the domain of costs in that problem. For each of the above examples, there is no encoding that has both of these features.

In splitting the dollar, if, for example, we encode each player's allocation using discretizations, there is no set of individually finite formulae on the discretizations that holds if and only if the limit division is efficient. Hence, the first feature is missing. Any encoding that has this feature would lose the second feature.

In higher number wins, if, for example, we encode each player's number using discretizations, there is no set of individually finite formulae on the discretizations that holds if and only if the limit is finite. Hence, the second feature is missing. Any encoding that has this feature would lose the first feature.

\section{Related Literature} \label{related-literature}

\paragraph{Methodology.} To our knowledge, we are the first to use Propositional Logic as a general tool for scaling results in economics.\footnote{It is worth mentioning within this context, though, the work of \citet{holzman1984extension}, who used Logical Compactness to relax topological conditions in \citet{fishburn1984comment}.} We elaborate on related  approaches in \cref{related-methodology}.

\paragraph{Decision theory and revealed preferences applications.}
In decision theory, both finite- and infinite-data models are important and common.\footnote{For a recent example of a treatment of finite and infinite datasets, see \citet{aguiar2020rationalization}.}
\citet{reny2015} showed how to unify these two approaches in the setting of \citet{afriat67}.
In our view, our main contribution to this literature is in generalizing beyond any specific setting by providing a way to systematically unify these approaches. 
In \cref{stoch}, we scaled a finite-data theorem for stochastic choice originally due to \citet{MR1971,MR1990} to cover infinite datasets.  The importance of infinite data in this particular setting was highlighted by \citet{cohen1980} and \citet{mcfadden2005review}. \citet{mcfadden2005review}, in particular, showed how to scale the \citet{MR1971,MR1990}  result to an infinite setting (different from ours) by either weakening the concept of rationalizability or by imposing topological structure---neither of which we require in our analysis.
In \cref{filter}, we proved that the result of  \citet{masatlioglu2012revealed} scales to infinite datasets using essentially the same proof we used to reprove the classic result of \citet{richter1966} and \citet{hansson1968choice} that SARP suffices for rationalization by strict preferences. Like  \citeauthor{masatlioglu2012revealed}, our infinite version of their theorem applies to \emph{full} datasets. \citet{de2021bounded} provide an analogous theorem for finite datasets that need not necessarily be full; our proof can be used to similarly scale their theorem to infinite datasets.

\paragraph{Matching theory and  game theory applications.} Infinite models are used frequently in matching theory, and in game theory more broadly, as a way of representing limit---or ``large''---markets. Many of these models work with either discrete infinite markets \citep{fleiner2003fixed,kircher2009,jagadeesan2017complementary} or a  limit of finite markets \citep{immorlica2005marriage,kojima2009incentives,ashlagi2011stability,MirallesPycia2015}; we work with similar limit settings, but use Compactness to sidestep many of the challenges in analyzing those models.\footnote{A second class of large-market models features continua of agents, with each agent having a negligible contribution to the overall market \cite[see, e.g.,][]{AS1974,gretsky1992nonatomic,gretsky1999perfect,kaneko1986core,azevedo2013walrasian,noldeke2018implementation,AzevedoBudish2018,greinecker2018pairwise}. A series of recent papers has introduced models that mix between the two styles by featuring countably many ``large'' agents that can each match with a continuum of ``small'' agents \cite[see., e.g.,][]{azevedo2013supply,azevedo2012complementarity,jagadeesan2017complementary,che2019stable,fuentes2019stable}.} Our existence and structural results for large matching markets  (but not our strategy-proofness result) are implicitly covered by the main result of \citeN{fleiner2003fixed}, who introduced a fixed-point characterization of stable matchings that also holds in infinite markets. Our strategy-proofness result, meanwhile, generalizes the more restricted result of \citeN{jagadeesan2017lone2}, who proved strategy-proofness of
the man-optimal stable matching mechanism in markets with countably many agents under a ``local finiteness" condition (which we avoid).

In \cref{dynamic}, we consider the dynamic matching result of \citet{pereyra2013dynamic}, which originally applies to dynamic markets with a finite start time, and scale it to encapsulate ``ongoing'' dynamic markets with neither start nor end times.  This application connects our work to the broad literature on infinite-horizon games \cite[e.g.,][]{fudenberg2009long}. 
In \cref{nash} we reprove the main result of \citet{Peleg1969}, who directly scaled the seminal existence result of \citet{Nash1951} to infinite settings.  
\citet{hart1989existence} proved existence of correlated equilibrium in infinite games, including  some cases in which a Nash equilibrium does not exist; 
future studies may look to apply our approach to this and other equilibrium notions.  

\section{Discussion}\label{discussion}

This paper provides a novel, principled approach for scaling economic theory results from finite models to infinite ones. We identify a sufficient condition for scaling a result: well-describability with a description satisfying the finite-subset property. 
The bulk of this paper is dedicated to applications demonstrating that many results in revealed-preference theory and matching theory meet this condition.

We have focused on two ``types'' of scaling to infinity: allowing infinite datasets, and allowing infinite markets. In \cref{dynamic,nash}, we provide examples of a third type: allowing infinite time. Specifically, we generalize repeated economic interactions with a finite start-time (and in \cref{nash} also finite end-time) to accommodate ``ongoing'' dynamic interactions with neither start nor end, and show the existence of equilibria free from artefacts driven by time starting or ending at a fixed date. In \cref{dynamic}, our approach allows us to derive the infinite case by relying on an inductive proof for the finite case, even though in the infinite case there is no ``time~0'' induction base.

Our approach is not without limitation, and may fail where other approaches can succeed. That said, we have curated an array of applications showing that it has merit in both decision theory and matching theory (and potentially in other contexts) in proving novel results, as well as consolidating and shortening proofs of previously known results, in a way that oftentimes sheds new light on them.

Despite the novel decision theory and matching theory results we derive, we view the main contribution of this paper to be a methodological one: a new, easy-to-use versatile tool for the economics toolbox. As we illustrated across both fields, combining our approach with field-specific knowledge has the potential to overcome challenging technical barriers. We hope that readers of this paper will be able to
further leverage our approach.

\small
\singlespacing
\addcontentsline{toc}{section}{References}
\putbib
\end{bibunit}

\clearpage
\addtocontents{toc}{\vspace{1em}}
\setcounter{page}{1}
\renewcommand{\thepage}{A.\arabic{page}}
\setcounter{table}{0} 
\setcounter{footnote}{0} 
\renewcommand{\thefootnote}{\arabic{footnote}}
\appendix

\begin{bibunit}

\section{Rational Inattention}\label{inattention}

In this \lcnamecref{inattention}, we demonstrate how our proof for \cref{garp-infinite} is quite a bit more flexible than one might imagine, by using essentially the same well description to scale quite a different rationalization result from finite to infinite datasets. For the most part, we  follow the notation of \citet{caplin2015revealed}. 
Fix a finite set $\Omega$ of possible \emph{states of the world} and a \emph{prize space} $X$, as well as a \emph{utility function} $u:X\rightarrow\RR$.
An \emph{action} is a mapping from $\Omega$ to~$X$.
A \emph{decision problem} is a finite set of actions.
A \emph{(state-dependent stochastic choice) dataset} is a collection $D$ of decision problems along with functions $P_A:\Omega\rightarrow\Delta(A)$ for every $A\in D$, denoting the observed action distribution of a decision maker in each realized state of the world.

Fix a \emph{prior belief} $\mu\in\Delta(\Omega)$ for the decision maker.
An \emph{information structure} is a distribution over a finite number of \emph{posteriors} (distributions over $\Omega$) for the decision maker whose average is~$\mu$.\footnote{\citet{caplin2015revealed} define an information structure as a mapping from $\Omega$ to distributions over posteriors that satisfiy Bayes' law w.r.t.\ the prior $\mu$, which leads also to a different definition of rationalization; these definitions are known to be equivalent.}
The set of all information structures is $\Pi$.
The utility from an action $a$ when the posterior is $\gamma$ is $g(a,\gamma)=\sum_{\omega\in \Omega}\gamma(\omega)u(a(\omega))$.
The \emph{gross payoff} from a decision problem $A$ using an information structure $\pi$ is $G(A,\pi)=\sum_{\gamma\in\supp\pi}\pi(\gamma)\max_{a\in A}g(a,\gamma)$.
An \emph{information cost function} is a mapping $K:\Pi\rightarrow\RR\cup\{\infty\}$ where not all costs are infinite.

Given a dataset $\inattentiondataset$, for every $a\in A$ by slight abuse of notation we write $P_A(a)=\sum_{\omega\in\Omega}\mu(\omega)P_A(a\mid\omega)$. For each $a\in A$ s.t.\ $P_A(a)>0$, we write $P'_A(\cdot\mid a)$ for the \emph{revealed posterior} associated with the action $a$ in the decision problem $A$, defined by $P'_A(\omega\mid a)=\frac{P_A(a\mid\omega)\mu(\omega)}{P_A(a)}$ for every $\omega\in\Omega$.

A dataset $\inattentiondataset$ is said to have \emph{costly information acquisition representation} if there exists a cost function $K$ such that for every problem $A\in D$:
\begin{enumerate}
\item
For each $a\in A$ with $P_A(a)>0$, it holds that $a\in\argmax_{b\in A}g(b,P'_A(\cdot\mid a))$.
\item
Letting $\pi_A$ be the \emph{revealed information structure} for $A$, i.e., the information structure such that for every $a\in A$ with $P_A(a)>0$, the probability that $\pi_A$ assigns to the posterior $P'_A(\cdot\mid a)$ is $P_A(a)$ (note that by definition, these probabilties sum up to $1$, so this information structure is well defined), it holds that $\pi_A \in \argmax_{\pi\in\Pi}\bigl(G(A,\pi)-K(\pi)\bigr)$.
\end{enumerate}

An information cost function $K$ is \emph{weakly monotone in information} if for every two information structures $\pi,\phi$ s.t.\ $\pi$ is a garbling of $\phi$, we have that $K(\phi)\ge K(\pi)$.
An information cost function $K$ is \emph{mixture feasible} if for any two information structures $\pi,\phi$ and for every $\alpha\in(0,1)$, we have that $K\bigl(\alpha\circ\pi+(1-\alpha)\circ\phi\bigr)\le\alpha K(\pi)+(1-\alpha)K(\phi)$, where $\alpha\circ\pi+(1-\alpha)\circ\phi$ is the \emph{mixture distribution} assigning to each posterior a probability equal to the $\alpha$-weighted average of the probabilities assigned to it by $\pi$ and $\phi$.
Finally, an information cost function $K$ is \emph{normalized} if $K(\mathbbm{1}_\mu)=0$, where $\mathbbm{1}_\mu$ is the information structure assigning probability $1$ to the the prior $\mu$ being the posterior.

In a seminal result, \citet{caplin2015revealed} showed  that satisfying 
\emph{No Improving Action Switches (NIAS)} and \emph{No Improving Attention Cycles (NIAC)}
is necessary and sufficient for having a costly information acquisition representation for a finite dataset---and furthermore that satisfying NIAS and NIAC is a sufficient condition for such a representation by a normalized, weakly monotone in information, and mixture feasible cost function. 
For our purposes, it suffices to note that whenever an infinite dataset satisfies NIAS and NIAC, so does any finite subset of it.

\begin{theorem}[\citealp{caplin2015revealed}]\label{inattention-finite}
A finite dataset $\inattentiondataset$ has a costly information acquisition representation if and only if it satisfies NIAS and NIAC.
Moreover, when NIAS and NIAC hold there exists a costly information acquisition representation function for the dataset that is weakly monotone in information, mixture feasible, and normalized.
\end{theorem}

Recently, \citet{caplin2017rationally} proved an infinite-dataset-size version of \cref{inattention-finite} via a novel proof that diverges from \citeauthor{caplin2015revealed}'s proof of the finite case. As we now show, this infinite version is also readily provable via \cref{scaling-lemma}, using \cref{inattention-finite} for establishing the finite-subset property, with a very similar well description to the one we used to prove \citeauthor{reny2015}'s infinite-data version of \citeauthor{afriat67}'s theorem. This is despite the differences between the two settings considered, and despite the fact that neither the original proofs of the finite versions nor the original proofs of the infinite versions of any of these quite different theorems share any common core technique.

\begin{theorem}[\citealp{caplin2017rationally}]\label{inattention-infinite}
A (possibly infinite) dataset $\inattentiondataset$ has a costly information acquisition representation if and only if it satisfies NIAS and NIAC.
Moreover, when NIAS and NIAC hold there exists a costly information acquisition representation function for the dataset that is weakly monotone in information, mixture feasible, and normalized.
\end{theorem}

\begin{proof}[Proof of \cref{inattention-infinite}]
As noted, the ``only if'' direction is trivial, so we prove the ``if'' direction. 
Let $\inattentiondataset$ be a dataset that satisfies NIAS and NIAC, and let $\pi_A$ be the associated revealed information structure.
Of the two conditions for having a costly information acquisition representation, the first condition does not involve $K$, and is in fact a condition on each decision problem in the data separately; the fact that it holds for each decision problem in the dataset separately follows from \cref{inattention-finite} since each decision problem in the dataset satisfies NIAS and NIAC as a dataset by itself, and therefore has a costly information acquisition representation by itself as the only data point therefore satisfies the first condition for having a costly information acquisition representation.\footnote{In fact, only NIAS is required here, but we are intentionally stating the proof in a way that is agnostic to the details of NIAS and NIAC.} The challenge with infinite data is therefore to find a cost function that satisfies the second condition for having a costly information acquisition representation, simultaneously for all infinitely many decision problems in the dataset. We do so using \cref{scaling-lemma}.

\proofstep{Definition of $\mathcal{P}$}
Fixing $X$, $\Omega$, and $u$, let $\mathcal{P}$ be the set of all datasets satisfying NIAS and NIAC. A \emph{solution} for a dataset~$\inattentiondataset\in\mathcal{P}$ is a costly information acquisition representation function for $\inattentiondataset$ that is weakly monotone in information, mixture feasible, and normalized.

\proofstep{Well describability}
We set $\varepsilon_n\eqdef2^{-n}$ for every $n\in\NN$.
We define a variable $\cost{n}{\pi}{c}$ for every $n\in\NN$, every information structure $\pi$, and every $c\in C_n\eqdef\{0,\varepsilon_n,2\cdot\varepsilon_n,\ldots,n\}$ (note that unlike in our proof of \cref{garp-infinite}, this set goes up to $n$ rather than only up to $1$, so both the fineness of the discretization and the upper bound depend on $n$ in this proof).
In what follows, for each $\inattentiondataset\in\mathcal{P}$ we define a set $\formulae{\inattentiondataset}$ of formulae over these variables so that models of $\formulae{\inattentiondataset}$ are in one-to-one correspondence with the (not-yet-proven-to-be-nonempty) set of solutions for $\inattentiondataset$.
The correspondence is obtained by endowing the variable $\cost{n}{\pi}{c}$ with the semantic interpretation ``$\cnfloor{K(\pi)}=c$ for the corresponding cost function $K$,'' where for every $n\in\NN$ and every $x\in\RR\cup\{\infty\}$ we denote by $\cnfloor{x}$ the rounding-down of~$x$ to the nearest number in $C_n$. (Note that in particular, $\cnfloor{x}=n$ for every $x>n$.)
We define the set $\formulae{\inattentiondataset}$ to consist of the following formulae (which, except for the sixth formula-type, mirror those from our proof of \cref{garp-infinite}):
\begin{enumerate}
\item
\begin{sloppypar}
for all $n\in\NN$ and all information structures $\pi$, the (finite!) formula
`$\bigvee_{c\in C_n}\cost{n}{\pi}{c}$',
requiring that $\pi$ have a rounded-down-to-$C_n$ cost;
\end{sloppypar}
\item
for all $n\in\NN$, all information structures $\pi$, and all distinct $c,d\in C_n$, the formula
`$\cost{n}{\pi}{c}\rightarrow\lnot\cost{n}{\pi}{d}$',
requiring that the rounded-down-to-$C_n$ cost of $\pi$ be unique;
\item
for all $n\in\NN$, all information structures $\pi$, and all $c\in C_n$, the following formula:
\begin{itemize}
\item
if $c<n$, the formula
`$\cost{n}{\pi}{c}\rightarrow\bigl(\cost{n+1}{\pi}{c}\vee\cost{n+1}{\pi}{c+\varepsilon_{n+1}}\bigr)$',
\item
if $c=n$, the (finite) formula
`$\cost{n}{\pi}{c}\rightarrow
\bigvee_{d=\{n,n+\varepsilon_{n+1},\ldots,n+1\}}
\cost{n+1}{\pi}{d}$',
\end{itemize}
requiring that $\cnfloor{K(\pi)}=\cnfloor{\cnnfloor{K(\pi)}}$;
\item
For all $n\in\NN$, all pairs of information structures $\pi,\phi$ s.t.\ $\pi$ is a garbling of $\phi$, and all $c\in C_n$, the (finite) formula
`$\cost{n}{\pi}{c}\rightarrow
\bigvee_{
d\in C_n:
d\ge c
}\cost{n}{\phi}{d}$',
requiring that the rounded-down-to-$C_n$ cost function be weakly monotone in information;
\item
\begin{sloppypar}
for all $n\in\NN$, all pairs of information strucures $\pi,\phi$, all $\alpha\in(0,1)$, and all $c,d\in C_n$, the (finite) formula
`$\bigl(\cost{n}{\pi}{c}\wedge\cost{n}{\phi}{d}\bigr)\rightarrow
\bigvee_{
c'\in C_n:
c'\le\alpha c+(1-\alpha)d+\varepsilon_n
}\cost{n}{\alpha\circ\pi+(1-\alpha)\circ\phi}{c'}$',
requiring that the rounded-down-to-$C_n$ cost function be mixture feasible, up to an error of at most~$\varepsilon_n$;
\end{sloppypar}
\item
for all $n\in\NN$, the formula
`$\cost{n}{\mathbbm{1}_\mu}{0}$',
requiring that the rounded-down-to-$C_n$ cost function be normalized (and, in particular, that not all costs are infinite);
\item
\begin{sloppypar}
for all decision problems $A\in D$, all $n\in\NN$, all information structures $\phi$, and all $c\in C_n\setminus\{n\}$, the (finite) formula
`$\cost{n}{\phi}{c}\rightarrow
\bigvee_{
d\in C_n:
d\le c-(G(A,\phi)-G(A,\pi_A))+\varepsilon_n
}\cost{n}{\pi_A}{d}$',
requiring that choosing $\pi_A$ maximizes gross payoff minus rounded-down-to-$C_n$ costs, up to an error of at most~$\varepsilon_n$.
\end{sloppypar}
\end{enumerate}
We now argue that $(\formulae{\inattentiondataset})_{\inattentiondataset\in\mathcal{P}}$ is a well description of~$\mathcal{P}$. Let $\inattentiondataset\in\mathcal{P}$.

We first claim that every model that satisfies $\formulae{\inattentiondataset}$ corresponds to a solution for $\inattentiondataset$. Fix a model for $\formulae{\inattentiondataset}$. For every information structure $\pi$ and every $n\in\NN$, let $c_n\in C_n$ be the value such that $\cost{n}{\pi}{c_n}$ is $\mathsf{True}$ in the model (well defined by the first and second formula-types above), and define $K(\pi)=\lim_{n\rightarrow\infty} c_n$ (well defined, e.g., by the third formula-type above since $c_n$ is monotone nondecreasing). The resulting cost function $K$ is a limit of normalized functions that are weakly monotone in information (by the fourth and sixth formula-types above) and that up to an error that tends to $0$, both (1)~are mixture feasible (by the fifth formula-type above), and, (2)~choosing $\pi_A$ for each $A\in D$ maximizes gross payoff minus costs according to them (by the seventh formula-type above). Hence, $K$ itself is normalized, weakly monotone in information, mixture feasible, and choosing $\pi_A$ for each $A\in D$ maximizes gross payoff minus costs according to it (i.e., is a costly information acquisition representation function for $\inattentiondataset$).

Second, if $\inattentiondataset$ has a solution $K$, then using it we can construct a model for $\formulae{\inattentiondataset}$ (by setting each $\cost{n}{\pi}{c}$ to be $\mathsf{True}$ iff $c=\cnfloor{K(\pi)}$), and so $\formulae{\inattentiondataset}$ has a model. To sum up, $(\formulae{\inattentiondataset})_{\inattentiondataset\in\mathcal{P}}$ is a well description of~$\mathcal{P}$.

\proofstep{Finite-subset property}
Let $\inattentiondataset\in\mathcal{P}$. Let $\Phi'\subset\formulae{\inattentiondataset}$ be a finite subset.
Since $\Phi'$ is finite, there are only finitely many formulae of the above seventh type (the only formula type that depends on the dataset) in $\Phi'$. For each of these formulae, select some decision problem in $D$ that induces it, and let $D'\subset D$ be the set of these (at most $|\Phi'|$) problems. By definition, $\Phi'\subseteq\formulae{\inattentiondatasetprime}$. Furthermore, $\inattentiondatasetprime$ satisfies NIAS and NIAC since any sub-dataset of $\inattentiondataset$ satisfies NIAS and NAIC, and hence, by \cref{inattention-finite}, $\inattentiondatasetprime$~has a solution. Therefore, $\inattentiondataset$~satisfies the finite-subset property. Thus, by \cref{scaling-lemma}, $\inattentiondataset$~has a solution.
\end{proof}

\section{Existence of Man-Optimal Stable Matching}\label{menopt}

In this \lcnamecref{menopt}, we demonstrate how our proof of \cref{sp-infinite} is quite a bit more flexible than one might imagine, by using essentially the same well description with a similar argument to scale \cref{menopt-finite} from finite to infinite markets.
The main complexities of our proof are analogous to the complexities of the proof of \cref{sp-infinite} that we discussed before that proof:
\begin{itemize}
\item
The well description constructed for the infinite problem may not be the most straightforward way to model the object whose existence we wish to prove, i.e., it is not ``a stable matching that matches each man to his favorite stable partner'' but rather ``a stable matching that matches each man to a woman at least as preferred as his favorite stable partner.'' While a well description for the more straightforward statement of the problem exists (and is in fact slightly simpler), this (equivalent) well description allows us to use additional results from the literature on finite matching markets in critical steps of the proof of the finite-subset property.
\item
The solution defined for the smaller (finite) problems is not what one would expect as ``the analogous stand-alone problem but in the smaller market,'' i.e., it is \emph{not} ``existence of a man-optimal stable matching in the smaller market'' but rather a problem defined relying on the specifics of the of the infinite problem, i.e., ``existence of a stable matching in the smaller market that matches each man to a woman at least as preferred as his favorite stable partner \emph{in the original, infinite market}.''
\item
The argument more carefully chooses the finite market to be used for proving the finite-subset property, i.e., it is not ``the market of all participants mentioned in any formula in the given finite subset of formulae,'' but rather a carefully chosen larger (yet still finite) market, chosen so that certain specific facts about the infinite market can be projected down to the finite market.
\end{itemize}

\begin{proof}[Proof of \cref{menopt-infinite}]
Let $\simplemarket$ be a matching market. Let $\widehat{M}$ be the subset of men who are matched in at least one stable matching in this market. For each $m\in\widehat{M}$, let $w^m$ be the woman most preferred by $m$ of all women to which he is matched in at least one stable matching in this market. We wish to show that there exists a stable matching in this market in which each $m\in\widehat{M}$ is matched to $w^m$.  We  use \cref{scaling-lemma} to prove an equivalent statement: that each $m\in\widehat{M}$ is matched to a woman he weakly prefers to $w^m$.

\proofstep{Definition of $\mathcal{P}$}
We use the same set of problems as in the proof of \cref{stable-infinite}, i.e., $\mathcal{P}\eqdef\bigl\{(M',W')~\big|~M'\subseteq M \And W'\subseteq W\bigr\}$. The definition of a solution will be different, though: A \emph{solution} for $(M',W')\in\mathcal{P}$ is a stable matching between $M'$ and $W'$ (in the induced submarket of $\simplemarket$) in which each $m\in M'\cap\widehat{M}$ is matched to a woman he weakly prefers to $w^m$, the woman to whom he is matched in his most-preferred stable matching of the entire market.

\proofstep{Well describability}
We use the same set of variables as in the proof of \cref{stable-infinite}, with the same semantic interpretation, however the well description is different.
For each $(M',W')\in\mathcal{P}$, we define the set $\formulae{(M',W')}$ to consist of the same formulae as in the proof of \cref{stable-infinite}, and in addition:
\begin{enumerate}
    \item[5.] for all $m\in M'\cap\widehat{M}$, the (finite!) formula
`$\bigvee_{w\succcurlyeq_m w^m}\matched{m}{w}$',
requiring that $m$ be matched to a woman he weakly prefers to $w^m$.
\end{enumerate}
By construction, models of $\formulae{(M',W')}$ (over the variables that appear in $\formulae{(M',W')}$) are in one-to-one correspondence with the (not-yet-proven-to-be-nonempty) set of solutions of $(M',W')$, and so $(\formulae{(M',W')})_{(M',W')\in\mathcal{P}}$ is a well description of~$\mathcal{P}$.

\proofstep{Finite-subset property}
Let $\Phi'\subset\formulae{(M,W)}$ be a finite subset.
Since $\Phi'$ is finite, it ``mentions'' (through variables used) only finitely many men and women; denote the sets of these men and these women by $M'\subset M$ and $W'\subset W$, respectively. As in the proof of \cref{sp-infinite}, the problem that we use to show that the finite-subset property is satisfied is not the problem $(M',W')$. Rather, in this proof it is $(M',W'')$ for $W''\eqdef W'\cup\bigcup_{m\in M'\cap\widehat{M}}\mu^{m}(M')$, where for every $m\in M'\cap\widehat{M}$ we denote by $\mu^{m}$ some stable matching of the entire market $(M,W)$ that matches $m$ to $w^{m}$. By definition, $\Phi'\subseteq\formulae{(M',W'')}$. By \cref{menopt-finite}, there exists a man-optimal stable matching between $M'$ and $W''$. To show that this matching constitutes a solution for $(M',W'')$, it remains to show that it matches each $m\in M'\cap\widehat{M}$ to a woman he weakly prefers to $w^m$.\footnote{As in the proof of \cref{sp-infinite}, the argument that we give for this would not have shown that the man-optimal stable matching in the market $(M',W')$ is a solution for $(M',W')$.}

Fix $m\in M'\cap\widehat{M}$. We have that $M'\subset M' \cup \mu^{m}(W')$ and $W''\supset W'\cup \mu^{m}(M')$. Theorem 2.25 in \citeN{RothSotomayor1990} thus assures that the man-optimal stable matching in $(M',W'')$ is weakly preferred by all men in $M'$, and in particular by $m$, to any stable matching in $\bigl(M'\cup\mu^{m}(W'),W'\cup\mu^{m}(M')\bigr)$. But one stable matching in the latter market is  the restriction of $\mu^{m}$ to this market, which matches $m$ with $w^{m}$ (it is stable since any blocking pair would also block the matching $\mu^{m}$ in the full market). Thus, the man-optimal stable matching in $(M',W'')$ matches $m$ to a woman he weakly prefers to $w^{m}$,\footnote{This argument, in particular, does not show that the man-optimal stable matching in $(M',W'')$ matches $m$ with $w^{m}$. This is the reason for using `$\bigvee_{w\succcurlyeq_m w^m}\matched{m}{w}$' rather than `$\matched{m}{w^m}$' as the additional formula type of our well description.} and this holds for every man  $m\in M'\cap\widehat{M}$, and hence this matching is a solution for $(M',W'')$.
Therefore, $(M,W)$~satisfies the finite-subset property. Thus, by \cref{scaling-lemma}, there exists a stable matching between $M$ and~$W$ in which each man $m\in M'\cap\widehat{M}$ is matched to a woman he weakly prefers to $w^m$, as required.
\end{proof}

\section{Infinite Time: Stable Matching with a Doubly Infinite Horizon}\label{dynamic}

In this \lcnamecref{dynamic} we use \cref{scaling-lemma} to prove the existence of stable matchings in dynamic, infinite-size, and infinite-horizon markets, generalizing a dynamic stable matching model of \citeN{pereyra2013dynamic} in which time is bounded from below. For simplicity, we formulate the dynamic setting with one-to-one matching (in each period).
Like all of the other models in this paper, the dynamic model we consider is motivated by an established framework---in this case, the model of teachers-to-schools assignment with tenure constraints introduced by \citeN{pereyra2013dynamic}. For consistency with our other matching  applications, here we speak of the agents as ``men'' and ``women,'' even though they are really stand-ins for ``teachers'' and ``schools.''

\subsection{Setting}
\begin{sloppypar}
A dynamic matching market is a tuple \dynamicmarket,
where \simplemarket\ is a (possibly infinite) matching market as in \cref{matching-warmup}, and where for each $m\in M$, we have that $a_m,d_m\in\ZZ\cup\{-\infty,\infty)$ such that $a_m<d_m$, respectively called the \emph{arrival time} and \emph{departure time} of $m$. For each $m\in M$, we say that $m$ is \emph{on the market} at all times $t\in\ZZ$ such that $a_m\le t<d_m$. (All $w\in W$ are considered to always be on the market.) A \emph{matching chronology} in a dynamic matching market is a mapping from woman-time pairs to men who are on the market at the relevant time, such that each man is matched to at most one woman at any given time. We say that a matching chronology is \emph{stable subject to tenure} if:
\begin{itemize}
\item
\textit{Men have tenure}: for every time $t\in\ZZ$, every man who is on the market both at time $t$ and at time $t+1$, weakly prefers his match at time $t+1$ to his match at time $t$.
\item
The matching is \textit{otherwise stable}: at any time $t\in\ZZ$, there exists no pair of man $m$ who is on the market at $t$ and woman $w$ such that $m$ strictly prefers $w$ to his match at $t$, and $w$ strictly prefers $m$ to her match at time $t$ who is furthermore not her match at time $t-1$.\footnote{Note that, following \citeN{pereyra2013dynamic}, we implicitly assume that agents' preferences over partners are consistent over time (unlike in, e.g., the framework of \citealp{kadam2018multiperiod}). Additionally, again following \citeN{pereyra2013dynamic}, we enforce stability myopically (unlike in the frameworks of \citealp{doval2014theory,liu,aliliu}).}
\end{itemize}
\end{sloppypar}

Our dynamic setting builds on the model of \citeN{pereyra2013dynamic}, in which arrival times are required to be nonnegative. If we were to restrict ourselves to nonnegative arrival times (or more generally, to arrival times that have a finite lower bound), then, following \citeN{pereyra2013dynamic}, a simple iterative application of the man-optimal stable matching mechanism would find a stable-subject-to-tenure matching chronology:
\begin{enumerate}
\item
As the matching at time~$0$,
use the man-optimal stable matching
for all women and all men with arrival time~$0$.
\item
For every $t>0$, as the matching at time~$t$,
use the man-optimal stable matching
for all women and all men who are on the market at time~$t$, with respect to slightly modified preferences: any man who is matched at time~$t-1$ and still on the market at time~$t$ is promoted (for the purposes of
finding the man-optimal stable
matching at time~$t$) to be top-ranked on the preferences of his match at time~$t-1$.
\end{enumerate}
The preceding argument in fact constitutes a full proof of the following.

\begin{theorem}[\citealp{pereyra2013dynamic}]\label{dynamic-existence-finite}
In any dynamic matching market where all arrival times are nonnegative, a stable-subject-to-tenure matching chronology exists.
\end{theorem}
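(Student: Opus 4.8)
The plan is to turn the recursive construction sketched just above the theorem into a genuine induction on time, exploiting the fact that nonnegativity of arrival times gives the time axis a least relevant period. Concretely, since every man satisfies $a_m\ge 0$, no man is on the market before time $0$, so the construction has a well-defined base case at $t=0$ and each man has a well-defined first period on the market; this is the sole role played by the lower-bound hypothesis, and it is exactly what will need to be circumvented later by Logical Compactness when the bound is removed.

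First I would build the matching chronology period by period. At time $0$ I take the matching to be a man-optimal stable matching of the static market whose women are all of $W$ and whose men are exactly those on the market at time $0$; such a matching exists by \cref{menopt} (or \cref{moptfin} when the market is finite). Given the matching at time $t$, I define the matching at time $t+1$ to be the man-optimal stable matching of the static market consisting of all women and all men on the market at time $t+1$, computed with respect to \emph{modified} preferences in which each woman $w$ moves to the top of her list the man (if any) to whom she was matched at time $t$ and who is still on the market at time $t+1$. Any such man was matched to $w$ at time $t$ and hence is acceptable to her, so the promotion merely reorders her existing list and preserves the order-type assumptions on preferences; thus \cref{menopt} applies at every stage and the chronology is well defined, with each man matched to at most one woman at each time.

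The core of the proof is checking that the chronology is \emph{stable subject to tenure}. Tenure is immediate from the promotion: if $m$ is on the market at both $t$ and $t+1$ and was matched to $w$ at time $t$, then $w$ ranks $m$ first under the time-$(t+1)$ modified preferences, so in \emph{any} stable matching for those preferences $m$ is matched to a woman he weakly prefers to $w$ (else $(m,w)$ would block), and in particular this holds at the man-optimal matching. For ``otherwise stable'' I would argue by contradiction: suppose $(m,w)$ violates the condition at time $t+1$, so $m$ strictly prefers $w$ to his time-$(t+1)$ partner, $w$ strictly prefers $m$ to her time-$(t+1)$ partner, and that partner is \emph{not} her time-$t$ partner. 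Writing $m'$ for $w$'s promoted time-$t$ partner, one first rules out $m=m'$ using tenure, and then, since neither $m$ nor $w$'s current partner equals $m'$, the single promotion does not affect the relevant comparison, so $(m,w)$ also blocks under the modified preferences—contradicting stability of the man-optimal matching at time $t+1$. The main obstacle is precisely this bookkeeping: verifying that the one promotion on each woman's list corresponds exactly to the ``not her match at $t-1$'' clause, so that the only blocking pairs the construction leaves standing are those the definition already excuses, while no spurious blocking pair is created.
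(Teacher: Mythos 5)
Your proposal is correct and takes essentially the same route as the paper, which simply presents the iterative promote-the-previous-partner construction and asserts that it constitutes a full proof; you carry out that same construction and additionally spell out the verification that tenure follows from the promotion and that the one modified position on each woman's list aligns exactly with the ``not her match at time $t-1$'' exemption in the blocking condition. Your explicit appeal to \cref{menopt} when the per-period market is infinite, and your observation that the promotion preserves the order-type restriction on preference lists, are details the paper leaves implicit but are consistent with its intent.
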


Our analysis from \cref{matching-warmup} suffices to immediately scale \cref{dynamic-existence-finite} to infinite markets (via the same iterative process of finding successive man-optimal stable matchings). In this \lcnamecref{dynamic}, however, we scale this result in a different way: making time, rather than market size, infinite---or more precisely, doubly infinite.

\subsection{Challenge} 

As we have just seen, the proof of \cref{dynamic-existence-finite} depends heavily on our ability to identify a ``starting'' matching that we can adjust/build off of in subsequent time periods. However, the need to assume a fixed start time makes the model less representative of a steady-state. 

What if  arrival times  have no finite lower bound? Due to symmetry considerations, there can be no ``reasonable'' deterministic variant of the man-optimal stable matching mechanism that reaches a stable-subject-to-tenure marriage chronology:
\begin{example}\label{two-symmetric}
Consider a case of one woman $w$ and an  infinite set of men $m_t$ such that for each $t\in\ZZ$, a man $m_t$ has arrival time $t$ and departure time $t+2$. For any profile of preference lists in which no agent finds any other agent unacceptable, there are precisely two stable-subject-to-tenure matchings chronologies:
\begin{itemize}
\item
All men with even arrival times are matched to $w$ throughout their time on the market; all men with odd arrival times are never matched.
\item
All men with odd arrival times are matched to $w$ throughout their time on the market; all men with even arrival times are never matched.
\end{itemize}
\end{example}
Which of the two preceding stable-subject-to-tenure matching chronologies (none more or less ``man-optimal'' than the other) would a deterministic
variant of the man-optimal stable matching mechanism,
if one existed, choose in the setting of \cref{two-symmetric}? The only way to break the symmetry and choose between these two is to give special treatment to some specific time period, such as $0$---but it is easy to see that just picking some finite time $t$, matching $m_t$ with $w$, and solving forward and (somehow) backward would not work since the removal of even a single man, say with very small (negative) arrival time $t'$, from the market would collapse the two stable matching chronologies starting at $t'+1$ (with $m_{t'+1}$ being matched in any stable matching).

In the absence of a reasonable variant of
the man-optimal stable matching mechanism, we need a new way to prove existence when arrival times are unbounded. To resolve this problem, we turn again to \cref{scaling-lemma}.

\subsection{Existence} To prove the existence of a stable-subject-to-tenure matching chronology in the infinite-history model, we require a mild ``local finiteness'' condition (but not local boundedness).
This condition holds, for example, when in each $t$ only finitely many men are on the market. 
We later show that this condition cannot be dropped.

\begin{definition}
Let \dynamicmarket\ be a dynamic matching market. If for every time $t\in\ZZ$, only finitely many men are on the market at both $t$ and $t+1$, then we say that the dynamic matching market has \emph{finite presence}.
\end{definition}

\begin{theorem}\label{dynamic-existence}
In any dynamic matching market (with arbitrary arrival and departure times) that has finite presence, a stable-subject-to-tenure matching chronology exists.
\end{theorem}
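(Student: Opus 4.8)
The plan is to mirror the Logical Compactness template of the warm-up proof of \cref{simplest}, now indexing the matching variables by time. I would introduce, for every man $m$, woman $w$, and integer time $t$ with $a_m\le t<d_m$, a variable $\tmatched{m}{w}{t}$ that is TRUE exactly when $m$ is matched to $w$ at time $t$. The feasibility and individual-rationality formulae carry over essentially verbatim from \eqref{man-quota}, \eqref{woman-quota}, and \eqref{individually-rational}, applied separately at each time $t$, so that a model encodes a matching chronology. It then remains to encode the two clauses of stability-subject-to-tenure by individually finite formulae.

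For tenure, for each $t$, each man $m$ on the market at both $t$ and $t+1$, and each woman $w$ acceptable to $m$, I would add
\[
\tmatched{m}{w}{t}\rightarrow\smashoperator[r]{\bigvee_{w'\succcurlyeq_m w}}\tmatched{m}{w'}{t+1},
\]
which is finite because only finitely many women are weakly preferred by $m$ to $w$. For the ``otherwise stable'' clause, for each $t$, each $m$ on the market at $t$, and each mutually acceptable pair $(m,w)$, I would rule out $(m,w)$ blocking at time $t$ with
\[
\Bigl(\smashoperator[r]{\bigvee_{w'\succcurlyeq_m w}}\tmatched{m}{w'}{t}\Bigr)\vee\Bigl(\smashoperator[r]{\bigvee_{m'\succcurlyeq_w m}}\tmatched{m'}{w}{t}\Bigr)\vee\Bigl(\bigvee_{m''\in C_t}\bigl(\tmatched{m''}{w}{t-1}\wedge\tmatched{m''}{w}{t}\bigr)\Bigr),
\]
where $C_t$ denotes the set of men on the market at both $t-1$ and $t$. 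The first two disjuncts say that $m$ or $w$ is already matched at $t$ to a partner (weakly) at least as good, while the third captures the tenure-protection proviso that $w$'s time-$t$ partner already held her at time $t-1$. As in the warm-up, one checks that the models satisfying all four families of formulae are in bijection with stable-subject-to-tenure matching chronologies; and exactly as with the faulty \eqref{bad-not-blocking}, the all-unmatched chronology is correctly excluded whenever a mutually acceptable pair is simultaneously present, because then all three disjuncts fail.

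The main obstacle --- and the single place the hypothesis is used --- is making that third disjunct finite. The statement ``$w$'s time-$t$ match coincides with her time-$(t-1)$ match'' is a priori an infinite disjunction over all men, but any such man must be present at both $t-1$ and $t$, hence lies in $C_t$, which is finite \emph{precisely} by finite presence (applied at time $t-1$). This is what keeps every formula individually finite, and it is the crux distinguishing this doubly-infinite setting from the static one; it also explains why finite presence is the natural sufficient condition here.

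Finally, I would verify the finite-subset hypothesis of \cref{thm:compact}. A finite $\Phi'\subseteq\Phi$ mentions finitely many variables, hence only times in some bounded range; let $t_{\min}$ and $t_{\max}$ be the least and greatest times appearing, and let $M'$, $W'$ be the mentioned men and women. Since each no-blocking formula at time $t$ refers back to $t-1$, I would build the finite dynamic market on men $M'$, women $W'$, and the times $\{t_{\min}-1,\dots,t_{\max}\}$ (clamping each arrival time up to $t_{\min}-1$ and each departure down to $t_{\max}+1$), so that every reference appearing in $\Phi'$ falls inside the window; after shifting time so that arrival times are nonnegative, \cref{dynamic-existence-finite} yields a stable-subject-to-tenure chronology $\mu'$ for this finite market. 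Extending $\mu'$ by ``unmatched everywhere else'' gives a model satisfying $\Phi'$: feasibility, individual rationality, and tenure are immediate, and for each no-blocking formula the absence of a block in $\mu'$ supplies one of the three disjuncts --- with the third available because any man carried over by $w$ from $t-1$ to $t$ in $\mu'$ indeed lies in $C_t$. Hence every finite subset is satisfiable, so by \cref{thm:compact} $\Phi$ is satisfiable, and the chronology encoded by a satisfying model proves \cref{dynamic-existence}.
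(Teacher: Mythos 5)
Your proposal is correct and follows essentially the same route as the paper's proof: time-indexed matching variables, finite presence invoked exactly to make the tenure-protection disjunct of the no-blocking formula finite, and finite subsets satisfied by applying \cref{dynamic-existence-finite} with the least mentioned time serving as ``period $0$.'' The only (immaterial) difference is that you encode tenure as a separate formula family, whereas the paper folds it into the no-blocking formula by conjoining $\lnot\tmatched{m}{w}{t-1}$ to the disjunct in which $w$ gets a strictly better man; given the remaining formulae the two encodings are equivalent.
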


Before proving \cref{dynamic-existence}, we note that the finite presence condition in that \lcnamecref{dynamic-existence} cannot be dropped. 

\begin{example}It is straightforward to verify that  in a dynamic market with one woman $w$ and countably many men $\{m_t\}_{t\in\NN}$, with $a_{m_t}=-t$ and $d_{m_t}=0$ for every $t\in\NN$, no stable-subject-to-tenure matching chronology exists if all participants find all possible partners acceptable.
\end{example}

\begin{proof}[Proof of \cref{dynamic-existence}]
We prove the theorem using \cref{scaling-lemma}.

\proofstep{Definition of $\mathcal{P}$}
Let $\dynamicmarket$ be a dynamic matching market. Let $\mathcal{P}\eqdef\bigl\{(M',W',t_0)~\big|~M'\subseteq M \And W'\subseteq W \And t_0\in\ZZ\cup\{-\infty\} \bigr\}$ be the set of all triplets of sets of men, sets of women, and ``start times.'' A \emph{solution} for $(M',W',t_0)\in\mathcal{P}$ is a stable-subject-to-tenure matching chronology between $M'$ and $W'$ (in the induced submarket of $\dynamicmarket$) that starts at time $t_0$ (i.e., setting all earlier arrival times to $t_0$, dropping all men for whom $d_m\le t_0$, and starting with any arbitrary individually rational matching at time $t_0$).

\proofstep{Well describability}
We define a variable $\tmatched{m}{w}{t}$ for every $(m,w)\in M\times W$ and time $t\in\ZZ$.
In what follows, for each $(M',W')\in\mathcal{P}$ we define a set $\formulae{(M',W',t_0)}$ of formulae over these variables so that models of $\formulae{(M',W',t_0)}$ are in one-to-one correspondence with the (not-yet-proven-to-be-nonempty) set of stable-subject-to-tenure matching chronologies between $M'$ and $W'$ that start at $t_0$.
The correspondence is obtained by endowing the variable $\tmatched{m}{w}{t}$ with the semantic interpretation ``$m$ and $w$ are matched at time $t$.'' That is, it maps a model for $\formulae{(M',W',t_0)}$ to the matching chronology such that for every $(m,w)\in M'\times W'$ and for every $t\in\ZZ$, we have that $m$ and $w$ are matched at time $t$ if and only if the variable $\tmatched{m}{w}{t}$ is $\mathsf{True}$ in that model.
We define the set $\formulae{(M',W',t_0)}$ to consist of the following formulae:
\begin{enumerate}
\item
\begin{sloppypar}
for all $m\in M'$, all $t\ge t_0$, and all distinct $w,w'\in W'$, the formula
`$\tmatched{m}{w}{t}\rightarrow\lnot\tmatched{m}{w'}{t}$',
requiring that man $m$ be matched to at most one woman at time $t$;
\end{sloppypar}
\item
\begin{sloppypar}
for all $w\in W'$, all $t\ge t_0$, and all distinct $m,m'\in M'$, the formula
`$\tmatched{m}{w}{t}\rightarrow\lnot\tmatched{m'}{w}{t}$',
requiring that woman $w$ be matched to at most one man at time $t$;
\end{sloppypar}
\item
for all $w\in W'$, all $t\ge t_0$, and all $m\in M'$ such that either $m$ is not on the market at $t$, or $m$ and $w$ are not both acceptable to each other, the formula
`$\lnot\tmatched{m}{w}{t}$',
requiring that no one be matched to someone who is unacceptable to them, and that men not be matched when they are not on the market;
\item
for all $t>t_0$, all $w\in W'$, and all $m\in M'$ who are on the market at time $t$ and are acceptable to each other, the (finite!) formula
`$\lnot\tmatched{m}{w}{t}\rightarrow\bigl(\bigvee_{w'\succ_{m}w}\tmatched{m}{w'}{t}\vee\bigvee_{m'\succ_{w}m}(\tmatched{m'}{w}{t}\wedge\lnot\tmatched{m}{w}{t-1})\vee\linebreak\bigvee_{m':a_{m'}< t<d_{m'}}(\tmatched{m'}{w}{t}\wedge\tmatched{m'}{w}{t-1})\bigr)$',
requiring that $(m,w)$ not be a blocking pair at time $t$. The first two disjunctions are finite since $m$ (resp.\ $w$) prefers only $k$ partners over his (resp.\ her) $k$-th ranked partner, and the last disjunction is finite by finite presence.
\end{enumerate}
\begin{sloppypar}
By construction, the models (over the variables that appear in $\formulae{(M',W',t_0)}$) of $\formulae{(M',W',t_0)}$ are in one-to-one correspondence with stable-subject-to-tenure matchings chronologies between $M'$ and $W'$ that start at $t_0$. Therefore, $(\formulae{(M',W',t_0)})_{(M',W',t_0)\in\mathcal{P}}$ is a well description of~$\mathcal{P}$.
\end{sloppypar}

\proofstep{Finite-subset property}
Let $\Phi'\subset\formulae{(M,W,-\infty)}$ be a finite subset.
Since $\Phi'$ is finite, it ``mentions'' (through variables used) only finitely many men, women, and times; denote the set of these men by $M'\subset M$, the set of these women by $W'\subset W$, and the minimum such time by~$t_0$.
By definition, $\Phi'\subseteq\formulae{(M',W',t_0)}$. By \cref{dynamic-existence-finite}, using $t_0$ as ``period 0,'' there exists a stable-subject-to-tenure matching chronology between $M'$ and $W'$ that starts at~$t_0$ (this chronology is furthermore stable in the usual sense at $t_0$, but our proof does not require this). Therefore, $(M,W,-\infty)$~satisfies the finite-subset property. Thus, by \cref{scaling-lemma}, there exists a stable-subject-to-tenure matching chronology between $M$ and~$W$ (``starting at $-\infty$''), as required.
\end{proof}

\section{Nash Equilibria in Games on Infinite Graphs}\label{nash}

In this \lcnamecref{nash}, we turn to a different setting---games on graphs \cite[see, e.g.,][and the references therein]{kearns2007graphicalgames}---which includes overlapping-generations models, even with doubly infinite time. 
We use \cref{scaling-lemma} to show the existence of a Nash equilibrium in games on infinite graphs. Our result here is covered by \citeN{Peleg1969} (who directly scales the seminal existence result of \citealp{Nash1951}), but we give a new proof that uses the same principled approach we use throughout this paper.

Here, we use \cref{scaling-lemma} to scale the existence of arbitrarily good approximate Nash equilibria, and then show that the existence of such approximate equilibria implies the existence of an exact equilibrium. This two-step proof strategy is chosen for convenience: with additional variables, it is easy to encode the second step of the proof into the logical formulation just like we did in \cref{walrasian}.

In a \emph{game on a graph}, there is a (potentially infinite) set of players $I$, each having a finite set of pure \emph{strategies} $S_i$. 
 Each player $i\in I$ is linked to a finite set of neighbors $N(i) \subset I$  with $i\in N(i)$, and her utility only depends on the strategies played by players in the set $N(i)$.\footnote{Readers familiar with \citeN{Peleg1969} will note that even on graphs, Peleg's assumptions are weaker than those stated here. Our analysis can be generalized to cover such weaker assumptions, and that our assumptions in other sections can also be similarly weakened. Nonetheless, in general, throughout in this paper we prefer ease and clarity of exposition over tightening assumptions (as noted in the Introduction, we consider the results that we present to be minimal working examples), as our goal is to introduce a unified, transparent technique.} This setting occurs, for example, in infinite-horizon overlapping-generations models, where at each point in time there are only finitely many players alive, and a player's utility depends only on the behavior of contemporary players. For any player $i$ we denote by $\Sigma_i\eqdef\Delta(S_i)$ the set of \emph{mixed strategies} (i.e., distributions over pure strategies) of player $i$. A \emph{mixed-strategy profile} $(\sigma_i)_{i\in I}$ is a specification of a mixed strategy $\sigma_i\in\Sigma_i$ for every player $i\in I$. A mixed-strategy profile $(\sigma_i)_{i\in I}$ is a \emph{Nash equilibrium} if for every $i\in I$ and every possible deviating strategy $\sigma'_i\in\Sigma_i$, it holds that $u_i(\sigma_{N(i)})\geq u_i(\sigma'_i,\sigma_{N(i)\setminus \{i \} })$.

Games on \emph{finite} graphs have finitely many players and finitely many strategies per player; hence, the seminal analysis of \citeN{Nash1951} implies that they have Nash equilibria.

\begin{theorem}[Follows from \citealp{Nash1951}]\label{nash-finite}
Every game on a finite graph has a Nash equilibrium. 
\end{theorem}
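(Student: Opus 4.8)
The plan is to recognize that a game on a finite graph is, after unpacking the definitions, simply a finite normal-form game in the sense of \citeN{Nash1951}, so that the result follows immediately from his existence theorem. First I would observe that because the graph is finite, the player set $I$ is finite, and by assumption each player $i \in I$ has a finite pure-strategy set $S_i$; hence there are only finitely many players, each with finitely many pure strategies, and each mixed-strategy space $\Sigma_i = \Delta(S_i)$ is a (nonempty, compact, convex) simplex.

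Next, I would reconcile the graphical payoff structure with the standard normal-form setup. Each utility function $u_i$ is specified as depending only on the strategies played in the neighborhood $N(i)$; I would extend $u_i$ to a function of the entire mixed-strategy profile $(\sigma_j)_{j \in I}$ by declaring it constant (vacuously) in the coordinates $\sigma_j$ for $j \notin N(i)$. Under this extension, the graphical Nash equilibrium condition $u_i(\sigma_{N(i)}) \ge u_i(\sigma'_i, \sigma_{N(i) \setminus \{i\}})$ is literally identical to the ordinary best-response condition $u_i(\sigma) \ge u_i(\sigma'_i, \sigma_{-i})$ of a finite normal-form game. Thus the notion of Nash equilibrium defined here coincides, on a finite graph, with the classical one.

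Having made this identification, I would invoke \citeauthor{Nash1951}'s theorem, which guarantees that every finite normal-form game—finitely many players, each with a finite pure-strategy set and an arbitrary real-valued payoff function—admits a mixed-strategy Nash equilibrium. Applied to the finite normal-form game just constructed, it yields a profile $(\sigma_i)_{i \in I}$ satisfying the equilibrium inequalities, which is precisely a Nash equilibrium of the game on the finite graph.

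The main point here is not a genuine obstacle but rather a verification: one must check that the locality of payoffs imposes no restriction that would obstruct the applicability of Nash's theorem. Indeed, the graph structure constrains only the \emph{form} of each $u_i$ (namely, its dependence on a subset of the strategy coordinates) and never enlarges either the player set or the strategy spaces, so finiteness is preserved and \citeauthor{Nash1951}'s result applies verbatim. The statement is therefore essentially a specialization of Nash's theorem; the substantive content of \cref{nash} lies instead in the subsequent lifting of this finite result to games on infinite graphs by way of Logical Compactness and a limiting argument, exactly as in \cref{walrasian}.
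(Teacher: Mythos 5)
Your proposal is correct and matches the paper's treatment: the paper likewise observes that a game on a finite graph has finitely many players, each with finitely many pure strategies, and therefore is just a finite normal-form game to which \citeauthor{Nash1951}'s theorem applies directly. Your additional verification that the locality of payoffs only restricts the form of each $u_i$ (and hence poses no obstacle) is a reasonable elaboration of the same one-line argument.
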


Our main result of this \lcnamecref{nash} is that Nash equilibria are guaranteed to exist even in games on infinite graphs.

\begin{theorem}[Follows from \citealp{Peleg1969}]\label{exact-nash}
Every game on a (possibly infinite) graph has a Nash equilibrium. 
\end{theorem}

As already noted, we prove \cref{exact-nash} by first using \cref{scaling-lemma} to prove the existence of arbitrarily good approximate Nash equilibria, and then showing that the existence of such approximate Nash equilibria implies \cref{exact-nash}. For a given $\varepsilon>0$, a mixed-strategy profile $(\sigma_i)_{i\in I}$ is an \emph{$\varepsilon$-Nash equilibrium} if for every $i\in I$ and every possible deviating strategy $\sigma'_i\in\Sigma_i$, it holds that $u_i(\sigma_{N(i)})\geq u_i(\sigma'_i,\sigma_{N(i)\setminus \{i \}})-\varepsilon.$

\begin{lemma}\label{eps-nash}
For any $\varepsilon>0$, every (possibly infinite) game on a graph has an $\varepsilon$-Nash equilibrium. 
\end{lemma}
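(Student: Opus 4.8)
The plan is to follow the same Compactness template used in the proof of \cref{omegaeps-walrasian}: encode an approximate equilibrium as a model of an individually-finite set of propositional formulae, verify that every finite subset is satisfiable by appealing to the finite existence result \cref{nash-finite}, and then invoke \cref{thm:compact}. The only genuinely new ingredient is that a mixed strategy is a continuous object (a point of the simplex $\Sigma_i=\Delta(S_i)$), so---exactly as prices had to be discretized in \cref{omegaeps-walrasian}---I would discretize each player's strategy space to a finite grid. For each player $i$ I would fix an integer $n_i$ and let $\Sigma_i^{(n_i)}\subset\Sigma_i$ be the (finite) set of mixed strategies all of whose coordinates are multiples of $\nicefrac{1}{n_i}$; note that this grid always contains every pure strategy. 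The variables are $\plays{i}{\sigma}$ for each $i\in I$ and each $\sigma\in\Sigma_i^{(n_i)}$, intended to be TRUE exactly when $i$ plays the grid strategy $\sigma$.

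I would then write three families of (individually finite) formulae. The first two are the familiar ``existence'' and ``uniqueness'' clauses $\bigvee_{\sigma\in\Sigma_i^{(n_i)}}\plays{i}{\sigma}$ and $\plays{i}{\sigma}\rightarrow\lnot\plays{i}{\sigma'}$ (for distinct $\sigma,\sigma'$), guaranteeing that each player plays exactly one grid strategy. The third family encodes $\varepsilon$-best response: for each player $i$ and each profile $(\sigma_j)_{j\in N(i)\setminus\{i\}}$ of grid strategies for $i$'s neighbors, I would add the formula
\[
\Bigl(\bigwedge_{j\in N(i)\setminus\{i\}}\plays{j}{\sigma_j}\Bigr)\rightarrow\bigvee_{\sigma_i\in B_i}\plays{i}{\sigma_i},
\]
where $B_i\subseteq\Sigma_i^{(n_i)}$ is the set of grid strategies that are $\varepsilon$-best responses to $(\sigma_j)_j$. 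Because $u_i$ is multilinear, the supremum over all (continuous) mixed deviations equals the maximum over the finitely many pure strategies, so $B_i$ is effectively computable and, crucially, nonempty (it contains the best pure strategy, which is an exact, hence $\varepsilon$-, best response and lies in the grid). Each such formula is finite because $N(i)$, each neighbor's grid, and $S_i$ are all finite. A model of all these formulae assigns each player a single grid strategy that $\varepsilon$-best-responds to her neighbors against every mixed deviation---that is, precisely an $\varepsilon$-Nash equilibrium.

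It then remains to show every finite subset $\Phi'$ is satisfiable, which is where \cref{nash-finite} enters. Only finitely many players $I'$ are mentioned in $\Phi'$, and whenever a best-response formula for $i$ lies in $\Phi'$ its antecedent already mentions all of $N(i)$, so $N(i)\subseteq I'$. I would form the finite game on $I'$ by fixing the strategies of all players outside $I'$ to arbitrary pure strategies; \cref{nash-finite} yields an exact (continuous) Nash equilibrium $(\sigma^*_i)_{i\in I'}$, and for every mentioned $i$ whose best-response formula is in $\Phi'$ this fudging does not affect $u_i$ (all of $i$'s neighbors sit in $I'$), so $\sigma^*_i$ genuinely best-responds to $(\sigma^*_j)_{j\in N(i)\setminus\{i\}}$. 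Rounding each $\sigma^*_i$ to the nearest grid point $\hat{\sigma}_i\in\Sigma_i^{(n_i)}$ and assigning the corresponding variables gives a candidate model; a short multilinearity estimate shows that if the grids are fine enough then $\hat{\sigma}_i$ lies in $B_i$ with respect to the rounded neighbor profile, so $\Phi'$ is satisfied.

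The main obstacle is the bookkeeping needed to obtain the \emph{uniform} $\varepsilon$ in the statement, rather than a degree-dependent slack like the $|O_i|\cdot\varepsilon$ of \cref{omegaeps-walrasian}. Rounding $(\sigma^*_i)$ perturbs $i$'s best-response value both through $i$'s own rounding and through the rounding of each neighbor $j\in N(i)$, and these errors must jointly stay below $\varepsilon$ for \emph{every} player simultaneously, even though degrees and utility magnitudes may be unbounded across the infinite graph. The resolution is that all of this is local: since each $i$ has finitely many neighbors and $u_i$ is bounded on its finite domain, I can choose each $n_j$ large enough that rounding player $j$ disturbs the best-response value of each of $j$'s (finitely many) neighbors by at most its share of $\varepsilon$. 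This per-player choice of grid fineness---fixed in advance, before any formulae are written---depends only on finite local data and so is free of circularity, and it ensures the rounded profile is a genuine (uniform) $\varepsilon$-best response everywhere. With finite satisfiability established, \cref{thm:compact} furnishes a model of the full set of formulae, i.e., an $\varepsilon$-Nash equilibrium of the infinite game.
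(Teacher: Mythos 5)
Your proposal is correct and follows essentially the same route as the paper's proof: per-player discretization of the mixed-strategy simplices with grid fineness chosen from finite local data, existence/uniqueness/$\varepsilon$-best-response formulae, finite satisfiability via \cref{nash-finite} on the mentioned players (with outsiders fixed to pure strategies) followed by rounding, and then \cref{thm:compact}. The only differences are cosmetic---you control the rounding error via multilinearity and an explicit allocation of ``shares'' of $\varepsilon$ across each player's neighbors, whereas the paper gets the same uniform bound by invoking Heine--Cantor uniform continuity and measuring neighbor profiles in the $\ell^\infty$ metric so that errors do not accumulate.
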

    
\begin{proof}
Let $\varepsilon>0$. For each player $i\in I$, the space of profiles of mixed-strategies of players in $N(i)$ is a compact metric space. Specifically, for this proof it is convenient to consider the space of profiles of mixed-strategies as a metric space with respect to the $\ell^\infty$ metric;\footnote{By equivalence of all norms on $\RR^n$,  the space of profiles of mixed-strategies is also compact with respect to the $\ell^\infty$ metric.} as each player~$i$ has a continuous utility function whose domain is this compact metric space, players' utility functions are uniformly continuous by the Heine--Cantor theorem. Thus, there exists $\hat{\delta}_i>0$ that assures that if two profiles of  mixed strategies of players in $N(i)$ are less than $\hat{\delta}_i$ apart, then the utilities they yield to  $i$ differs by no more than $\nicefrac{\varepsilon}{2}$.

For each player~$i$, choose $\delta_i\eqdef\min\bigl\{ \hat{\delta}_j ~\big|~ j\in N(i)\bigr\}>0$. Recall that $\Sigma_i$ denotes the space of player~$i$'s mixed strategies, and let $\Sigma^{\delta_i}_{i}\subset \Sigma_i$ be a finite set of strategies that includes all of $i$'s pure strategies, and includes for any mixed strategy in $\Sigma_i$ a strategy that is at most $\delta_i$ away from it; such a set exists by the compactness of $\Sigma_i$. We prove the lemma by proving that the given game admits an $\varepsilon$-Nash equilirbium in which each player $i$ plays a strategy in $\Sigma^{\delta_i}_{i}$. We prove this using \cref{scaling-lemma}.

\proofstep{Definition of $\mathcal{P}$}
Let $\mathcal{P}$ be all the subsets of $I$. A \emph{solution} for $I'\in\mathcal{P}$ is a strategy profile for $I$ that is a $\varepsilon$-Nash equilibrium in the induced game between all players in $I'$ (where all other ``players'' play any arbitrary strategy), in which each player $i\in I$ plays a strategy in $\Sigma^{\delta_i}_{i}$.

\proofstep{Well describability}
We define a variable $\plays{i}{\sigma_i}$ for every player $i\in I$ and discretized strategy $\sigma_i\in\Sigma^{\delta_i}_{i}$.
In what follows, for each $I'\in\mathcal{P}$ we define a set $\formulae{I'}$ of formulae over these variables so that models of $\formulae{I'}$ are in one-to-one correspondence with the (not-yet-proven-to-be-nonempty) set of solutions for $I'$.
The correspondence is obtained by endowing the variable $\plays{i}{\sigma_i}$ with the semantic interpretation ``$i$ plays the strategy $\sigma_i$.'' That is, it maps a model for $\formulae{I}$ to the strategy profile such that for every $i\in I'$, we have that $i$ plays the strategy $\sigma_i$ if and only if the variable $\plays{i}{\sigma_i}$ is $\mathsf{True}$ in that model.
For every player~$i$ and every profile $\sigma_{N(i)\setminus\{i\}}$ of mixed-strategies for $N(i)\setminus\{i\}$, we define the set of $\varepsilon$-best responses of $i$:
\[
\text{BR}^{\varepsilon}_i (\sigma_{N(i)\setminus\{i\}})\eqdef\Bigl\{\sigma_i ~\Big|~ u_i (\sigma_i,\sigma_{N(i)\setminus\{i\}})\geq \underset{\sigma'_i \in \Sigma_i}{\max}\bigl\{u_i (\sigma'_i,\sigma_{N(i)\setminus\{i\}})\bigr\}-\varepsilon\Bigr\}.
\]
We define the set $\formulae{I'}$ to consist of the following formulae:
\begin{enumerate}
\item
for all $i\in I$, the (finite!) formula
`$\bigvee_{\sigma\in\Sigma^{\delta_i}_i}\plays{i}{\sigma}$',
requiring that $i$ plays some (discretized) strategy (this formula is finite because $\Sigma^{\delta_i}_i$ is);
\item
for all $i\in I$ and all distinct $\sigma_i,\sigma'_i \in \Sigma^{\delta_i}_i $, the formula
`$\plays{i}{\sigma_i} \rightarrow \lnot \plays{i}{\sigma'_i}$',
requiring that the strategy that player $i$ plays be unique;
\item
for all $i\in I'$ and all profiles $\sigma=(\sigma_j)_{j\in N(i)\setminus\{i\}}\in\bigtimes_{j\in N(i)\setminus\{i\}}\Sigma^{\delta_i}_{i}$ of discretized mixed strategies of $N(i)\setminus\{i\}$, the (finite!) formula
\[
\biggl(\smashoperator[r]{\bigwedge_{j\in N(i)\setminus\{i\}}}\plays{j}{\sigma_j}\biggr) \rightarrow \biggl(\smashoperator[r]{\bigvee_{\sigma_i \in \Sigma^{\delta_i}_i \cap \text{BR}^{\varepsilon}_i (\sigma)}}\plays{i}{\sigma_i}\biggr),
\]
requiring that $i$ $\varepsilon$-best-responds to the strategies played by the other players.
\end{enumerate}
By construction, $(\formulae{I'})_{I'\in\mathcal{P}}$ is a well description of~$\mathcal{P}$.

\proofstep{Finite-subset property}
Let $\Phi'\subset\formulae{I}$ be a finite subset.
Since $\Phi'$ is finite, it ``mentions'' (through variables used) only finitely many players; denote the set of these players by $I'\subset I$.
By definition, $\Phi'\subseteq\formulae{I'}$. Consider the induced game on $I'$ obtained by having each player $i\in I\setminus I'$ mechanically play some fixed strategy in $\Sigma^{\delta_i}_{i}$. By \cref{nash-finite}, this game has a Nash equilibrium. By choosing for each player $i\in I'$ a closest strategy in $\Sigma_i^{\delta_i}$ to the one she plays at this Nash equilibrium, each player's utility changes by at most $\nicefrac{\varepsilon}{2}$ (by uniform continuity), and so does the utility attainable by best responding. Therefore, since we started with a Nash equilibrium, it is assured that each player is now playing an $\varepsilon$-best response, so the resulting strategy profile is a solution to $I'$.
Therefore, $I$~satisfies the finite-subset property. Thus, by \cref{scaling-lemma}, there exists an $\varepsilon$-Nash equilibrium in the grand game (among all player in $I$), as required.
\end{proof}

Now, we can use \cref{eps-nash} to prove \cref{exact-nash} by way of a ``diagonalization'' argument.
\begin{proof}[Proof of \cref{exact-nash}]
Since each player in the graph has finitely many neighbors, every connected component of the graph consists of at most countably many players. As it is enough to show the existence of a Nash equilibrium in each connected component separately (we use the Axiom of Choice here),
let us focus on one connected component.
By \cref{eps-nash} there exists a sequence $(\sigma^n)_{n=1}^{\infty}$ of $\frac{1}{n}$-Nash equilibria in the game on this connected component.
 Since each of the at-most-countably-many coordinates of each element in this sequence lies in $[0,1]$, we can choose a subsequence (a ``diagonal subsequence'') that converges in all coordinates; let $\sigma^{*} $ denote the limit of that subsequence.

We claim that $\sigma^*$ is a Nash equilibrium. To see this, note that for every $i\in I$ and $\sigma'_i\in\Sigma_i$, we have for the $n$th elements of the sequence that
\[
u_i(\sigma^n_{N(i)})\geq u_i(\sigma'_i,\sigma^n_{N(i)\setminus \{i \} })-\tfrac{1}{n}.
\]
By the continuity of $u_i$, this means that for every $i\in I$ and $\sigma'_i\in\Sigma_i$, we have
\[
u_i(\sigma^*_{N(i)})\geq u_i(\sigma'_i,\sigma^*_{N(i)\setminus \{i \} }),
\]
so no player has a profitable deviation under the profile $\sigma^{*}$. Hence, $\sigma^{*}$ is indeed a Nash equilibrium---and in particular, we see that a Nash equilibrium exists in the game, as desired.
\end{proof}

\section{Related Approaches}\label{related-methodology}

To our knowledge, we are the first to use Propositional Logic as a general tool for scaling results in economics. It is worth mentioning within this context, though, the work of \citet{holzman1984extension}, who used Logical Compactness to relax topological conditions in \citet{fishburn1984comment}.\footnote{Logical Compactness is frequently used to scale existence results in mathematics from finite settings to infinite ones (See, e.g.,  \citealp{bruijn1951colour} and  \citealp{halmos2009marriage}). }

Our approach was stated using Propositional logic, but, in fact, \cref{scaling-lemma} generalizes to well descriptions using First-Order Logic as well.
 Propositional logic is a special case of First-Order Logic. Importantly, it does not use quantifiers (i.e., $\forall$ and $\exists$). We chose to focus on this special case in order to simplify the exposition, since we were not able to identify any economic application in which the added generality would be beneficial.\footnote{Well-describing economic problems using the full generality of First-Order Logic is Challenging. For example,  fixing sets of objects (e.g., men and women) is not straightforward. In fact, by the (upward) L{\"o}wenheim--Skolem theorem, if a first-order theory has an infinite model (a model with an infinite domain) then it has a model of any larger cardinality, which implies that first-order theories cannot bound the cardinality of their infinite models. Hence, constants would have to play an important role in the well description.}   
 
 Other papers have used First-Order (rather than Propositional) Logic and nonstandard analysis to unify, refine, and scale results in economic theory. Examples include  \citet{anderson1978elementary}, \citet{brown1980extension}, \citet{anderson1991non}, \citet{khan1993irony}, \citet{blume1994algebraic}, \citet{halpern2009nonstandard}, and \citet{halpern2016characterizing}. \citet{CES} used Compactness in First-Order Logic to formalize the notion of the empirical content of a model. Like us, that paper studies applications to revealed preferences theory (see also \citealp{CES2}), however it deals with  different questions from us, and uses different techniques.

\citeN{HellmanLevy2019} use (still different) tools from mathematical logic to prove  conceptually related, yet incomparable,  results: while our paper scales certain finite results to infinite settings, their paper scales certain countably infinite results to \emph{un}countably infinite settings.
Specifically, they give sufficient conditions to scale certain existence results that are known to hold whenever there are countably many possible \emph{states of the world} into scenarios with uncountably many possible states of the world.
Their results are incomparable to any of our results, and even to our existence-in-large-market results, first because they always assume that the number of agents is finite (an infinite number of agents, even with only two possible types for each, would already result in an uncountably infinite set of possible states of the world to begin with), and second, because they require that the theorems that they scale be already known to hold for the countably infinite, rather than only the finite, case.

\begin{sloppypar}
We have been asked about the relation to various theorems in topology. \cref{scaling-lemma} is stated in terms of logical propositions and its proof relies on Logical Compactness. 
Logical propositions can be translated into closed sets in an application-specific topological (product) space, in which setting Logical Compactness follows from Tychonoff's theorem on topological compactness. 
In other words, \cref{scaling-lemma} can be proved using Tychonoff's theorem, and its statement can be translated to the language of topology.  
 However, in our view, the resulting Lemma would be harder to directly formulate and the conditions would be harder to verify. And while topological compactness or the language of nets are stronger and more general approaches, in the domains we study, they often introduce technical issues that can render arguments incorrect in subtle ways (e.g., matchings may converge to an object that is not a matching).
We therefore view the methodological part of our contribution as introducing a unifying approach that is simple and  intuitive to work with, and that does not require us to look for the ``right'' topological space or  apply topological  reasoning directly.\footnote{Once a proof is derived using \cref{scaling-lemma}, it is of course possible to then translate it to a topological statement and attempt to achieve greater generality, if/when such generality is of interest.} 
\end{sloppypar}

\section{Proofs Omitted from Section~\ref{revealed-prefs}}

\subsection{Limited Attention Rationalizability}\label{app:warpla}

\begin{proof}[Proof of \cref{InfiniteLA}]
As with \cref{szpilrajn}, the ``only if'' direction is trivial, so we prove the ``if'' direction. We do so using \cref{scaling-lemma}.

\proofstep{Definition of $\mathcal{P}$}
Fixing $X$, let $\mathcal{P}$ be the set of all pairs $(X',\dataset)$ such that $X'\subseteq X$ and $\dataset$ is a full dataset over $X'$ that satisfies WARP-LA. A \emph{solution} for a pair~$(X',\dataset)\in\mathcal{P}$ is a pair of strict preference order over $X'$ and attention filter that rationalize~$\dataset$.

\proofstep{Well describability}
We define a variable $\gt{a}{b}$ for every pair of distinct $a,b\in X$ and a variable $\attention{S}{T}$ for each pair of finite sets $S,T$ s.t.\ $\emptyset\ne T\subseteq S\subseteq X$.
In what follows, for each $(X',\dataset)\in\mathcal{P}$ we define a set $\formulae{(X',\dataset)}$ of formulae over these variables so that models of $\formulae{(X',\dataset)}$ are in one-to-one correspondence with the (not-yet-proven-to-be-nonempty) set of solutions for $(X',\dataset)$.
The correspondence is obtained by endowing the variable $\gt{a}{b}$ with the semantic interpretation ``$a$ is preferred to~$b$ (when both are attention attracting),'' and the variable $\attention{S}{T}$ with the semantic interpretation ``$T$ is the set of attention-attracting elements when the menu is $S$.''
That is, it maps a model for $\formulae{(X',\dataset)}$ to the preference~$\succ$ such that for every distinct $a,b\in X'$, we have that $a\succ b$ if and only if the variable $\gt{a}{b}$ is $\mathsf{True}$ in that model and to the attention filter $\Gamma$ such that for every $S,T$ such that $\emptyset\ne T\subseteq S\subseteq X'$, we have that $\Gamma(S)=T$ if and only if the variable $\attention{S}{T}$ is $\mathsf{True}$ in that model. 
We define the set $\formulae{(X',\dataset)}$ to consist of the following formulae:
\begin{enumerate}
\item
for all distinct $(S,a)\in\dataset$, all $b\in S\setminus\{a\}$,
and all $T\subseteq S$ s.t.\ $b\in T$, the formula `$\attention{S}{T}\rightarrow\gt{a}{b}$',
requiring that the preferences and attention filter rationalize $\dataset$;
\item
for all distinct $a,b\in X'$, the formula `$\gt{a}{b}\vee\gt{b}{a}$',
requiring that the preferences be complete;
\item
for all distinct $a,b\in X'$, the formula `$\lnot(\gt{a}{b}\wedge\gt{b}{a})$', requiring that the preferences be antisymmetric;
\item
for all distinct $a,b,c\in X'$, the formula `$(\gt{a}{b}\wedge\gt{b}{c})\rightarrow\gt{a}{c}$', requiring that the preferences be transitive;
\item
for all menus $S\subseteq X'$, the (finite!) formula `$\vee_{\emptyset\ne T\subseteq S}
\attention{S}{T}$', requiring that $S$ have a set of attention-attracting elements that is a nonempty subset of $S$;
\item
for all menus $S\subseteq X'$ and all distinct $T,T'\in 2^S\setminus\{\emptyset\}$, the formula `$\attention{S}{T}\rightarrow\lnot\attention{S}{T'}$', requiring that the set of attention-attracting elements from $S$ be unique;
\item
for all menus $S\subseteq X'$, all $T\in 2^S\setminus\{\emptyset\}$, and all $x\in S\setminus T$, the formula `$\attention{S}{T}\rightarrow\attention{S\setminus\{x\}}{T}$', requiring that if $x$ is not attention attracting from $S$, then $S$ an $S\setminus\{x\}$ have the same attention-attracting set.
\end{enumerate}
By construction, $(\formulae{(X',\dataset}))_{(X',\dataset)\in\mathcal{P}}$ is a well description of~$\mathcal{P}$.

\proofstep{Finite-subset property}
Let $(\bar{X},\dataset)\in\mathcal{P}$. Let $\Phi'\subset\formulae{(\bar{X},\dataset)}$ be a finite subset.
Since $\Phi'$ is finite, it ``mentions'' only finitely many elements of~$\bar{X}$ (through variables used, whether by mentioning these elements directly or by mentioning menus that contain them); denote the set of these elements by $X'\subset\bar{X}$.
Let $\dataset'\eqdef\bigl\{(S,a)\in D~\big|~ S\subseteq X'\bigr\}$. By definition, $\Phi'\subseteq\formulae{(X',\dataset')}$. Furthermore, $\dataset'$ is a full dataset and it satisfies WARP-LA since any sub-dataset of $\dataset$ satisfies WARP-LA. Hence, by \cref{FiniteLA}, $\dataset'$~is rationalizable by some strict preference order over $X'$ and attention filter. Therefore, $(\bar{X},\dataset) $~satisfies the finite-subset property. Thus, by \cref{scaling-lemma}, $D$~is rationalizable by a strict preference order over $\bar{X}$ and an attention filter.
\end{proof}

\subsection{Proof of Lemma~\ref{marginals}}\label{app:marginals}

\begin{proof}[Proof of \cref{marginals}]
The first condition immediately implies the second condition. We therefore assume the second condition and prove the first condition from it.
Let $T=\binom{X}{2}$, where we denote an element of $T$ as $(i_1,j_1)$ with the choice of the ordering of $i_1$ and $j_1$ consistent throughout this proof for each pair. For every distinct $(i_1,j_1),\ldots,(i_k,j_k)\in T$, let $\nu_{(i_1,j_1),\ldots,(i_k,j_k)}$ be the probability measure over $\RR^k$ defined as follows: Let $E=\{i_1,j_1,i_2,j_2,\ldots,i_k,j_k\}$ and let $n=|E|\le 2k$. We define a probability measure $\mu_E$ over the finite set $E!$ of all permutations of $E$ by assigning probability $p_{(e_1,\ldots,e_m)}$ to the ordering $(e_1,\ldots,e_m)$. By the second condition of the \lcnamecref{marginals}, this is indeed a probability measure. We define $\nu_{(i_1,j_1),\ldots,(i_k,j_k)}$ as follows: to draw $r_1,\ldots,r_k\in\RR^k$ according to $\nu_{(i_1,j_1),\ldots,(i_k,j_k)}$, first draw a permutation in $E!$ according to $\mu_E$, and then for every $\ell\in\{1,\ldots,k\}$, set $r_\ell=1$ if $i_\ell$ precedes $j_\ell$ according to this permutation, and $r_\ell=0$ otherwise. Notice that for every measurable $F_1,\ldots,F_k\subseteq\RR$, both of the following hold:
\begin{itemize}
\item
For every permutation $\pi$ of $\{1,\ldots,k\}$, we have, immediately by definition, that $\nu_{\pi((i_1,j_1)),\ldots,\pi((i_k,j_k))}(F_{\pi(1)},\ldots,F_{\pi(k)})=\nu_{(i_1,j_1),\ldots,(i_k,j_k)}(F_1,\ldots,F_k)$.
\item
\begin{sloppypar}
For every $n\in\NN$ and $(i_{k+1},j_{k+1}),\ldots,(i_{k+n},j_{k+n})\in T$ distinct from one another and from $(i_1,j_1),\ldots,(i_k,j_k)$, it is the case that $\nu_{(i_1,j_1),\ldots,(i_n,j_n)}(F_1,\ldots,F_k,\RR,\ldots,\RR)=\nu_{(i_1,j_1),\ldots,\ldots,(i_k,j_k)}(F_1,\ldots,F_k)$, by the second condition of the \lcnamecref{marginals}.
\end{sloppypar}
\end{itemize}
By these two conditions and by the Kolmogorov Extension Theorem, there exists a probability measure $\nu$ over $\RR^T$ with the product $\sigma$-algebra whose marginals are the above-defined $\nu_{(\cdot)}$ measures.

To define the required probability measure $\mu$, consider the following embedding into $\RR^T$ of the space of total orders $\pi$ over $X$: map a total order $\pi$ over $X$ to $(r_t)_{t\in T}$, where for every $(i,j)\in T$  we set $r_{(i,j)}=1$ if $i$ precedes $j$ according to $\pi$, and $r_{(i,j)}=0$ otherwise. We note that this embedding is an isomorphism of measurable spaces (i.e., a measurable bijection whose inverse is also measurable) of its domain (w.r.t.\ the $\sigma$-algebra generated by all of its subsets of the form $\{\pi \mid a_1\succ_{\pi}\cdots\succ_{\pi} a_m\}$ where the $a_i$s are distinct elements in $X$) and its image (w.r.t.\ the product $\sigma$-algebra), and therefore via this embedding the measure $\nu$ induces a measure $\mu$ over the space of total orders over ($X$ w.r.t.\ to the above-defined $\sigma$-algebra). We note that the complement of the image of this embedding, inside $\RR^T$, has measure $0$ w.r.t.\ $\nu$ by the above construction of the marginals of $\nu$, and so $\mu$ is a probability measure. By the definition of the marginals of $\nu$ via the $\mu_E$ measures defined above, the probability measure $\mu$ satisfies the first condition of the \lcnamecref{marginals}, as required.
\end{proof}

\section{Proof Omitted from Section~\ref{matching}}

\subsection{Stable Matchings with Couples}\label{app:couples}

\begin{proof}[Proof of \cref{couples-infinite}]
We prove the theorem using \cref{scaling-lemma}.

\proofstep{Definition of $\mathcal{P}$}
Let $\couplesmarket$ be a matching-with-couples market. Assume without loss of generality that the preference lists of couples are such that no couple ranks any $(h,h')\in H\times H$ below $(h,\emptyset)$ or $(\emptyset,h')$.
Let $\mathcal{P}\eqdef\bigl\{(D',H')~\big|~D'\subseteq D \And H'\subseteq H \And \forall (d_1,d_2)\in D^2: (d_1\in D' \And d_2\in D') \text{~or~} (d_1\notin D' \And d_2\notin D')\bigr\}$ be the set of all pairs of subsets of doctors that do not partially intersect any couple and subsets of hospitals. A \emph{solution} for $(D',H')\in\mathcal{P}$ is a perturbed-feasible stable matching between $D'$ and $H'$, i.e., a matching between $D'$ and $H'$ that is stable (in the induced submarket of $\couplesmarket$) with respect to a capacity vector $k^*$ with $|k_h-k^*_h|\le2$ for every $h\in H'$.

\proofstep{Well describability}
We define a variable $\matched{d}{h}$ for every $(d,h)\in D\times H$, a variable $\matched{d}{\emptyset}$ for every doctor $d\in D$ from a couple, and a variable $\quota{h}{q}$ for every hospital $h\in H$ and capacity $q\in\{k_h-2,\ldots,k_h+2\}$.
In what follows, for each $(D',H')\in\mathcal{P}$ we define a set $\formulae{(D',H')}$ of formulae over these variables so that models of $\formulae{(D',H')}$ are in one-to-one correspondence with the (not-yet-proven-to-be-nonempty) set of perturbed-feasible stable matchings between $D'$ and $H'$.
The correspondence is obtained by endowing the variable $\matched{d}{h}$ with the semantic interpretation ``$d$ and $h$ are matched,'' the variable $\matched{d}{\emptyset}$ with the semantic interpretation ``$d$ is unmatched while $d$'s partner is matched,''
and the variable $\quota{h}{q}$ with the semantic interpretation ``the perturbed capacity of $h$ is $q$.'' That is, it maps a model for $\formulae{(D',H')}$ to the matching such that for every $(d,h)\in D'\times H'$, we have that $d$ and $h$ are matched if and only if the variable $\matched{d}{h}$ is $\mathsf{True}$ in that model, which would be stable with respect to the capacity vector $k^*$ where $k^*_h=q$ if and only if the variable $\quota{h}{q}$ is $\mathsf{True}$ in that model. (The variable $\matched{d}{\emptyset}$ is a convenience variable whose value can be inferred from that of the $\matched{d}{h}$ variables. Its purpose is to enhance the readability of our formulae.)
We define the set $\formulae{(D',H')}$ to consist of the following formulae:
\begin{itemize}
\item
Allowed perturbed capacities:
\begin{enumerate}
\item
for all $h\in H'$, the formula `$\bigvee_{q=k_h-2}^{k_h+2}\quota{h}{q}$',
requiring that $h$ have a perturbed capacity within the range $k_h\pm2$;
\item
for all $h\in H'$ and all distinct $q,q'$ in $\{k_h-2,\ldots,k_h+2\}$, the formula
`$\quota{h}{q}\rightarrow\lnot\quota{h}{q'}$',
requiring that $h$ have no more than one perturbed capacity;
\end{enumerate}
\item
Matching respecting perturbed capacities:
\begin{enumerate}[resume]
\item
for all $d\in D'$ and all distinct $h,h'\in H'$, the formula `$\matched{d}{h}\rightarrow\lnot\matched{d}{h'}$',
requiring that $d$ be matched to at most one hospital;
\item
for all $h\in H'$, all $q\in\{k_h-2,\ldots,k_h+2\}$, and all distinct $d,d_1,d_2,\ldots,d_q\in D'$, the formula
`$\bigl(\quota{h}{q}\wedge\bigwedge_{i=1}^q\matched{d_i}{h}\bigr)\rightarrow\lnot\matched{d}{h}$',
requiring that $h$ not be matched to more doctors than its perturbed capacity;
\end{enumerate}
\item
Individual rationality:
\begin{enumerate}[resume]
\item
for all $d\in D'$ and all $h\in H'$ such that one or more of the following holds:
\begin{enumerate}
\item $h$ does not rank $d$,
\item $d$ is not in any couple and does not rank $h$, or
\item $d$ is in a couple, and no pair of hospitals in this couple's preference list has $h$ matched to~$d$,
\end{enumerate}
the formula
`$\lnot\matched{d}{h}$',
requiring that no doctor or hospital is matched in a way that they individually find unacceptable;
\item
for all couples $c$ with members in $D'$ and all distinct $h,h'\in H'$ such that $c$ does not rank $(h,h')$, the formula
`$\lnot(\matched{c_1}{h}\wedge\matched{c_2}{h'})$',
requiring that $c$ not be matched to a pair of hospitals that they find unacceptable;
\stepcounter{enumi}
\item[\theenumi{}a.]
\begin{sloppypar}
for all couples $c$ with members in $D'$ and all $h\in H'$ such that $c$ ranks $(h,\emptyset)$, letting $h_1,\ldots,h_n$ be the hospitals in $H'$ such that $c$ ranks every $(h,h_i)$, the formula
`$\matched{c_1}{h}\rightarrow\bigl(\matched{c_2}{\emptyset}\leftrightarrow\lnot\bigvee_{i=1}^n\matched{c_2}{h_i}\bigr)$',
effectively setting (for convenience), when $c_1$ is matched with $h$, $\matched{c_2}{\emptyset}$ as a shorthand for $c_2$ not being matched to any of these $h_i$s;
\end{sloppypar}
\color{gray}
\item[\theenumi{}b.]
Completely symmetrically, for all couples $c$ with members in $D'$ and all $h\in H'$ such that $c$ ranks $(\emptyset,h)$, letting $h_1,\ldots,h_n$ be the hospitals in $H'$ such that $c$ ranks every $(h_i,h)$, the formula
`$\matched{c_2}{h}\rightarrow\bigl(\matched{c_1}{\emptyset}\leftrightarrow\lnot\bigvee_{i=1}^n\matched{c_1}{h_i}\bigr)$',
effectively setting (for convenience), when $c_2$ is matched with $h$, $\matched{c_1}{\emptyset}$ as a shorthand for $c_1$ not being matched to any of these $h_i$s;
\color{black}
\end{enumerate}
\item
Not blocked with respect to perturbed capacities:
\begin{enumerate}[resume]
\item
for all single doctors in $D'$ and all $h\in H'$ such that each ranks the other, and for all $q\in\{k_h-2,\ldots,k_h+2\}$, letting $h_1,\ldots,h_l$ be all the hospitals in $H'$ that $d$ prefers to $h$ and letting $D_1,\ldots,D_n$ be all the $q$-tuples of doctors in $D'$ that $h$ ranks above $d$, the (finite!) formula
\begin{multline*}
(\quota{h}{q}\wedge\lnot\matched{d}{h})\rightarrow\\
\rightarrow\left(\bigvee_{i=1}^l\matched{d}{h_i}\vee\bigvee_{i=1}^n\bigwedge_{d'\in D_i}\matched{d'}{h}\right),
\end{multline*}
requiring that $(d,h)$ is not a blocking pair;
\item
for all couples $c$ with members in $D'$ and all distinct $h,h'\in H'$ such that $c$ ranks $(h,h')$ and such that $h$ ranks $c_1$ and $h'$ ranks $c_2$, and for all $q\in\{k_h-2,\ldots,k_h+2\}$ and $q'\in\{k_{h'}-2,\ldots,k_{h'}+2\}$, letting $(h_1,h'_1)\ldots,(h_l,h'_l)$ be all the assignments to pairs of hospitals in $H'$ that $c$ prefers to $(h,h')$, letting $D_1,\ldots,D_n$ be all the $q$-tuples of doctors in $D'$ that $h$ ranks above $c_1$, and letting $D'_1,\ldots,D'_{n'}$ be all the $q'$-tuples of doctors in $D'$ that $h'$ ranks above $c_2$, the (finite) formula
\begin{multline*}
(\quota{h}{q}\wedge\quota{h'}{q'}\wedge\lnot(\matched{c_1}{h}\wedge\matched{c_2}{h'}))\rightarrow\\*\rightarrow\Biggl(\bigvee_{i=1}^l(\matched{c_1}{h_i}\wedge\matched{c_2}{h'_i})\vee\\*\vee\bigvee_{i=1}^n\bigwedge_{d\in D_i}\matched{d}{h}\vee\bigvee_{i=1}^{n'}\bigwedge_{d'\in D'_i}\matched{d'}{h'}\Biggr),
\end{multline*}
requiring that $c$ not block with $(h,h')$;
\stepcounter{enumi}
\item[\theenumi{}a.]
\begin{sloppypar}
for all couples $c$ with members in $D'$ and all $h\in H'$ such that $c$ ranks $(h,\emptyset)$ and such that $h$ ranks $c_1$, and for all $q\in\{k_h-2,\ldots,k_h+2\}$, letting $(h_1,h'_1)\ldots,(h_l,h'_l)$ be all the assignments to pairs of hospitals in $H'$ that $c$ prefers to $(h,\emptyset)$ and letting $D_1,\ldots,D_n$ be all the $q$-tuples of doctors in $D'$ that $h$ ranks above $c_1$, the (finite) formula
\begin{multline*}
(\quota{h}{q}\wedge\lnot(\matched{c_1}{h}\wedge\matched{c_2}{\emptyset}))\rightarrow\\*\rightarrow\left(\bigvee_{i=1}^l(\matched{c_1}{h_i}\wedge\matched{c_2}{h'_i})\vee\bigvee_{i=1}^n\bigwedge_{d\in D_i}\matched{d}{h}\right),
\end{multline*}
requiring that $c$ not block with $(h,\emptyset)$;
\end{sloppypar}
\color{gray}
\item[\theenumi{}b.]
Completely symmetrically, for all couples $c$ with members in $D'$ and all $h\in H'$ such that $c$ ranks $(\emptyset,h)$ and such that $h$ ranks $c_2$, and for all $q\in\{k_h-2,\ldots,k_h+2\}$, letting $(h_1,h'_1)\ldots,(h_l,h'_l)$ be all the assignments to pairs of hospitals in $H'$ that $c$ prefers to $(\emptyset,h)$ and letting $D_1,\ldots,D_n$ be all the $q$-tuples of doctors in $D'$ that $h$ ranks above $c_2$, the (finite) formula
\begin{multline*}
(\quota{h}{q}\wedge\lnot(\matched{c_1}{\emptyset}\wedge\matched{c_2}{h}))\rightarrow\\*\rightarrow\left(\bigvee_{i=1}^l(\matched{c_1}{h_i}\wedge\matched{c_2}{h'_i})\vee\bigvee_{i=1}^n\bigwedge_{d\in D_i}\matched{d}{h}\right),
\end{multline*}
requiring that $c$ not block with $(\emptyset,h)$;
\color{black}
\item
for all couples $c$ with members in $D'$ and all $h\in H'$ such that $c$ ranks $(h,h)$ and such that $h$ ranks both $c_1$ and $c_2$, and for all $q\in\{k_h-2,\ldots,k_h+2\}$, letting $(h_1,h'_1)\ldots,(h_l,h'_l)$ be all the assignments to pairs of hospitals in $H'$ that $c$ prefers to $(h,h)$ and letting $D_1,\ldots,D_n$ be all the $(q\!-\!1)$-tuples of doctors in $D'$ that include neither $c_1$ nor $c_2$ and that $h$ ranks above one of $c_1$ or $c_2$, the (finite) formula
\begin{multline*}
(\quota{h}{q}\wedge\lnot(\matched{c_1}{h}\wedge\matched{c_2}{h}))\rightarrow\\*\rightarrow\left(\bigvee_{i=1}^l(\matched{c_1}{h_i}\wedge\matched{c_2}{h'_i})\vee\bigvee_{i=1}^n\bigwedge_{d\in D_i}\matched{d}{h}\right),
\end{multline*}
requiring that $c$ not block with (matching both doctors in $c$ to) $h$.
\end{enumerate}
\end{itemize}

By construction, the models of $\formulae{(D',H')}$ are in one-to-one correspondence with perturbed-feasible stable matchings between $D'$ and $H'$. (The only subtle part is noting that if for some couple $c$ and for some hospital $h$ the couple $c$ does not rank $(h,\emptyset)$ then since the preferences of $c$ are downward closed, this means that no $(h,h')$ are ranked by $c$ and so we have added the formula $\lnot\matched{c_1}{h}$ so indeed it is impossible that $c$ be matched with $(h,\emptyset)$, and symmetrically for when $c$ does not rank $(\emptyset,h)$.) 	Therefore, $(\formulae{(D',H')})_{(D',H')\in\mathcal{P}}$ is a well description of~$\mathcal{P}$.

\proofstep{Finite-subset property}
Let $\Phi'\subset\formulae{(D,H)}$ be a finite subset.
Since $\Phi'$ is finite, it ``mentions'' (through variables used) only finitely many doctors and hospitals; denote the set of these doctors and all the doctors couples with any of them by $D'\subset D$ and the set of these hospitals by $H'\subset H$.
By definition, $\Phi'\subseteq\formulae{(D',H')}$. By \cref{couples-finite}, there exists a perturbed-feasible stable matching between $D'$ and $H'$. Therefore, $(D,H)$~satisfies the finite-subset property. Thus, by \cref{scaling-lemma}, there exists a perturbed-feasible stable matching between $d$ and~$H$, as required.
\end{proof}

\renewcommand\refname{References for Appendices}
\small
\singlespacing
\addcontentsline{toc}{section}{References for Appendices}
\putbib
\end{bibunit}

\end{document}